\tikzstyle{block} = [rectangle, draw, text centered, rounded corners, minimum height=2em,fill=black!5!]
\newtheorem{defn}{Definition}
\newtheorem{thm}{Theorem}
\newtheorem{lemma}{Lemma}
\newtheorem{corollary}{Corollary}
\newtheorem{construction}{Construction}
\newtheorem{example}{Example}
\newtheorem{remark}{Remark}
\newcommand{\A}{\ensuremath{\mathrm{A}}}
\newcommand{\C}{\ensuremath{\mathrm{C}}}
\newcommand{\G}{\ensuremath{\mathrm{G}}}
\newcommand{\T}{\ensuremath{\mathrm{T}}}
\newcommand{\ve}[1]{\ensuremath{\mathbf{#1}}}
\newcommand{\e}{\ensuremath{\mathrm{e}}}
\newcommand{\Sq}{\ensuremath{\Sigma_q}}
\newcommand{\ET}{\ensuremath{\mathbb{T}}}
\newcommand{\SE}{\ensuremath{\mathcal{E}}}
\newcommand{\SL}{\ensuremath{\mathcal{L}}}
\newcommand{\SG}{\ensuremath{\mathcal{C}}}
\newcommand{\Code}{\ensuremath{\mathscr{C}}}
\newcommand{\ins}{\ensuremath{\mathbb{I}}}
\newcommand{\del}{\ensuremath{\mathbb{D}}}
\newcommand{\sub}{\ensuremath{\mathbb{S}}}
\newcommand{\ff}[2]{\ensuremath{#1^{\underline{#2}}}}
\newcommand{\avg}[1]{\ensuremath{\mathbb{E}[#1]}}
\newcolumntype{C}[1]{>{\centering\let\newline\\\arraybackslash\hspace{0pt}}m{#1}}
\begin{document}
\title{Coding over Sets for DNA Storage}

\author{Andreas~Lenz,~
        Paul~H.~Siegel,~
        Antonia Wachter-Zeh,~
        and~Eitan~Yaakobi~%
        \thanks{This paper was presented in part at the 2018 International Symposium on Information Theory \cite{LSWY18}, at the 2019 Information Theory and Applications Workshop, and the 2019 Non-Volatile Memories Workshop.}
\thanks{A. Lenz is with the Institute for Communications Engineering, Technische Universit\"at M\"unchen, Munich 80333, Germany (e-mail: andreas.lenz@mytum.de).}%
\thanks{P.  H.  Siegel  is  with  the  Electrical  and  Computer  Engineering  Department and the Center for Memory and Recording Research, University of California, San Diego, La Jolla, CA 92093-0407 USA (e-mail:	psiegel@ucsd.edu).}%
\thanks{A. Wachter-Zeh is with the Institute for Communications Engineering, Technische Universit\"at M\"unchen, Munich 80333, Germany  (e-mail: antonia.wachter-zeh@tum.de).}
\thanks{E. Yaakobi is with the Computer Science Department, Technion -- Israel Institute of	Technology,	Haifa 32000, Israel	(e-mail: yaakobi@cs.technion.ac.il).}%
\thanks{E. Yaakobi was supported by the Center for Memory and Recording Research, University of California San Diego. This project has received funding from the European Research Council (ERC) under the European Union’s Horizon 2020 research and innovation programme (grant agreement No 801434). This work was also supported by NSF Grant CCF-BSF-1619053 and by the United States-Israel BSF grant 2015816.}%
}

\maketitle

\begin{abstract}
	In this paper we study error-correcting codes for the storage of data in synthetic deoxyribonucleic acid (DNA). We investigate a storage model where a data set is represented by an unordered set of $M$ sequences, each of length~$L$. Errors within that model are a loss of whole sequences and point errors inside the sequences, such as insertions, deletions and substitutions. We derive Gilbert-Varshamov lower bounds and sphere packing upper bounds on achievable cardinalities of error-correcting codes within this storage model. We further propose explicit code constructions than can correct errors in such a storage system that can be encoded and decoded efficiently. Comparing the sizes of these codes to the upper bounds, we show that many of the constructions are close to optimal.
\end{abstract}\vspace{-.5ex}
\begin{IEEEkeywords}
coding over sets, DNA data storage, Gilbert-Varshamov bound, insertion and deletion errors, sphere packing bound
\end{IEEEkeywords}

\IEEEpeerreviewmaketitle

\section{Introduction}
\vspace{0ex}
DNA-based storage has attracted significant attention due to recent demonstrations of the viability of storing information in macromolecules. This recent increased interest was paved by substantial progress in synthesis and sequencing technologies.
The main advantages of DNA-based storage over classical storage technologies are very high data densities and long-term reliability without electrical supply. Given the trends in cost decreases of DNA synthesis and sequencing, it is now acknowledged that within the next $10$--$15$ years DNA storage may become a highly competitive archiving technology \cite{Carmean2019}.

A DNA storage system consists of three important entities (see Fig.~\ref{fig:DNAStorage}): (1) a DNA synthesizer that produces the strands that encode the data to be stored in DNA. In order to produce strands with acceptable error rate the length of the strands is typically limited to no more than 250 nucleotides (cf. \cite{Carmean2019} and also see Table \ref{tab:MLbeta} for an overview over current experiments); (2) a storage container with compartments that store the DNA strands, although in an unordered manner; (3) a DNA sequencer that reads the strands and transfers them back to digital data. The encoding and decoding stages are external processes to the storage system which convert the binary user data into strands of DNA in a way that even in the presence of errors, 
it is possible to reconstruct the original data. 

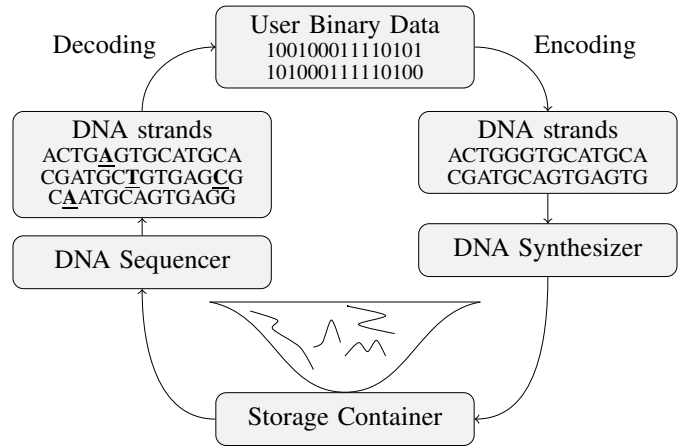
\begin{figure}%
	\centering
	\newcommand{\scale}{0.6}
\begin{tikzpicture}[node distance = 3cm, auto, text width = 3.2cm, line width=0.125mm]
\node [block, align=center]  (data) {{User Binary Data\\\footnotesize100100011110101\\[-1ex]\footnotesize101000111110100}};

\node [block, align=center, below of= data, node distance = 4.95cm] (con) {Storage Container};
\node [above= 0.0cm of con] (conpic) {};

\node [block, align=center, below right=0.3cm and -0.75cm of data, node distance = 2.5cm] (strands) {{DNA strands\\\footnotesize ACTGGGTGCATGCA \\[-1ex] \footnotesize CGATGCAGTGAGTG} };

\node [block, align=center, below of= strands, node distance = 1.3cm] (synth) {DNA Synthesizer};

\node [block, align=center, below left=0.3cm and -0.75cm of data, node distance = 2.5cm] (strands2) {{DNA strands\\
	\footnotesize ACTG\underline{\bf A}GTGCATGCA \\[-1ex]
	\footnotesize CGATGC\underline{\bf T}GTGAG\underline{\bf C}G}\\[-1ex]
	\footnotesize C\underline{\bf A}ATGCAGTGAGG };

\node [block, align=center, below of= strands2, node distance = 1.3cm] (seq) {DNA Sequencer};

\draw [->] (data.east) to [out=0,in=90] (strands.north) ;
\draw [->] (strands.south) to (synth.north);
\draw [->] (synth.south) to [out=270,in=0] (con.east);
\draw [->] (con.west) to [out=180,in=270] (seq.south);
\draw [->] (seq.north) to (strands2.south);	
\draw [->] (strands2.north) to [out=90,in=180] (data.west);

\node[] at (strands |- data) {\hspace{1.3cm} Encoding};
\node[] at (strands2 |- data) {\hspace{0.3cm} Decoding};
\draw ($(conpic.south) + (-3*\scale,2*\scale)$) to[out=350,in=180] (conpic.south) to[out=0,in=190] ($(conpic.south) + (3*\scale,2*\scale)$);
\draw ($(conpic.south) + (-3*\scale,2*\scale)$) --($(conpic.south) + (3*\scale,2*\scale)$);

\draw plot [smooth] coordinates {($(conpic.south) + (-0.7*\scale,1*\scale)$) ($(conpic.south) + (-.5*\scale,1.1*\scale)$) ($(conpic.south) + (-.3*\scale,1.6*\scale)$) ($(conpic.south) + (-.1*\scale,1.1*\scale)$)};

\draw plot [smooth] coordinates {($(conpic.south) + (0*\scale,0.8*\scale)$) ($(conpic.south) + (0.3*\scale,1.1*\scale)$) ($(conpic.south) + (0.5*\scale,0.9*\scale)$) ($(conpic.south) + (0.7*\scale,1.15*\scale)$) ($(conpic.south) + (0.9*\scale,0.8*\scale)$)};

\draw plot [smooth] coordinates {($(conpic.south) + (1.1*\scale,1.3*\scale)$)($(conpic.south) + (0.3*\scale,1.5*\scale)$) ($(conpic.south) + (0.9*\scale,1.7*\scale)$) ($(conpic.south) + (0.1*\scale,1.9*\scale)$) ($(conpic.south) + (0.1*\scale,1.9*\scale)$)};

\draw plot [smooth] coordinates {($(conpic.south) + (-2.1*\scale,1.8*\scale)$) ($(conpic.south) + (-1.5*\scale,1.6*\scale)$) ($(conpic.south) + (-1.6*\scale,1.4*\scale)$) ($(conpic.south) + (-1*\scale,1*\scale)$) ($(conpic.south) + (-0.7*\scale,0.5*\scale)$)};
\end{tikzpicture}
	\caption[DNA Storage system]
	{Illustration of a DNA-based storage system.}\vspace{-3.5ex}
	\label{fig:DNAStorage}%
\end{figure}

DNA as a storage system has several attributes which distinguish it from any other storage system. The most prominent one is that the strands are not ordered in the memory and thus it is not possible to know the order in which they were stored. One way to address this problem is using block addresses, also called indices, that are stored as part of the strand.
Errors in DNA are typically substitutions, insertions, and deletions, where most published studies report that either substitutions or deletions are the most prominent ones, depending upon the specific technology for synthesis and sequencing~\cite{BGHCTIPC16,EZ17,KC14,Oetal17,RRCHLHNJ13,YYLWLCKC14}. For example, in column-based DNA oligo synthesis the dominant errors are deletions that result from either failure to remove the dimethoxytrityl (DMT) or combined inefficiencies in the coupling and capping steps~\cite{KC14}. While codes correcting substitution errors were widely studied, much less is known for codes correcting deletions and insertions. The task of error correction becomes even more challenging taking into account the lack of ordering of the strands. 

{\bf Related work:} For a general survey about DNA-based data storage, the reader is referred to \cite{YKGMZM15}. The first large scale experiments that demonstrated the potential of \emph{in vitro} DNA storage were reported by Church et al. who recovered 643 KB of data~\cite{CGK12} and Goldman et al. who accomplished the same task for a 739 KB message~\cite{GBCDLSB13}. However both of these groups did not recover the entire message successfully due to the lack of using the appropriate coding solutions to correct errors. Church et al. had 10 bit errors and Goldman et al. lost two strands of 25 nucleotides.
Later, in~\cite{GHPPS15}, Grass et al. reported the first system with usage of error-correcting codes in DNA-based storage and managed to perfectly recover an 81 KB message. Bornholt et al. similarly retrieved a 42 KB message~\cite{BLCCSS16}. 
Since then, several groups have built similar systems, storing ever larger amounts of data.  Among these, Erlich and Zielinski~\cite{EZ17}  stored 2.11MB of data with high storage rate,  Blawat et al.~\cite{BGHCTIPC16} successfully stored 22MB, and more recently Organick et al.~\cite{Oetal17} stored 200MB.
Yazdi et al.~\cite{YTMZM15,YGM17} developed a method that offers both random access and rewritable storage. On the other hand, coding theoretic aspects of DNA storage systems have received significant attention recently. The work of \cite{KPM16} discusses error-correcting codes for the DNA sequencing channel, where a possibly erroneous collection of substrings of the original sequence is obtained. In \cite{KT18}, unordered multisets with errors that affect the whole sequence have been discussed. Furthermore, the model proposed in this work has already been adopted in~\cite{SRB18,SC2018}. Namely, codes and bounds for an arbitrary number of substitutions in sets of DNA strands have been derived in \cite{SRB18} and it has been shown that it is possible to construct codes, which have logarithmic redundancy on both, the number of sequences and the length of the sequences. In \cite{SC2018}, a distance measure for the DNA storage channel has been discussed and Singleton-like and Plotkin-like code size upper bounds have been derived. In contrast, the goal of this work is to study and to design error-correcting codes which are specifically targeted towards the special structure of DNA storage systems. This goal is accomplished by deriving upper and lower bounds on the achievable size of error-correcting codes and designing constructions over sets that are suitable for data storage in DNA. Errors within this model are a loss of sequences and point errors inside the sequences, such as insertions, deletions, and substitutions. Parts of this work have been published in \cite{LSWY18}, at the 2019 Information Theory and Applications Workshop, and at the 2019 Non-Volatile Memories Workshop.

The paper is organized as follows. We start by introducing the DNA storage channel model and associated notation. In Sections \ref{sec:gv:bounds} and \ref{sphere:packing:bounds} we derive generalized Gilbert-Varshamov bounds and sphere packing bounds for the DNA storage channel, which bound the cardinality of optimal error-correcting codes, i.e., codes of largest possible cardinality from below and above. Then, in Section \ref{sec:const}, we propose code constructions that can correct errors from the DNA storage channel. Lastly, Section~\ref{sec:concl} concludes the paper.
\section{Channel Model} \label{sec:channel:model}
\subsection{Notation}
We start by introducing the notation that will be used throughout the paper. For any sets $\mathcal{A}, \mathcal{B}$ we write $|\mathcal{A}|$ as the cardinality of $\mathcal{A}$ and $\mathcal{A}\setminus \mathcal{B} = \{ x: x\in \mathcal{A} \land x \notin \mathcal{B} \}$ as the set difference. We denote by $\mathbb{N}$ and $\mathbb{N}_0$ the sets of natural numbers, where the former consists of the numbers $\{1,2,3,\dots\}$ and the latter additionally contains $0$. The set $[n] = \{1,2,\dots,n\}$ contains all natural numbers up to $n \in \mathbb{N}$. $\Sq$ is a finite alphabet with $q$ elements. In particular, we will write $\Sigma_2= \{0,1\}$ for binary sequences and $\Sigma_4 = \{\A,\C,\G,\T\}$ for DNA sequences. A vector of $n$ elements $x_i \in \Sq$ over an alphabet $\Sq$ is denoted by $\ve{x} = (x_1,x_2, \dots, x_n) \in \Sq^n$. Its first, respectively last $m$ elements are denoted by $\mathrm{pref}_m(\ve{x})$ and $\mathrm{suff}_m(\ve{x})$. The number of runs in $\ve{x} \in \Sq^n$, is denoted as $||\ve{x}|| \triangleq |\{i : x_{i}\neq x_{i+1}\}|+1$. For two vectors $\ve{x} \in \Sq^n, \ve{y} \in \Sq^m$ we write $(\ve{x},\ve{y})$ as the concatenation of $\ve{x}$ and $\ve{y}$ which has length $n+m$. Throughout the paper, we denote the binary logarithm of a real number $a\in \mathbb{R}^+$ by $\log(a)$ and the natural logarithm by $\ln(a)$. For any integers $n,m \in \mathbb{N}$, $m\leq n$ we write $n!=n\cdot(n-1)\dots2 \cdot 1$ as the factorial and $\ff{n}{m} = n(n-1)\dots(n-m+1)$ as the falling factorial. The binomial coefficient is denoted by $\binom{n}{m}$. For the asymptotic behavior of functions, we use the Bachmann-Landau notation, i.e., for $f(n), g(n) : \mathbb{N} \mapsto \mathbb{R}$, we write
\begin{itemize}
	\item $f(n) = o(g(n))$, if $\lim\limits_{n\rightarrow \infty} \frac{f(n)}{g(n)} = 0$,
	\item $f(n) = \omega(g(n))$, if $\lim\limits_{n\rightarrow \infty} \left|\frac{f(n)}{g(n)}\right| = \infty$,
	\item $f(n) = O(g(n))$, if $\limsup\limits_{n\rightarrow \infty} \left|\frac{f(n)}{g(n)}\right| < \infty$,
	\item $f(n) \sim g(n)$, if  $\lim\limits_{n\rightarrow \infty} \frac{f(n)}{g(n)} = 1$, and
	\item $f(n) \gtrsim g(n)$, if  $\lim\limits_{n\rightarrow \infty} \frac{f(n)}{g(n)} \geq 1$.
\end{itemize}%
\subsection{DNA Channel Model}
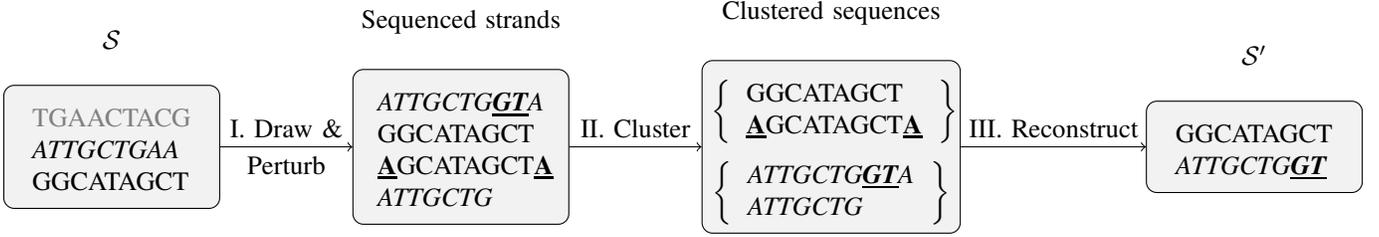
\begin{figure*}[]
	
	\centering
	\begin{tikzpicture}[node distance=1em,auto]
	% Place nodes

	\node[block, text width =7.5em] (S) { \vspace{-.1cm}
		$$\begin{tabular}{l}
		\textcolor{gray}{TGAACTACG} \\ \textit{ATTGCTGAA} \\ {GGCATAGCT}
		\end{tabular}$$%
	};
	\node[above= of S] (lS) {$\mathcal{S}$};
	
	\node[block,right= 5em of S, text width =7.5em](R) { \vspace{-.1cm}
		$$\begin{tabular}{l}
		\textit{ATTGCTG\underline{\textbf{GT}}A} \\ {GGCATAGCT} \\ {\underline{\textbf{A}}GCATAGCT\underline{\textbf{A}}} \\ \textit{ATTGCTG}
		\end{tabular}$$%
	};
	\node[above= of R] (lR) {Sequenced strands};
	
	\node[block,right= 5em of R, text width =9.1em](C) { \vspace{-.1cm}
		$$\left\{\begin{tabular}{l}
		{GGCATAGCT} \\ {\underline{\textbf{A}}GCATAGCT\underline{\textbf{A}}}
		\end{tabular}\right\}$$%
		$$\left\{\begin{tabular}{l}
		\textit{ATTGCTG\underline{\textbf{GT}}A} \\ \textit{ATTGCTG}
		\end{tabular}\right\}$$%	
	};
	\node[above= of C] (lC) {Clustered sequences};
	
	\node[block,right= 7em of C, text width =7.5em](Sp) {\vspace{-.1cm}
		$$\begin{tabular}{l}
		{GGCATAGCT} \\ \textit{ATTGCTG\underline{\textbf{GT}}}
		\end{tabular}$$%
	};
	\node[above= of Sp] (lSp) {$\mathcal{S}'$};
	
	\draw [->] (S) -- (R) node[midway,above] {I. Draw \&} node[midway, below] {Perturb};
	\draw [->] (R) -- (C) node[midway,above] {II. Cluster};
	\draw [->] (C) -- (Sp) node[midway,above] {III. Reconstruct};
	
	\end{tikzpicture}
	
	\caption{DNA storage channel model. Sequences with the same text decoration stem from the same original sequence.}
	\label{fig:channel:model}
	
\end{figure*}
We consider the DNA storage channel, which is depicted in Fig.~\ref{fig:channel:model}. In a DNA-based data storage system, data is stored in an unordered \emph{set}
\begin{equation*}
\mathcal{S} = \{ \mathbf{x}_1, \mathbf{x}_2, \dots, \mathbf{x}_M \} \subseteq \Sigma_q^L,
\end{equation*}
with $M$ distinct \textit{sequences} $\mathbf{x}_i \in \Sigma_q^L$, i.e. $\ve{x}_i \neq \ve{x}_j$ for $i\neq j$. Each sequence $\ve{x}_i$ has length $L$. Here and in the rest of the paper whenever we write the set $\mathcal{S}$ we assume it is a set of $M$ sequences as defined above. Throughout the paper, we will refer to the $\ve{x}_i$ by \textit{sequences} or \textit{strands} and to $\mathcal{S}$ by data \textit{sets} or \textit{words}. Representing data words as unordered sets is inherently natural, due to the following two reasons. First, any information about ordering of the data sequences is lost during the storage and second, in the reading process it is not possible to distinguish exactly how many times each sequence was stored, since the sequences are multiplied in the storage medium and not necessarily all of them are read. For more details on the channel model, see~\cite{HRT17,HMG18}. 

Any such stored data set $\mathcal{S}$ of $M$ sequences is a possible input of the DNA storage channel. Hence, the input space, which comprises all possible data sets is denoted by
\begin{equation*}
\mathcal{X}_M^L = \{\mathcal{S} \subseteq \Sigma_q^L : |\mathcal{S}| = M\}.
\end{equation*}
The DNA storage channel can be split into the three following stages, as visualized in Fig. \ref{fig:channel:model}.
\begin{enumerate}[I.]
	\item Random sequences are drawn with replacement from the storage medium $\mathcal{S}$ and sequenced, possibly with substitution, insertion or deletion errors.
	\item The sequenced strands are clustered according to their Levenshtein distance. \footnote{This technique was introduced in \cite{Oetal17}, exploiting the fact that sequences are drawn several times. Other works have either clustered the sequences according to their indices (as in \cite{GHPPS15}), directly used a code on each sequence to correct insertions and deletions (as in \cite{YGM17}), or simply discarded sequences of incorrect length.}
	\item The clustered sequences are reconstructed by performing an estimate $\ve{x}'$ for each cluster, resulting in the received estimates $\mathcal{S}'$. If two or more reconstructions result in the same estimate $\ve{x}'$, we only output a single sequence $\ve{x}'$ to avoid possible duplicates of a single stored sequence. Therefore, $\mathcal{S}'$ is a set with distinct elements.
\end{enumerate}
In this work we consider the combination of the above three stages, from the stored sequences $\mathcal{S}$ to the reconstructed sequences $\mathcal{S}'$, as the DNA storage channel. Each sequence $\ve{x} \in \mathcal{S}$ is therefore either
\begin{itemize}
	\item reconstructed correctly, without errors ($\ve{x} \in \SG$),
	\item never drawn or its cluster is not identified and thus lost in the storage medium ($\ve{x} \in \SL$), or
	\item reconstructed with errors ($\ve{x} \in \SE$),
\end{itemize}
where $(\SG,\SL, \SE)$ is a partition of $\mathcal{S}$.

According to the above three cases, we thus associate the following three parameters $(s,t,\epsilon)_\ET$ that characterize the DNA storage channel. We denote by $s$ the maximum number of sequences that are never drawn (or whose clusters are not identified), by $t$ the maximum number of sequences that have been reconstructed with errors, and by $\epsilon$ the maximum number of errors of type $\ET$ in each of the latter. Typical error types $\ET$ after the reconstruction step are various combinations of insertions, deletions and substitutions, where the latter two are the most prominent ones in DNA storage systems \cite{Oetal17}.
To be more precise, we define the error balls associated with the channel model. We start with the characterization of point errors inside the sequences.
\begin{defn} \label{def:error:ball}
	The \textit{error ball} $B^\ET_{\epsilon}(\ve{x})$ of radius $\epsilon$ around a sequence $\ve{x} \in \Sigma_q^L$ is defined to be the set of all possible outcomes $\ve{x}' \in B^\ET_{\epsilon}(\ve{x})$, after $\epsilon$ (or fewer) errors of type $\ET$ in $\ve{x}$. Possible types of errors are
	\begin{itemize}
		\item Insertions ($\ins$),
		\item Deletions ($\del$),
		\item Substitutions ($\sub$),
	\end{itemize}
	or combinations of the above, denoted by, e.g., $\ins\del$ for the case of insertions and deletions. We use the abbreviation $\mathbb{L} \triangleq \ins\del\sub$ for insertions, deletions, and substitutions. Similarly, we define the error sphere $S^\ET_{\epsilon}(\ve{x})$ as the set of possible results $\ve{x}' \in S^\ET_{\epsilon}(\ve{x})$ after exactly $\epsilon$ errors of type $\ET$. For uniform error balls and spheres, where the size does not depend on the center $\ve{x} \in \Sigma_q^L$ we use the abbreviation $B^\ET_{\epsilon}(L) \triangleq |B^\ET_{\epsilon}(\ve{x})|$, respectively $S^\ET_{\epsilon}(L) \triangleq |S^\ET_{\epsilon}(\ve{x})|$. In particular we have
	\begin{itemize}
		\item $S^\ins_{\epsilon}(L) = \sum_{i=0}^{\epsilon} \binom{L+\epsilon}{i}(q-1)^i$ (c.f. \cite{Lev74}),
		\item $B^\ins_{\epsilon}(L) = \sum_{i=0}^{\epsilon} S^\ins_{i}(L)$,
		\item $S^\sub_{\epsilon}(L) = \binom{L}{\epsilon}(q-1)^\epsilon$,
		\item $B^\sub_{\epsilon}(L) = \sum_{i=0}^{\epsilon} \binom{L}{i}(q-1)^i$.
	\end{itemize}
\end{defn}
Note that for the case of deletions, such an abbreviation is not possible, since the size of the deletion ball and sphere depends on the center $\ve{x}$. The following example illustrates the definitions of error balls for different error types.
\begin{example}
	Consider the  sequence $\ve{x} = (\A\C) \in \Sigma_4^2$ of length $L=2$ and a single error, $\epsilon=1$. The substitution error ball is given by $B_1^{\sub}(\ve{x}) = \{(\A\C), (\C\C), (\G\C), (\T\C), (\A\A), (\A\G), (\A\T)\}$. Similarly, the deletion ball around $\ve{x}$ is given by $ B_1^{\del}(\ve{x}) = \{(\A\C), (\C), (\A)\}.$ The insertion sphere around the center $\ve{x}$ is $S_1^\ins(\ve{x}) = \{(\A\A\C), (\C\A\C), (\G\A\C),(\T\A\C),(\A\C\C), (\A\G\C),(\A\T\C)$, $(\A\C\A),(\A\C\G),(\A\C\T)\}$.
\end{example}
In a similar fashion it is possible to define the error ball of a data set, as the set of possible received sets after the DNA storage channel.
\begin{defn} {\label{def:dna:channel:error:ball}}
	For $\mathcal{S}\in\mathcal{X}_M^L$, the \textit{error ball} $B^\ET_{s,t,\epsilon}(\mathcal{S})$ is defined to be the set of all possible received sets $\mathcal{S}'$ after $s$ (or fewer) sequences have been lost and $t$ (or fewer) sequences of the remaining sequences have been distorted by $\epsilon$ (or fewer) errors of type $\ET \in \{\sub, \ins, \del, \mathbb{ID}, \mathbb{IS}, \mathbb{DS}, \mathbb{L}\}$ each. 
	
	More precisely, let $\mathsf{Part}_{s,t}(\mathcal{S})$ be the set of all partitions $(\SG,\SL,\SE)$ of $\mathcal{S}$ with $|\SL| \leq s$, $|\SE| \leq t$ and denote by $\SE = \{\ve{x}_{e_1}, \ve{x}_{e_2},\dots, \}$ the set of stored sequences, which are received in error. We then define $B^\ET_{s,t,\epsilon}(\mathcal{S})$ to be
	\begin{align*}
	B^\ET_{s,t,\epsilon}(\mathcal{S}) \hspace{-.04cm}=\hspace{-.04cm} \left\{ \hspace{-.08cm}\mathcal{S}' = \SG \cup \SE' \left| \hspace{-.16cm} \begin{array}{l} (\SG,\SL,\SE) \in \mathsf{Part}_{s,t}(\mathcal{S}), \\
	\SE' = \{ \ve{x}_{1}'\} \cup \dots \cup \{\ve{x}_{{|\SE|}}' \} , \\ %\bigcup_{i \in [|\SE|]} \{\ve{x}_i'\}
	\ve{x}_{i}' \in B^{\ET}_\epsilon(\ve{x}_{e_i}) \hspace{-.08cm}\setminus\hspace{-.08cm} \{\ve{x}_{e_i}\} ~ \forall ~ i \in [|\SE|]
	\end{array} \hspace{-.1cm} \right. \right\} \hspace{-.1cm}.	\end{align*}
	Hereby $\SE'$ satisfies $|\SE'|\leq |\SE|$ and denotes the set of all distinct erroneous received sequences $\ve{x}_i'$, after removing duplicates.
\end{defn}%
The erroneous sequences $\ve{x}_i'$ are not necessarily distinct from each other or from the correct sequences in $\SG$ and therefore it is possible that two erroneous sequences or one error-free and one erroneous sequence agree with one another, resulting in a loss of a sequence. The number of distinct received sequences $|\mathcal{S}'|$ therefore satisfies $M-t-s \leq |\mathcal{S}'| \leq M$.
\begin{example}
	Consider the example in Fig. \ref{fig:channel:model} for the DNA storage channel with $M=3$ stored sequences, $\ve{x}_1 = (\mathrm{TGAACTACG})$, $\ve{x}_2 = (\mathrm{ATTGCTGAA})$, and $\ve{x}_3 = (\mathrm{GGCATAGCT})$, each of length $L=9$, i.e., $\mathcal{S} = \{\ve{x}_1, \ve{x}_2, \ve{x}_3\} \in \mathcal{X}_3^9$. The sequenced strands are clustered and reconstructed, resulting in two estimates $\ve{y}_1 = (\mathrm{GGCATAGCT})$ and $\ve{y}_2 = (\mathrm{ATTGCTGGT})$. The received set is therefore $\mathcal{S}' = \{\ve{y}_1, \ve{y}_2\}$. Hereby $\ve{x}_3$ was received correctly as $\ve{y}_1$, $\ve{x}_1$ was lost and $\ve{x}_2$ was received in error as $\ve{y}_2$. It follows that the set of correct, lost and erroneous sequences is given by
	\begin{align*}
		\SG = \{\ve{x}_3\} &= \{(\mathrm{GGCATAGCT})\}, \\
		\SL = \{\ve{x}_1\} &=\{ (\mathrm{TGAACTACG}) \},\\
		\SE = \{\ve{x}_2\} &= \{(\mathrm{ATTGCTGAA})\}.
	\end{align*}
	It follows that $s=|\SL|=1$ and $t=|\SE|=1$, where there were $\epsilon=2$ substitution errors in $\ve{x}_2$. Therefore, $\mathcal{S}' \in B_{1,1,2}^\sub(\mathcal{S})$.
\end{example}
The channel from a stored set $\mathcal{S}$ to a received set $\mathcal{S}'$ is visualized in Fig. \ref{fig:simplified:channel:model}.
\begin{figure}
	
	\centering
	\begin{tikzpicture}[node distance=1em,auto]
	% Place nodes
	
	\node[block, text width=.5cm] (S) {$\mathcal{S}$};
	
	\node[circle,fill, inner sep=1.5pt,right= 4.5em of S] (circ) {};
	
	\node[block, text width=.5cm, above right= 2em and 3em of circ] (U) {$\SG$};
	\node[block, text width=.5cm, right= 2.87em of circ] (L) {$\SL$};
	\node[block, text width=.5cm, below right= 2em and 3em of circ] (F) {$\SE$};
	
	\node[block, text width=.5cm, right= 6em of F] (Fp) {$\SE'$};
	
	\node[block, text width=.5cm, right= 8em of L] (unify) {$\bigcup$};
	
	\node[block, text width=.5cm, right= 1em of unify] (Sp) {$\mathcal{S}'$};
	
	% Draw edges
	\draw [-] (S) -- (circ) node[midway,above] {Partition};
	
	\draw [->] (circ) |- (U);
	\draw [->] (circ) -- (L)  node[pos=.5,above] {$\leq s$};
	\draw [->] (circ) |- (F) node[pos=.76,above] {$\leq t$};
	
	\draw [->] (F) -- (Fp) node[midway, above] {Add $\leq \epsilon$} node[midway, below] {errors each};
	
	\draw [->] (U) -| (unify);
	\draw [->] (Fp) -| (unify);
	
	\draw [->] (unify) -- (Sp);

	\end{tikzpicture}
	\caption{Illustration of the $(s,t,\epsilon)_\ET$ channel model}
	\label{fig:simplified:channel:model}
	
\end{figure}
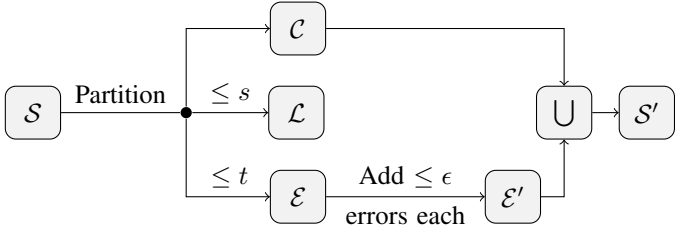
Throughout the paper, we will refer to the following definition of an error-correcting code in DNA storage systems.
\begin{defn} \label{def:correcting:code}
	A code $\Code \subseteq \mathcal{X}_M^L$ is called an \textbf{$(s,t,\epsilon)_\ET$-correcting code}, if it can correct a loss of $s$ (or fewer) sequences and $\epsilon$ (or fewer) errors of type $\ET$ in each of $t$ (or fewer) sequences, i.e., for any pair $\mathcal{S}_1, \mathcal{S}_2 \in \Code$ with $\mathcal{S}_1 \neq \mathcal{S}_2$, it holds that
	$$ B^\ET_{s,t,\epsilon}(\mathcal{S}_1) \cap B^\ET_{s,t,\epsilon}(\mathcal{S}_2) = \emptyset. $$
	We say $\Code \subseteq \mathcal{X}_M^L$ is an $(s,t,\bullet)_\ET$-correcting code if the number of errors $\epsilon$ per erroneous sequences can be arbitrarily large.
\end{defn}
Note that by this definition, a \textit{code} is a set of \textit{codewords}, where each \textit{codeword} is again a set of $M$ \textit{sequences} of length $L$. One of the main challenges associated with errors in such codewords is the loss of ordering information about the code sequences. The redundancy of a code is defined as follows.
\begin{defn} \label{def:redundancy}
	The \emph{redundancy} of a code $\Code \subseteq \mathcal{X}_M^L$ is \vspace{-0.5ex}
	$$ r(\Code) = \log|\mathcal{X}_M^L| - \log |\Code| =  \log\binom{q^L}{M} - \log |\Code|.\vspace{-0.5ex}$$
\end{defn}
We present the results in this work for binary sequences ($q=2$), however most or all of them can be extended to the non-binary case (and, in particular, the quaternary case). Our results about the redundancy of the proposed constructions and lower bounds on the redundancy are summarized in Table \ref{tab:construction}.
\subsection{Discussion of the Channel Model}

Designing and analyzing codes over sets allows to efficiently combat several important aspects of DNA-based data storage. These include the loss of the ordering information of the sequences and the loss or erroneous reception
of some of the stored sequences as described in our channel model. Especially when not all sequences are received with errors (i.e. some sequences are received correctly), it is not obvious at all, whether, e.g., prepending an index to each sequence is optimal and how the stored sequences should be protected from errors. Therefore, discussing the channel model from stored sets to received sets is of relevance when aiming for efficient and error-free data storage in DNA. Such a discussion is not possible when only the channel from a single stored sequence to a single received sequence is analyzed. 

The following remarks summarize two further observations about the channel model.
\begin{remark} \label{rem:order:L:M}
	While in practical DNA-based storage systems, the length of the sequences $L$ is moderate, e.g., in the order of a few hundreds, $M$ is significantly larger. In general, we say that $M = q^{\beta L}$ for some $0<\beta<1$. Typical values for the parameters $M,L$ and $\beta$ can be found in Table \ref{tab:MLbeta}.
\end{remark}
\begin{remark} \label{rem:channel:model}
	In view of the underlying DNA storage system, which is visualized in Fig. \ref{fig:channel:model}, the parameters $s,t,\epsilon$ of the channel model depend on the number of sequences that are drawn from the storage medium and also the reconstruction algorithm. Using an efficient reconstruction algorithm, it can be assumed that $s,t,\epsilon$ decrease as the number of draws increases, since the reconstruction can be performed more accurately. In particular, when many more than $M$ sequences are drawn from the storage medium, it can be assumed that there are enough draws per sequence that the sequencing errors are corrected by the reconstruction algorithm. Consequently there only remain errors which have been introduced when synthesizing the sequences.
\end{remark}
\subsection{Relationship of Insertion- and Deletion-Correcting Codes} \label{app:eq:ins:del}
In this section, we investigate the relationship between $(s,t,\epsilon)_\mathbb{I}$-insertion-correcting and $(s,t,\epsilon)_\mathbb{D}$-deletion-correcting codes. It is known \cite{Lev65} that for the case of standard {\nobreakdash blockcodes}, any code can correct $\epsilon$ insertions if and only if it can correct any $\epsilon$ insertions \emph{and} deletions. Surprisingly such an equivalence does \emph{not} hold for our channel model. Here we show a counterexample that an $(s,t,\epsilon)_\del$-correcting code is not necessarily an $(s,t,\epsilon)_\ins$-correcting code.
\begin{example}
	Consider the code $\Code = \{\mathcal{S}_1 , \mathcal{S}_2\}$, with $\mathcal{S}_1 = \{(\A\A\C\C\A),(\A\A\C\A\A),(\G\G\T\T\G)\}$ and $\mathcal{S}_2 = \{(\A\C\C\A\A),(\G\G\T\G\G),(\G\T\T\G\G)\}$. We can verify that $\Code$ is $(0,3,1)_\del$-correcting. It is however not $(0,3,1)_\ins$-correcting, since $\{(\A\A\C\C\A\A),(\G\G\T\T\G\G)\} \in B_{0,3,1}^{\ins}(\mathcal{S}_1)$ by editing both $(\A\A\C\C\A)$ and $(\A\A\C\A\A)$ to become $(\A\A\C\C\A\A)$ and $(\G\G\T\T\G)$ to become $(\G\G\T\T\G\G)$. Similarly, $\{(\A\A\C\C\A\A),(\G\G\T\T\G\G)\} \in B_{0,3,1}^{\ins}(\mathcal{S}_2)$, since we can edit $(\A\C\C\A\A)$ to become $(\A\A\C\C\A\A)$ and both $(\G\G\T\G\G)$ and $(\G\T\T\G\G)$ to become $(\G\G\T\T\G\G)$.
\end{example}
A counterexample for the other direction, i.e., an $(s,t,\epsilon)_\ins$-correcting code is not necessarily an $(s,t,\epsilon)_\del$-correcting code, can be found analogously.
\begin{table}
	\setlength{\tabcolsep}{4.7pt}
	\centering
	\caption{Summary of the parameters used in recent DNA storage experiments. The strand length is depicted as \emph{code} length which matches the length $L$ in our channel model.}
	{\renewcommand{\arraystretch}{2.1}
		\begin{tabular}{ccccc} \specialrule{.8pt}{0pt}{0pt}
			Work & Data Size & Strand Length $L$ & Strands $M$ & $\beta = (\log_4 M)/L$ \\ \specialrule{.8pt}{0pt}{0pt}
			\cite{CGK12} & $0.65$MB &  $115$ & $54,898$ & $0.0685$ \\
			\cite{GBCDLSB13} & $0.63$MB & $117$ & $153,335$ & $0.0736$ \\
			\cite{GHPPS15} & $0.08$MB & $117$ & $4,991$ & $0.0525$ \\
			\cite{YTMZM15} & $0.017$MB & $1000$ & $32$ & $0.0025$  \\
			\cite{BLCCSS16} & $0.15$MB & $120$ & $45,652$ & $0.0645$ \\
			\cite{BGHCTIPC16} & $22$MB & $190$ & $900,000$ & $0.0521$ \\
			\cite{EZ17} & $2.11$MB & $152$ & $72,000$ & $0.0531$ \\
			\cite{YGM17} & $0.003$MB & $1000$& $17$ & $0.0020$  \\
			\cite{Oetal17} & $200.2$MB & ${150 - 154}$ & $13,448,372$ & $0.0769-0.0789$ \\ \specialrule{.8pt}{0pt}{0pt}
	\end{tabular}}
	\label{tab:MLbeta}
\end{table}
\begin{table*}[t]
	\centering
	\caption{Lower and upper bounds on the redundancy of optimal $(s,t,\epsilon)_\ET$-correcting codes. Low order terms are omitted.}
	\setlength{\tabcolsep}{4pt}
	{\renewcommand{\arraystretch}{2.1}
		%\begin{tabular}{|l|l@{\hskip -.05cm}l|l@{\hskip -.05cm}l|l|}
		\begin{tabular}{l@{\hskip 5pt}c@{\hskip 5pt}lc@{\hskip 5pt}lcl}
			%\toprule 
			\specialrule{.8pt}{0pt}{0pt}
			Error correction & Gilbert-Varshamov bound & [Sect. \ref{sec:gv:bounds}] & Construction & [Sect. \ref{sec:const}] & Sphere packing bound & [Sect. \ref{sphere:packing:bounds}] \\ \specialrule{.7pt}{0pt}{0pt}
			\multirow{4}{*}{$(s,t,\bullet)_\mathbb{L}$}& \multirow{4}{*}{$(s+2t)L + (s+2t)\log M$} & \multirow{4}{*}{[Thm. \ref{thm:gv:s:t:bullet}]} & $M \log \e + (s+2t) (L-\lceil \log M \rceil)$ & [Const. \ref{con:index:rs}] & \multirow{4}{*}{$(s+t)L+t\log M$} & \multirow{4}{*}{[Cor. \ref{cor:bound:red:arbitrary:error}]} \\
			&&& $(s + 2 t)L$ & [Const. \ref{con:constant:weight}] &&  \\
			&&&   $\frac{(1-c)}{2}M^c\log M$ & \multirow{2}{*}{[Const. \ref{con:reduce:index}]} &&  \\
			&&& $+(s+2t) M^{1-c} \left( L-\log M\right)$ &&& \\ \hline
			$(\sigma M,\tau M, \bullet)_\mathbb{L}$& $(\sigma+2\tau)(L-\log M)$ & [Thm. \ref{thm:gv:s:t:bullet}] & $(\sigma+2\tau)M(L-\log M)$ & [Const. \ref{con:constant:weight}] & $(\sigma+\tau)M(L-\log M)$& [Cor. \ref{cor:bound:red:arbitrary:error}] \\ \hline
			$(s,t,\epsilon)_\sub$ & $sL+(s+2t)\log M +2t\epsilon \log L $ & [Thm. \ref{thm:gv:s:t:e:sub}] &  & & $sL+t\log M + t\epsilon \log L$ & [Thm. \ref{thm:substitutions:asymptotic}] \\ %\hline
			$(s,t,\epsilon)_\del$ & $sL + (s+t) \log M +2t\epsilon \log (L/2) $ & [Thm. \ref{thm:gv:s:t:e:del}] & $(s+t)L$ & [Const. \ref{con:constant:weight}] & $sL + t\epsilon \log L$ & [Thm. \ref{thm:deletion:asymptotic}] \\ \hline
			$(0,1,1)_\sub$ &$2\log L $ & [Thm. \ref{thm:gv:s:t:e:sub}] & $2L$ & [Const. \ref{con:constant:weight}] & $\log (ML)$ & [Thm. \ref{thm:substitutions:asymptotic}] \\ %\hline
			$(0,1,1)_{\del}$ & $2\log L $ & [Thm. \ref{thm:gv:s:t:e:del}] & $\log L$ & [Const. \ref{con:single:insertion}] &$\log L$ & [Thm. \ref{thm:deletion:asymptotic}] \\ \hline
			$(0,M,\epsilon)_\sub$ & $2M\epsilon \log L$ & [Thm. \ref{thm:gv:s:t:e:sub}] & $M\epsilon\log L$ & [Const. \ref{con:hamming}] & $M \epsilon \log L$ & [Thm. \ref{thm:substitution:scaling}] \\ %\hline
			$(0,M,1)_{\del}$ & $2M \log L$ & [Thm. \ref{thm:gv:s:t:e:del}] & $M\log L$ & [Const. \ref{con:multiple:insertion}] & $M\log L$ & [Thm. \ref{thm:deletion:scaling}] \\ \specialrule{.8pt}{0pt}{0pt}
	\end{tabular}}
	\label{tab:construction}
\end{table*}

\section{Gilbert-Varshamov Bounds} \label{sec:gv:bounds}
We start by deriving Gilbert-Varshamov lower bounds on the size (equivalently, upper bounds on the redundancy) of optimal $(s,t,\epsilon)_\sub$ and $(s,t,\epsilon)_\del$-correcting codes. An important entity for the derivation of the Gilbert-Varshamov bounds is the set of words $\tilde{\mathcal{S}} \in \mathcal{X}_M^L$, which have intersecting error balls with some $\mathcal{S} \in \mathcal{X}_M^L$. It is defined as follows.
\begin{defn}
For a set $\mathcal{S} \in \mathcal{X}_M^L$, we denote by $V_{s,t,\epsilon}^\ET(\mathcal{S})$ the set of all sets $\tilde{\mathcal{S}}\in\mathcal{X}_M^L$, which have intersecting error balls $B^\ET_{s,t,\epsilon}(\cdot)$ with $\mathcal{S}$, that is, $$ V_{s,t,\epsilon}^\ET(\mathcal{S}) = \{\tilde{\mathcal{S}} \in \mathcal{X}_M^L: B_{s,t,\epsilon}^\ET(\mathcal{S}) \cap B_{s,t,\epsilon}^\ET(\tilde{\mathcal{S}}) \neq \emptyset \}. $$
Hereby, $|V_{s,t,\epsilon}^\ET(\mathcal{S})|$ is called the degree of $\mathcal{S}$. The average degree of all sets is denoted by
$$ \avg{V_{s,t,\epsilon}^\ET} = \frac{1}{\binom{2^L}{M}} \sum_{\mathcal{S} \in \mathcal{X}_M^L} |V_{s,t,\epsilon}^\ET(\mathcal{S})|.$$
\end{defn}
The generalized Gilbert-Varshamov bound (cf. \cite{GF93,T97}) is based on a graph-theoretic representation of an error-correcting code. We will use this representation to find the generalized Gilbert-Varshamov bound for the DNA storage channel. Consider the simple graph  $\mathcal{G}$ with the set of vertices $\mathcal{X}_M^L$. Two vertices $\mathcal{S}_1, \mathcal{S}_2 \in \mathcal{X}_M^L$ are connected, if and only if $B_{s,t,\epsilon}^\ET(\mathcal{S}_1) \cap B_{s,t,\epsilon}^\ET(\mathcal{S}_2) = \emptyset$. Note that this definition is slightly different from \cite{GF93,T97} due to the lack of a distance measure in our case. By construction, a \textit{clique} in $\mathcal{G}$ (collection of vertices, where each pair of vertices is connected) is an $(s,t,\epsilon)_\ET$-correcting code. Now, it can directly be shown that the total number of edges $\mathcal{G}$ coincides with \cite[eq. (2)]{T97}. Analogously to \cite{T97}, it is therefore possible to establish a lower bound on the size of a clique in $\mathcal{G}$ (and therefore an $(s,t,\epsilon)_\ET$-correcting code).
\begin{thm}[cf. \cite{GF93,T97}] \label{thm:gen:gv:bound}
	There exists an $(s,t,\epsilon)_\ET$-correcting code $\Code \subseteq \mathcal{X}_M^L$ of size at least
	$$|\Code| \geq \frac{\binom{2^L}{M}}{ \avg{V_{s,t,\epsilon}^\ET}}.$$
\end{thm}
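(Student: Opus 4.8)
The plan is to apply the standard probabilistic / averaging argument underlying the generalized Gilbert-Varshamov bound in its graph-theoretic form, exactly as sketched in the paragraph preceding the statement. The object to lower-bound is the size of a maximum clique in the graph $\mathcal{G}$ whose vertex set is $\mathcal{X}_M^L$ and in which $\mathcal{S}_1, \mathcal{S}_2$ are adjacent precisely when their error balls are disjoint. A clique in $\mathcal{G}$ is by construction an $(s,t,\epsilon)_\ET$-correcting code (by Definition \ref{def:correcting:code}), so it suffices to bound the clique number of $\mathcal{G}$ from below by $\binom{2^L}{M} / \avg{V_{s,t,\epsilon}^\ET}$.

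**First I would** translate the clique problem into an independent-set problem on the complement graph $\bar{\mathcal{G}}$. In $\bar{\mathcal{G}}$, two vertices are adjacent iff their error balls \emph{intersect}; hence the neighborhood of $\mathcal{S}$ in $\bar{\mathcal{G}}$ is exactly $V_{s,t,\epsilon}^\ET(\mathcal{S}) \setminus \{\mathcal{S}\}$, so the degree of $\mathcal{S}$ in $\bar{\mathcal{G}}$ is $|V_{s,t,\epsilon}^\ET(\mathcal{S})| - 1$. An independent set in $\bar{\mathcal{G}}$ is a clique in $\mathcal{G}$, i.e. a valid code. I would then invoke the classical Turán-type / Caro-Wei lower bound on the independence number, or equivalently the greedy argument: a graph on $N$ vertices with average degree $\bar d$ contains an independent set of size at least $N/(\bar d + 1)$. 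Here $N = \binom{2^L}{M}$ and the average degree in $\bar{\mathcal{G}}$ is $\avg{V_{s,t,\epsilon}^\ET} - 1$, which yields an independent set of size at least $\binom{2^L}{M} / \avg{V_{s,t,\epsilon}^\ET}$, matching the claimed bound. Alternatively one can phrase this directly as the greedy construction: repeatedly pick any remaining vertex into the code, delete it and its $\bar{\mathcal{G}}$-neighborhood, and count that each step removes on average at most $\avg{V_{s,t,\epsilon}^\ET}$ vertices, so the process runs for at least $\binom{2^L}{M}/\avg{V_{s,t,\epsilon}^\ET}$ steps.

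The one point requiring a little care is that the error-ball relation here is \emph{symmetric but not a metric}, so I cannot literally cite \cite{T97} where a distance is assumed; the paper already flags this in the sentence ``this definition is slightly different.'' I would therefore verify that symmetry of $V$ (namely $\tilde{\mathcal{S}} \in V_{s,t,\epsilon}^\ET(\mathcal{S}) \iff \mathcal{S} \in V_{s,t,\epsilon}^\ET(\tilde{\mathcal{S}})$, which is immediate from the symmetric definition via intersecting balls) is all that the averaging argument actually needs — the Caro-Wei / greedy bound holds for any simple undirected graph and does not use the triangle inequality. Concretely, the number of edges of $\bar{\mathcal{G}}$ is $\tfrac12 \sum_{\mathcal{S}} (|V_{s,t,\epsilon}^\ET(\mathcal{S})|-1)$, which is what lets us pass from the per-vertex degrees to the average degree and hence to the handshake-style counting that $\mathcal{G}$'s edge count coincides with \cite[eq.~(2)]{T97}.

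**The main obstacle** is thus not the combinatorial mechanics — which is a direct application of a well-known extremal bound — but making sure the graph model is set up self-consistently so that the averaging over $\mathcal{X}_M^L$ produces exactly $\avg{V_{s,t,\epsilon}^\ET}$ in the denominator rather than $\avg{V_{s,t,\epsilon}^\ET}-1$ or some off-by-one variant. I expect the author's proof to simply state that $\mathcal{G}$'s edge structure matches \cite{T97} and invoke the generalized Gilbert-Varshamov theorem verbatim, so my proof would mirror that by (i) identifying code $=$ clique, (ii) passing to the complement and its degree sequence, and (iii) quoting the Turán/Caro-Wei independent-set bound with $N=\binom{2^L}{M}$ and average degree $\avg{V_{s,t,\epsilon}^\ET}$, being slightly generous in the $+1$ versus no $+1$ so as to land on the stated clean form $\binom{2^L}{M}/\avg{V_{s,t,\epsilon}^\ET}$.
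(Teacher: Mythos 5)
Your main route is correct and is essentially the same as the paper's. The paper identifies codes with cliques in $\mathcal{G}$ (equivalently, independent sets in the complement graph), notes that only the edge/degree structure --- not a metric --- is needed, and invokes the averaging bound of \cite{GF93,T97}, realized concretely by greedily selecting \emph{minimum-degree} vertices and deleting $V_{s,t,\epsilon}^\ET(\cdot)$ at each step. Your Caro--Wei formulation with $N=\binom{2^L}{M}$ and complement degree $|V_{s,t,\epsilon}^\ET(\mathcal{S})|-1$ (using that $\mathcal{S}\in V_{s,t,\epsilon}^\ET(\mathcal{S})$) lands on exactly the stated bound $N/\avg{V_{s,t,\epsilon}^\ET}$, and your remark that symmetry of the intersecting-ball relation is all the argument requires matches the paper's own caveat about \cite{T97}.

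One genuine flaw to fix: your ``alternative'' greedy phrasing --- pick \emph{any} remaining vertex and argue that each step removes on average at most $\avg{V_{s,t,\epsilon}^\ET}$ vertices --- is not a valid argument. The average degree of the original graph does not control the degrees of the vertices an arbitrary greedy happens to select; for a star on $N$ vertices the average degree is about $2$, yet picking the center first terminates the process after a single step, far below $N/(\bar d +1)$. This is exactly why the paper's greedy selects vertices of \emph{minimum} degree. So either quote Caro--Wei/Tur\'an as in your primary argument, or run the minimum-degree greedy as the paper does; the arbitrary-vertex version should be dropped.
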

Such a code can be constructed by successively selecting words $\mathcal{S}^{(i)}$ with minimum degree from $\mathcal{X}_M^L$ as codewords and removing all words $V_{s,t,\epsilon}^\ET(\mathcal{S}^{(i)})$ as possible candidates for the succeeding codewords. Bounding the denominator in Theorem \ref{thm:gen:gv:bound} from above will be the main challenge in this section. We start by stating the bound for the case of an arbitrary number of errors per sequence.
\begin{thm} \label{thm:gv:s:t:bullet}
	There exists an $(s,t,\bullet)_\mathbb{L}$-correcting code $\Code \subseteq \mathcal{X}_M^L$ of cardinality at least
	$$ |\Code|  \geq \frac{\binom{2^L}{M}}{\binom{M}{s+2t} \binom{2^L}{s+2t} }.$$
	Hence, for fixed $s,t \in \mathbb{N}_0$ and fixed $0<\beta<1$, there exists an $(s,t,\bullet)_\mathbb{L}$-correcting code $\Code \subseteq \mathcal{X}_M^L$ with redundancy
	$$ r(\Code) \leq (s+2t)L + (s+2t) \log M - \log ((s+2t)!^2) +o(1), $$
	when $M \rightarrow \infty$ with $M = 2^{\beta L}$.
\end{thm}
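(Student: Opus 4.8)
The plan is to apply the generalized Gilbert-Varshamov bound of Theorem~\ref{thm:gen:gv:bound}: since $|\Code| \ge \binom{2^L}{M}/\avg{V_{s,t,\bullet}^\mathbb{L}}$, it suffices to bound the average degree from above, and in fact I will bound the degree $|V_{s,t,\bullet}^\mathbb{L}(\mathcal{S})|$ of \emph{every} word $\mathcal{S}\in\mathcal{X}_M^L$ by the single quantity $\binom{M}{s+2t}\binom{2^L}{s+2t}$. This uniform bound on the degree immediately gives the same bound on $\avg{V_{s,t,\bullet}^\mathbb{L}}$ and hence the claimed cardinality.

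The heart of the argument is a structural lemma stating that $\tilde{\mathcal{S}}\in V_{s,t,\bullet}^\mathbb{L}(\mathcal{S})$ forces $|\mathcal{S}\cap\tilde{\mathcal{S}}|\ge M-(s+2t)$. To prove it I would fix a common received word $\mathcal{S}'\in B_{s,t,\bullet}^\mathbb{L}(\mathcal{S})\cap B_{s,t,\bullet}^\mathbb{L}(\tilde{\mathcal{S}})$, together with the underlying partitions $(\SG,\SL,\SE)$ of $\mathcal{S}$ and $(\tilde{\SG},\tilde{\SL},\tilde{\SE})$ of $\tilde{\mathcal{S}}$, so that $\mathcal{S}'=\SG\cup\SE'=\tilde{\SG}\cup\tilde{\SE}'$. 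The uncorrupted sequences $\SG\subseteq\mathcal{S}$ all lie in $\mathcal{S}'$, and each $\ve{y}\in\SG$ is therefore either an element of $\tilde{\SG}\subseteq\tilde{\mathcal{S}}$ or one of the at most $|\tilde{\SE}'|\le t$ corrupted sequences of $\tilde{\mathcal{S}}$. Charging the latter possibility to the bounded budget $t$ of corrupted strands yields $|\SG\cap\tilde{\mathcal{S}}|\ge|\SG|-t\ge(M-s-t)-t=M-s-2t$, whence $|\mathcal{S}\cap\tilde{\mathcal{S}}|\ge M-(s+2t)$. This accounting --- correctly attributing the ``surplus'' agreements between $\mathcal{S}'$ and $\mathcal{S}$ to the at most $t$ erroneous strands of $\tilde{\mathcal{S}}$ without double counting --- is the step I expect to require the most care and is the main obstacle; everything downstream is bookkeeping.

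Given the lemma, I would count the words $\tilde{\mathcal{S}}$ sharing at least $M-(s+2t)$ sequences with $\mathcal{S}$ by a surjection. For any such $\tilde{\mathcal{S}}$ there exist a set $A\subseteq\mathcal{S}$ with $|A|=s+2t$ and a set $B\subseteq\Sigma_2^L$ with $|B|=s+2t$ such that $\tilde{\mathcal{S}}=(\mathcal{S}\setminus A)\cup B$; when fewer than $s+2t$ sequences actually differ one pads $A$ and $B$ with equally many common strands, which is possible as soon as $s+2t\le M$ (true for large $M$). The map $(A,B)\mapsto(\mathcal{S}\setminus A)\cup B$ thus covers $V_{s,t,\bullet}^\mathbb{L}(\mathcal{S})$, so $|V_{s,t,\bullet}^\mathbb{L}(\mathcal{S})|\le\binom{M}{s+2t}\binom{2^L}{s+2t}$, which establishes the first inequality via Theorem~\ref{thm:gen:gv:bound}.

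For the redundancy I would simply substitute this into Definition~\ref{def:redundancy}, giving $r(\Code)\le\log\binom{M}{s+2t}+\log\binom{2^L}{s+2t}$, and then use the elementary asymptotics for fixed $k=s+2t$: as $M=2^{\beta L}\to\infty$ one has $\binom{M}{k}\sim M^k/k!$ and $\binom{2^L}{k}\sim 2^{kL}/k!$, so that $\log\binom{M}{k}=k\log M-\log(k!)+o(1)$ and $\log\binom{2^L}{k}=kL-\log(k!)+o(1)$. Adding these two expansions reproduces exactly $(s+2t)L+(s+2t)\log M-\log((s+2t)!^2)+o(1)$, since $2\log(k!)=\log(k!^2)$. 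This last computation is routine once the combinatorial degree bound is in place.
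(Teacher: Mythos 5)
Your proof is correct and follows the same overall route as the paper's: apply Theorem~\ref{thm:gen:gv:bound}, show that any $\tilde{\mathcal{S}} \in V_{s,t,\bullet}^\mathbb{L}(\mathcal{S})$ satisfies $|\mathcal{S}\cap\tilde{\mathcal{S}}|\geq M-s-2t$, count such $\tilde{\mathcal{S}}$ via the remove-$(s+2t)$/add-$(s+2t)$ surjection, and finish with the asymptotics of $\log\binom{M}{k}$ and $\log\binom{2^L}{k}$ for fixed $k$. The one place you genuinely diverge is the proof of the intersection lemma. The paper first normalizes a common output $\mathcal{S}'\in B_{s,t,\bullet}^\mathbb{L}(\mathcal{S})\cap B_{s,t,\bullet}^\mathbb{L}(\tilde{\mathcal{S}})$ so that $|\mathcal{S}'|\leq M-s$ (justifying that surplus sequences can be deleted while staying in both balls) and then applies inclusion--exclusion, $|\mathcal{S}\cap\tilde{\mathcal{S}}| \geq |\mathcal{S}\cap\mathcal{S}'|+|\tilde{\mathcal{S}}\cap\mathcal{S}'|-|\mathcal{S}'| \geq 2(M-s-t)-(M-s)$. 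You instead work directly with the decompositions $\mathcal{S}'=\SG\cup\SE'=\tilde{\SG}\cup\tilde{\SE}'$ and charge every $\ve{y}\in\SG\setminus\tilde{\SG}$ to one of the at most $t$ elements of $\tilde{\SE}'$, giving $|\SG\cap\tilde{\mathcal{S}}|\geq|\SG|-t\geq M-s-2t$. Your charging argument is slightly cleaner in that it needs no normalization of $|\mathcal{S}'|$ (and hence no argument that removing sequences keeps $\mathcal{S}'$ in both error balls), while the paper's cardinality computation avoids reasoning about which constituent of the received set, $\tilde{\SG}$ or $\tilde{\SE}'$, each correct sequence of $\mathcal{S}$ lands in. Both yield the identical degree bound $|V_{s,t,\bullet}^\mathbb{L}(\mathcal{S})|\leq\binom{M}{s+2t}\binom{2^L}{s+2t}$, and your explicit padding condition $M\geq s+2t$ is implicitly assumed in the paper's count as well.
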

\begin{proof}
	We will find an upper bound on $\avg{V_{s,t,\bullet}^\mathbb{L}}$ by  bounding $|V^\mathbb{L}_{s,t,\bullet}(\mathcal{S})|$ from above for all $\mathcal{S} \in \mathcal{X}_M^L$. In the following, let $\tilde{\mathcal{S}} \in V^\mathbb{L}_{s,t,\bullet}(\mathcal{S}) \subseteq \mathcal{X}_M^L$ be a set which has an intersecting error ball with $\mathcal{S}$. Start by observing that for any such $\tilde{S}$, there exists $\mathcal{S}'\in B_{s,t,\bullet}^\mathbb{L}(\mathcal{S})\cap B_{s,t,\bullet}^\mathbb{L}(\tilde{\mathcal{S}})$ with $|\mathcal{S}'| \leq M-s$, since $B_{s,t,\bullet}^\mathbb{L}(\mathcal{S}) \cap B_{s,t,\bullet}^\mathbb{L}(\tilde{\mathcal{S}}) \neq \emptyset$ and for all $\mathcal{S}'' \in B_{s,t,\bullet}^\mathbb{L}(\mathcal{S}) \cap B_{s,t,\bullet}^\mathbb{L}(\tilde{\mathcal{S}})$ with $|\mathcal{S}''|>M-s$ it is possible to construct $\mathcal{S}' \in B_{s,t,\bullet}^\mathbb{L}(\mathcal{S}) \cap B_{s,t,\bullet}^\mathbb{L}(\tilde{\mathcal{S}})$ with $|\mathcal{S}'| = M-s$ by removing any $|\mathcal{S}''|-M+s$ sequences from $\mathcal{S}''$. By Definition \ref{def:dna:channel:error:ball}, $|\mathcal{S} \cap \mathcal{S}'| \geq M-s-t$ and also $|\tilde{\mathcal{S}} \cap \mathcal{S}'| \geq M-s-t$. Further, for any such $\mathcal{S}'$,
	\begin{align*}
		|\mathcal{S} \cap \tilde{\mathcal{S}}| &\geq |\mathcal{S} \cap \tilde{\mathcal{S}} \cap \mathcal{S}'| \overset{(a)}{\geq} |\mathcal{S} \cap \mathcal{S}'| + |\tilde{\mathcal{S}} \cap \mathcal{S}'| - |\mathcal{S}'|\\
		&\geq 2(M-s-t)-(M-s) = M-s-2t,
	\end{align*}
	where we used in $(a)$ that $|\mathcal{S} \cap \tilde{\mathcal{S}} \cap \mathcal{S}'| =   |\mathcal{S} \cap  \mathcal{S}'| + |\tilde{\mathcal{S}} \cap  \mathcal{S}'| -  |(\mathcal{S} \cup\tilde{\mathcal{S}}) \cap \mathcal{S}'| 
	\geq
	|\mathcal{S} \cap  \mathcal{S}'| + |\tilde{\mathcal{S}} \cap  \mathcal{S}'| -  |\mathcal{S}'| $ (for an illustration, refer to Fig. \ref{fig:illustration:proof:gv:thm}). Therefore, any $\tilde{\mathcal{S}}$ has an intersection of size at least $M-s-2t$ with $\mathcal{S}$. Note that for $2^L\geq M+s+2t$ this bound is tight, i.e., it is possible to find sets $\mathcal{S},\tilde{\mathcal{S}} \in \mathcal{X}_M^L$ with $B_{s,t,\bullet}^\mathbb{L}(\mathcal{S})\cap B_{s,t,\bullet}^\mathbb{L}(\tilde{\mathcal{S}}) \neq \emptyset$ and $\mathcal{S} \cap \tilde{\mathcal{S}} = M-s-2t$.  Each $\tilde{\mathcal{S}}$ can thus be constructed by removing $s+2t$ sequences from $\mathcal{S}$ and adding $s+2t$ arbitrary sequences. The total number of such choices is at most $\binom{M}{s+2t}\binom{2^L}{s+2t}$. The bound on the redundancy follows from Definition \ref{def:redundancy} and the fact that for any fixed $a \in \mathbb{N}_0$, $\log \binom{M}{a} = a \log M - \log a! + o(1)$.
\newcommand{\squarelines}{\vcenter{\hbox{\begin{tikzpicture} 
			\draw[pattern=north east lines,pattern color=black!50!] (-.5em,-.5em) rectangle (.5em,.5em);
			\end{tikzpicture}}}}
\newcommand{\squarelinesrev}{\vcenter{\hbox{\begin{tikzpicture} 
			\draw[pattern=north west lines,pattern color=black!50!] (-.5em,-.5em) rectangle (.5em,.5em);
			\end{tikzpicture}}}}
\newcommand{\squaregray}{\vcenter{\hbox{\begin{tikzpicture} 
			\draw[pattern=crosshatch,pattern color=black!50!] (-.5em,-.5em) rectangle (.5em,.5em);
			\end{tikzpicture}}}}
	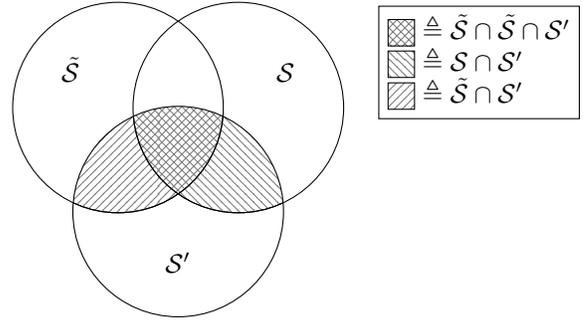
\begin{figure}
		\def\firstellipse{(1*0.8, 0) circle (1.4cm)}
		\def\secondellipse{(-1*0.8, 0) circle (1.4cm)}
		\def\thirdellipse{(0, -1.73*0.8) circle (1.4cm)}
		
		\centering
		\begin{tikzpicture}
			\begin{scope}[even odd rule]
				\clip \thirdellipse;
				\draw[pattern=north east lines,pattern color=black!50!] \secondellipse;
			\end{scope}
			\begin{scope}[even odd rule]
				\clip \thirdellipse;
				\draw[pattern=north west lines,pattern color=black!50!] \firstellipse;
			\end{scope}
			\draw \firstellipse node[label={[label distance=.3cm]30:$\mathcal{S}$}] {};
			\draw \secondellipse node[label={[label distance=.3cm]150:$\tilde{\mathcal{S}}$}] {};
			\draw \thirdellipse node[label={[label distance=.3cm]270:$\mathcal{S}'$}] {};
			
			\node[draw,align=left] at (4,.6)(legend) {$\squaregray \triangleq \tilde{\mathcal{S}} \cap \tilde{\mathcal{S}}\cap \mathcal{S}'$\\$\squarelinesrev \triangleq \mathcal{S} \cap \mathcal{S}'$\\$\squarelines \triangleq \tilde{\mathcal{S}} \cap \mathcal{S}'$};
		\end{tikzpicture}%
		\caption{Illustration for the proof of Theorem \ref{thm:gv:s:t:bullet}}
		\label{fig:illustration:proof:gv:thm}
	\end{figure}
\end{proof}
In a similar fashion, we will now establish the existence of a code for the case of a loss of $s$ sequences and a fixed number of $\epsilon$ substitution errors in $t$ sequences.
\begin{thm} \label{thm:gv:s:t:e:sub}
	There exists an $(s,t,\epsilon)_\sub$-correcting code $\Code \subseteq \mathcal{X}_M^L$ with cardinality at least
	$$ |\Code| \geq \frac{\binom{2^L}{M}}{\binom{M}{s,t} \binom{M+t-1}{t} \binom{2^L}{s} B_\epsilon^\sub(L)^{2t}}. $$
	Hence, for fixed $s,t,\epsilon \in \mathbb{N}_0$ and fixed $0<\beta<1$, there exists an $(s,t,\epsilon)_\sub$-correcting code $\Code \subseteq \mathcal{X}_M^L$ with redundancy
	$$ r(\Code) \leq sL + (s+2t) \log M +2t\epsilon \log L- \log (s!^2t!\epsilon!^{2t}) +o(1), $$
	when $M \rightarrow \infty$ with $M = 2^{\beta L}$.
\end{thm}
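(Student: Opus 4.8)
The plan is to reuse the machinery of Theorem~\ref{thm:gen:gv:bound}, so that it suffices to upper bound the degree $|V_{s,t,\epsilon}^\sub(\mathcal{S})|$ uniformly over all $\mathcal{S}\in\mathcal{X}_M^L$; such a bound then also bounds the average degree $\avg{V_{s,t,\epsilon}^\sub}$ and yields the cardinality estimate at once. I would fix $\mathcal{S}$ and, for each $\tilde{\mathcal{S}}\in V_{s,t,\epsilon}^\sub(\mathcal{S})$, pick a common element $\mathcal{S}'\in B_{s,t,\epsilon}^\sub(\mathcal{S})\cap B_{s,t,\epsilon}^\sub(\tilde{\mathcal{S}})$. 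As in the proof of Theorem~\ref{thm:gv:s:t:bullet}, the existence of such an $\mathcal{S}'$ forces $\mathcal{S}$ and $\tilde{\mathcal{S}}$ to share a large common part, so $\tilde{\mathcal{S}}$ is obtainable from $\mathcal{S}$ by a bounded number of local edits. The point of departure from that theorem is that substitutions preserve length and live in the comparatively small ball $B_\epsilon^\sub(L)$, so I want a sharper count than the crude $\binom{2^L}{s+2t}$ used there: the $t$ sequences that are ``re-added in error'' should cost $B_\epsilon^\sub(L)$ each rather than $2^L$ each, and only the $s$ genuinely lost sequences should cost $2^L$ each.

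The main step is then to parametrize every admissible $\tilde{\mathcal{S}}$ by the forward/backward decomposition $\mathcal{S}\to\mathcal{S}'\leftarrow\tilde{\mathcal{S}}$ and to over-count the free choices, each $\tilde{\mathcal{S}}$ being realized by at least one decomposition. Going forward from $\mathcal{S}$ I choose which $s$ sequences are lost and which $t$ are distorted, contributing the multinomial $\binom{M}{s,t}$, and then pick the substituted image of each distorted sequence inside its error ball, contributing $B_\epsilon^\sub(L)^{t}$; this produces $\mathcal{S}'$. Going backward from $\mathcal{S}'$ to $\tilde{\mathcal{S}}$ I must designate which words of $\mathcal{S}'$ are the distorted images $\tilde{\SE}'$ of the erroneous sequences $\tilde{\SE}$ of $\tilde{\mathcal{S}}$, choose their substitution pre-images (again $B_\epsilon^\sub(L)^{t}$, by the symmetry of the substitution ball), and finally append the $\leq s$ sequences of $\tilde{\mathcal{S}}$ that the channel lost; since losing a sequence constrains its content not at all, these are arbitrary words and contribute $\binom{2^L}{s}$.

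The delicate point — and the step I expect to be the main obstacle — is the backward designation of the $t$ distorted positions in $\mathcal{S}'$. Because several distinct erroneous sequences of $\tilde{\mathcal{S}}$ may be substituted into the \emph{same} word of $\mathcal{S}'$, and may even collide with an error-free sequence, $|\mathcal{S}'|$ can drop below $M$ and these targets must be selected \emph{with repetition}. Counting them as size-$t$ multisets drawn from the at most $M$ words of $\mathcal{S}'$ produces the factor $\binom{M+t-1}{t}$, and I will have to argue carefully that this over-counting dominates all collision patterns so that no admissible $\tilde{\mathcal{S}}$ is omitted. Multiplying the forward and backward contributions then gives $|V_{s,t,\epsilon}^\sub(\mathcal{S})|\leq\binom{M}{s,t}\binom{M+t-1}{t}\binom{2^L}{s}B_\epsilon^\sub(L)^{2t}$, which is exactly the claimed denominator.

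Finally I would substitute this into Theorem~\ref{thm:gen:gv:bound} and pass to redundancy via Definition~\ref{def:redundancy}. Using $\log\binom{M}{a}=a\log M-\log a!+o(1)$ for fixed $a$, together with the analogous expansions $\log\binom{M+t-1}{t}=t\log M-\log t!+o(1)$ and $\log\binom{2^L}{s}=sL-\log s!+o(1)$, and the estimate $B_\epsilon^\sub(L)\sim L^{\epsilon}/\epsilon!$ giving $\log B_\epsilon^\sub(L)=\epsilon\log L-\log\epsilon!+o(1)$, the leading terms collect into $sL+(s+2t)\log M+2t\epsilon\log L$ plus the stated additive constant as $M=2^{\beta L}\to\infty$.
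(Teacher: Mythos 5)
Your proposal is correct and is essentially the paper's own proof: the same uniform bound on the degree $|V_{s,t,\epsilon}^\sub(\mathcal{S})|$ via the forward count $\binom{M}{s,t}B_\epsilon^\sub(L)^t$ of intermediate outputs $\mathcal{S}'$, the same backward count $\binom{M+t-1}{t}B_\epsilon^\sub(L)^t\binom{2^L}{s}$ of pre-images $\tilde{\mathcal{S}}$ (with the multiset coefficient absorbing exactly the collision patterns you worry about), followed by the same application of Theorem~\ref{thm:gen:gv:bound} and the logarithmic expansions. There is no gap; if anything, your expansion yields the marginally sharper constant $-\log(s!^2t!^2\epsilon!^{2t})$, which implies the constant stated in the theorem.
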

\begin{proof}
	We will find an upper bound on $|V^\sub_{s,t,\epsilon}(\mathcal{S})|$ for all $\mathcal{S} \in \mathcal{X}_M^L$. Let $\mathcal{S}' \in B_{s,t,\epsilon}^\sub(\mathcal{S})$ with $|\mathcal{S}'|\leq M-s$. The number of such elements $\mathcal{S}'$ is at most $\binom{M}{s,t} B_\epsilon^\sub(L)^t$, as we can choose $s$ sequences to be lost, $t$ sequences to be erroneous and there are $B_\epsilon^\sub(L)$ error patterns for each erroneous sequence. Given $\mathcal{S}' \in B_{s,t,\epsilon}^\sub(\mathcal{S})$, we construct possible $\tilde{\mathcal{S}}$ with $\mathcal{S}' \in B_{s,t,\epsilon}^\sub(\tilde{\mathcal{S}})$ as follows. For each of the $t$ erroneous sequences it is possible to either add $\epsilon$ errors to a sequence $\ve{x} \in \mathcal{S}'$ or to create a new sequence inside the error ball $B_\epsilon^\sub(\ve{x})$. There are $\binom{M+t-1}{t}B_\epsilon^\sub(L)^{t}$ possible error patterns for this procedure. Finally, the $s$ lost sequences can be arbitrary sequences $\ve{x} \in \Sigma_2^L$, and there are at most $\binom{2^L}{s}$ choices for these sequences. Thus,
	$$ |V^\sub_{s,t,\epsilon}(\mathcal{S})| \leq \binom{M}{s,t} B_\epsilon^\sub(L)^t \binom{M+t-1}{t} B_\epsilon^\sub(L)^t \binom{2^L}{s}. $$
	Applying Theorem \ref{thm:gen:gv:bound} and using the definition of the redundancy directly yields the bounds of the theorem.
\end{proof}
For the case of deletion errors, we slightly adapt our arguments since the size of the deletion sphere is non-uniform \cite{Lev66}. As stated in Theorem \ref{thm:gen:gv:bound}, it is sufficient to find an upper bound on the average degree $\avg{V_{s,t,\epsilon}^\del}$.
\begin{defn}
	The average of the $t$-th power of the deletion sphere size $|S_\epsilon^\del(\ve{x})|$ over all $\ve{x} \in \Sigma_2^L$ is defined to be
	$$\avg{S_\epsilon^{\del,t}} = \frac{1}{2^L} \sum_{\ve{x} \in \Sigma_2^L} |S_\epsilon^\del(\ve{x})|^t. $$
\end{defn}
Based on this definition we formulate the following theorem about the existence of $(s,t,\epsilon)_\del$-correcting codes.
\begin{thm} \label{thm:gv:s:t:e:del}
	There exists an $(s,t,\epsilon)_\del$-correcting code $\Code \subseteq \mathcal{X}_M^L$ with cardinality at least
	$$ |\Code| \geq \frac{\binom{2^L}{M}}{\binom{M}{s,t} \binom{2^L}{s} B_\epsilon^\sub(L)^t \avg{S_\epsilon^{\del,t}}}. $$
	Hence, for fixed $s,t,\epsilon \in \mathbb{N}_0$ and fixed $0<\beta<1$, there exists an $(s,t,\epsilon)_\del$-correcting code $\Code \subseteq \mathcal{X}_M^L$ with redundancy
	$$ r(\Code) \leq sL + (s+t) \log M +2t\epsilon \log L -t\epsilon -\log (s!^2t!^2\epsilon!^{2t}) + o(1), $$
	when $M \rightarrow \infty$ with $M = 2^{\beta L}$.
\end{thm}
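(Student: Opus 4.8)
The plan is to invoke Theorem \ref{thm:gen:gv:bound}, so that it suffices to establish the single inequality $\avg{V_{s,t,\epsilon}^\del} \le \binom{M}{s,t}\binom{2^L}{s}B_\epsilon^\sub(L)^t\,\avg{S_\epsilon^{\del,t}}$ on the average degree; the cardinality bound is then immediate, and the redundancy bound follows from Definition \ref{def:redundancy} together with the standard estimates $\log\binom{M}{s,t} = (s+t)\log M - \log(s!\,t!) + o(1)$, $\log\binom{2^L}{s} = sL - \log s! + o(1)$, $\log B_\epsilon^\sub(L) = \epsilon\log L - \log\epsilon! + o(1)$, and the asymptotics of the deletion-sphere moment (a typical binary word of length $L$ has about $L/2$ runs, so $|S_\epsilon^\del(\ve{x})|$ concentrates near $\binom{L/2}{\epsilon}$, which is what produces the $-t\epsilon$ term). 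I would mirror the proof of Theorem \ref{thm:gv:s:t:e:sub}, bounding the neighbours $\tilde{\mathcal{S}}$ of a word $\mathcal{S}$ by routing through a common descendant $\mathcal{S}' \in B_{s,t,\epsilon}^\del(\mathcal{S}) \cap B_{s,t,\epsilon}^\del(\tilde{\mathcal{S}})$ and counting the forward step $\mathcal{S}\to\mathcal{S}'$ and the reverse step $\mathcal{S}'\to\tilde{\mathcal{S}}$ separately. The one genuinely new feature compared with substitutions is that the deletion sphere is non-uniform, so the forward count cannot be controlled uniformly in $\mathcal{S}$ and one must pass to the \emph{average} degree rather than to $\max_{\mathcal{S}}|V_{s,t,\epsilon}^\del(\mathcal{S})|$.

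For the forward step I would first normalize to exactly $\epsilon$ deletions per erroneous strand: given any common descendant, one may delete further symbols from the shortened strands until each has length $L-\epsilon$, and this keeps the result inside both error balls, since a subsequence of a common subsequence of $\ve{x}\in\mathcal{S}$ and $\tilde{\ve{y}}\in\tilde{\mathcal{S}}$ is still reachable from both by at most $\epsilon$ deletions. After this normalization the erroneous strands of $\mathcal{S}'$ are precisely its length-$(L-\epsilon)$ elements, hence identifiable by their length, and the forward count is $\binom{M-t}{s}\sum_{\SE}\prod_{\ve{x}\in\SE}|S_\epsilon^\del(\ve{x})|$, where the sum runs over the $t$-subsets $\SE\subseteq\mathcal{S}$ of erroneous strands and $\binom{M-t}{s}$ counts the lost strands. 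Identifiability is exactly what makes the reverse step uniform and removes the factor $\binom{M+t-1}{t}$ that appears in the substitution bound (and is why the coefficient of $\log M$ is $s+t$ rather than $s+2t$): each length-$(L-\epsilon)$ strand must be lifted to a length-$L$ supersequence, and by insertion--deletion duality the number of such supersequences equals $S_\epsilon^\ins(L-\epsilon) = \sum_{i=0}^{\epsilon}\binom{L}{i} = B_\epsilon^\sub(L)$ in the binary case, so together with the $\binom{2^L}{s}$ arbitrary lost strands the reverse count is at most $\binom{2^L}{s}B_\epsilon^\sub(L)^t$ independently of $\mathcal{S}'$.

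The main obstacle is the averaging step, which must convert the $\mathcal{S}$-dependent product $\prod_{\ve{x}\in\SE}|S_\epsilon^\del(\ve{x})|$ into the ambient moment $\avg{S_\epsilon^{\del,t}}$. I would sum the forward count over all $\mathcal{S}\in\mathcal{X}_M^L$ and interchange the order of summation: each $t$-set $\SE\subseteq\Sigma_2^L$ lies in exactly $\binom{2^L-t}{M-t}$ words $\mathcal{S}$, so the total is $\binom{2^L-t}{M-t}$ times the $t$-th elementary symmetric polynomial $e_t$ of the multiset $\{|S_\epsilon^\del(\ve{x})| : \ve{x}\in\Sigma_2^L\}$. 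The key inequality is $e_t \le \binom{2^L}{t}\avg{S_\epsilon^{\del,t}}$, obtained by applying the arithmetic--geometric mean inequality to each summand in the form $\prod_{\ve{x}\in\SE}|S_\epsilon^\del(\ve{x})| \le \frac1t\sum_{\ve{x}\in\SE}|S_\epsilon^\del(\ve{x})|^t$ and then collecting terms. Multiplying the uniform reverse bound $\binom{2^L}{s}B_\epsilon^\sub(L)^t$ by this averaged forward bound and dividing by $\binom{2^L}{M}$, the identity $\frac{\binom{M-t}{s}\binom{2^L-t}{M-t}\binom{2^L}{t}}{\binom{2^L}{M}} = \binom{M-t}{s}\binom{M}{t} = \binom{M}{s,t}$ collapses everything to exactly the claimed denominator. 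The delicate points to verify carefully are the legitimacy of the ``exactly $\epsilon$ deletions'' normalization when strands of $\mathcal{S}'$ collide (so that $|\mathcal{S}'|<M-s$), and the fact that the symmetric-polynomial bound is taken over the full space $\Sigma_2^L$, which is what makes the ambient moment $\avg{S_\epsilon^{\del,t}}$ rather than $\avg{S_\epsilon^{\del,1}}^t$ the quantity that appears.
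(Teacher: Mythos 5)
Your proposal is correct and follows essentially the same route as the paper's proof: the forward count $\binom{M-t}{s}\sum_{\SE}\prod_{\ve{x}\in\SE}|S_\epsilon^\del(\ve{x})|$, the uniform reverse count $\binom{2^L}{s}S_\epsilon^\ins(L-\epsilon)^t = \binom{2^L}{s}B_\epsilon^\sub(L)^t$ via length-identifiability of the shortened strands, and the averaging step via double counting ($\binom{2^L-t}{M-t}$ ambient sets per $t$-subset) combined with the AM--GM inequality $\prod_i a_i \le \frac1t\sum_i a_i^t$ to produce $\binom{M}{s,t}\avg{S_\epsilon^{\del,t}}$ are all exactly the paper's steps. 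Your explicit justification of the ``exactly $\epsilon$ deletions'' normalization is a point the paper leaves implicit, but it does not change the argument.
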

\begin{proof}
	We will derive an upper bound on $\avg{V_{s,t,\epsilon}^\del}$. The number of elements in $\mathcal{S}' \in B_{s,t,\epsilon}^\del(\mathcal{S})$ after a loss of exactly $s$ sequences and $\epsilon$ deletions in $t$ sequences is at most
	$$ |B_{s,t,\epsilon}^\del(\mathcal{S})| \leq \sum_{\SE\subseteq \mathcal{S}, |\SE|=t}\prod_{\ve{x} \in \SE} |S_\epsilon^\del(\ve{x})| \binom{M-t}{s}. $$
	This can be illustrated by the following consideration. First, fix $\SE\subseteq \mathcal{S}$ with $|\SE| = t$. There are $|S_\epsilon^\del(\ve{x})|$ possible error patterns for each $\ve{x} \in \SE$ and $\binom{M-t}{s}$ choices of $s$ lost sequences among the remaining $M-t$ error-free sequences. Summing over all possible choices $\SE \subseteq\mathcal{S}$ of erroneous sequences yields the bound. Then, for each such set $\mathcal{S}'$, there are at most $ \binom{2^L}{s}S_\epsilon^\ins(L-\epsilon)^t$ sets $\tilde{\mathcal{S}}$ with $\mathcal{S}' \in B_{s,t,\epsilon}^\del(\tilde{\mathcal{S}}) \neq \emptyset$. This is because each erroneous sequence $\ve{x}' \in \mathcal{S}'$ has length $L-\epsilon$ and requires $\epsilon$ insertions to become a sequence of length $L$. The $s$ lost sequences can be arbitrary words in $\tilde{\mathcal{S}}$ and therefore
	$$|V_{s,t,\epsilon}^\del(\mathcal{S})| \leq |B_{s,t,\epsilon}^\del(\mathcal{S})| \binom{2^L}{s} S_\epsilon^\ins(L-\epsilon)^t  .$$
	Taking the average of $|B_{s,t,\epsilon}^\del(\mathcal{S})|$ over all sets $\mathcal{S} \in \mathcal{X}_M^L$ yields
	\begin{align*}
		\sum_{\mathcal{S} \in \mathcal{X}_M^L}& \frac{|B_{s,t,\epsilon}^\del(\mathcal{S})|}{\binom{2^L}{M}}  \leq \frac{\binom{M-t}{s}}{\binom{2^L}{M}} \sum_{\mathcal{S} \in \mathcal{X}_M^L}  \sum_{\SE\subseteq \mathcal{S}, |\SE|=t}\prod_{\ve{x} \in \SE} |S_\epsilon^\del(\ve{x})|\\   &\overset{(a)}{=}  \frac{\binom{M}{s,t}}{\binom{2^L}{t}} \sum_{\SE \in \mathcal{X}_t^L} \prod_{\ve{x} \in \SE} |S_\epsilon^\del(\ve{x})| 
		 \overset{(b)}{\leq} \frac{\binom{M}{s,t}}{\binom{2^L}{t}} \sum_{\SE \in \mathcal{X}_t^L} \sum_{\ve{x} \in \SE} \frac{|S_\epsilon^\del(\ve{x})|^t}{t}\\ & \overset{(c)}{=} \binom{M}{s,t} \avg{S_\epsilon^{\del,t}}.
	\end{align*}
	Here, for equality $(a)$ we used that each set $\SE$ with $|\SE| = t$ is contained in exactly $\binom{2^L-t}{M-t}$ sets $\mathcal{S} \in \mathcal{X}_M^L$. It follows from the combination of the arithmetic-geometric mean inequality and Jensen inequality that for any non-negative $a_1,\dots,a_t\geq 0$ it holds that $a_1\cdot \ldots \cdot a_t \leq \frac1t(a_1^t+\ldots+a_t^t)$, which has been used in inequality $(b)$. Equality $(c)$ follows from the fact that each $\ve{x} \in \Sigma_2^L$ is contained in $\binom{2^L-1}{t-1}$ sets $\SE \in \mathcal{X}_t^L$. It is known \cite{Lev66} that $|S_\epsilon^\del(\ve{x})| \leq \binom{||\ve{x}||+\epsilon-1}{\epsilon} \leq \frac{(||\ve{x}||+\epsilon-1)^{\epsilon}}{\epsilon!}$, which results in
	\begin{align*}
		\avg{S_\epsilon^{\del,t}} &\leq \frac{1}{2^L} \sum_{\ve{x} \in \Sigma_2^L} \frac{(||\ve{x}||+\epsilon-1)^{t\epsilon}}{\epsilon!^t}  \\ &\overset{(a)}{=} \frac{1}{\epsilon!^t} \sum_{i=0}^{L-1} \frac{\binom{L-1}{i}(i+\epsilon)^{t\epsilon}}{2^{L-1}}
		 \overset{(b)}{\lesssim} \frac{1}{\epsilon!^t} \left(\frac{L}{2}\right)^{\epsilon t}.
	\end{align*}
	In equality $(a)$ it has been used that the number of words $\ve{x}\in \Sigma_2^L$ with $||\ve{x}||=i$ is $2\binom{L-1}{i-1}$. For inequality $(b)$, we identify the sum as the decentralized moment of a binomial distribution with $L-1$ trials and success probability $\frac12$ and use \cite[eq. (4.10)]{K08} for the asymptotic behavior, when $L \rightarrow \infty$.
\end{proof}
\section{Sphere-Packing Bounds} \label{sphere:packing:bounds}
A well-known method to find upper bounds on the cardinality of error-correcting codes is the sphere-packing bound. In this section we  derive sphere-packing bounds for $(s,t,\epsilon)_\ET$-correcting codes. These bounds directly imply lower bounds on the redundancy of such codes. One particular observation of the considered DNA storage channel is that it is non-uniform, i.e. the sizes of the error balls $B^{\ET}_{s,t,\epsilon}(\mathcal{S})$ depend on the channel input $\mathcal{S}$ for all types of errors $\ET$, which hinders the computation of sphere packing bounds. A practical method to find sphere packing bounds for non-uniform error balls is the generalized sphere packing bound \cite{FVY15,KK13}. However, due to the complex expressions of the error ball sizes, this method does not yield tractable expressions for the considered channel. Another possibility is to derive the sphere packing bound by finding an upper bound on the error ball size, which we will do in Section \ref{ss:non:asymptotic:bounds}. We will also show that for large $M$ most of the error balls have a similar size, which allows to formulate tighter asymptotic sphere packing bounds in Sections \ref{ss:asmyptotic:boubds:substitutions} and \ref{ss:asmyptotic:boubds:deletions}. Note that together with the lower bounds on the achievable size of $(s,t,\epsilon)_\ET$-correcting codes from the previous section and concrete code constructions in Section \ref{sec:const}, it can be shown that the sphere packing bounds are asymptotically tight for many channel parameters and provide important insights into the nature of the DNA channel.
\subsection{Non-Asymptotic Bounds} \label{ss:non:asymptotic:bounds}
We start by finding an upper bound for $(s,t,\bullet)_\mathbb{L}$-correcting codes, which depicts the case of a loss of $s$ sequences and an arbitrary number of edit errors in each of $t$ erroneous sequences.
\begin{thm} \label{thm:bound:arbitrary:error}
	The cardinality of any $(s,t,\bullet)_\mathbb{L}$-correcting code $\Code \subseteq \mathcal{X}_M^L$ satisfies
	$$ |\Code| \leq \frac{\binom{2^L}{M-s}}{\binom{M}{t+s} \binom{2^L-M}{t}}. $$
	In particular, the redundancy of any $(s,t,\bullet)_\mathbb{L}$-correcting code $\Code \subseteq \mathcal{X}_M^L$ is therefore at least
	$$ r(\Code) \geq (s+t) \log (2^L-M-t) + t\log (M-s-t) - \log (t!(s+t)!).$$
\end{thm}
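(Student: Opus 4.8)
The plan is to prove a sphere-packing bound by restricting attention to received sets of a single convenient size. Since $\Code$ is $(s,t,\bullet)_\mathbb{L}$-correcting, the error balls $B^\mathbb{L}_{s,t,\bullet}(\mathcal{S})$ over codewords $\mathcal{S} \in \Code$ are pairwise disjoint by Definition~\ref{def:correcting:code}. I would focus only on those received sets $\mathcal{S}'$ of size exactly $M-s$; these all live inside the family of $(M-s)$-element subsets of $\Sigma_2^L$, whose total number is $\binom{2^L}{M-s}$. If every error ball contains at least $N$ such size-$(M-s)$ sets, then disjointness forces $|\Code| \cdot N \leq \binom{2^L}{M-s}$, and the claimed cardinality bound follows with $N = \binom{M}{t+s}\binom{2^L-M}{t}$.

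The heart of the argument is therefore a \emph{lower} bound on the number of distinct size-$(M-s)$ members of $B^\mathbb{L}_{s,t,\bullet}(\mathcal{S})$. I would construct them explicitly: choose a set $R$ of $s+t$ sequences to delete from $\mathcal{S}$ (there are $\binom{M}{s+t}$ choices) and a set $A$ of $t$ new sequences drawn from the $2^L-M$ sequences outside $\mathcal{S}$ (there are $\binom{2^L-M}{t}$ choices), and form $\mathcal{S}' = (\mathcal{S}\setminus R)\cup A$, which has size $M-(s+t)+t = M-s$. Each such $\mathcal{S}'$ is a legitimate channel output: designate any $t$ of the removed sequences as the erroneous set $\SE$, map them bijectively onto the $t$ sequences of $A$ (realizable by an unbounded number of $\mathbb{L}$-errors, e.g.\ substitutions, and valid since the targets in $A$ differ from the originals), and declare the remaining $s$ removed sequences lost, $\SL$. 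The point needed for a lower bound is that these outputs are \emph{distinct}: from $\mathcal{S}'$ and the fixed codeword $\mathcal{S}$ one recovers $A = \mathcal{S}'\setminus\mathcal{S}$ and $R = \mathcal{S}\setminus\mathcal{S}'$, so the map $(R,A)\mapsto\mathcal{S}'$ is injective. Hence every error ball contains at least $\binom{M}{t+s}\binom{2^L-M}{t}$ sets of size $M-s$, yielding the cardinality bound.

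For the redundancy statement I would substitute into $r(\Code) = \log\binom{2^L}{M} - \log|\Code|$, giving
\[
r(\Code) \geq \log\tbinom{2^L}{M} - \log\tbinom{2^L}{M-s} + \log\tbinom{M}{t+s} + \log\tbinom{2^L-M}{t}.
\]
The final step is to expand each binomial as a product of consecutive integers and recognize two cancellations. The factors $2^L-M+1,\dots,2^L-M+s$ coming from $\binom{2^L}{M}/\binom{2^L}{M-s}$ concatenate with the $t$ factors $2^L-M-t+1,\dots,2^L-M$ of $\binom{2^L-M}{t}$ into one contiguous run of $s+t$ integers, each at least $2^L-M-t$; and the $s$ denominator factors $M-s+1,\dots,M$ of $\binom{2^L}{M}/\binom{2^L}{M-s}$ cancel the top $s$ numerator factors of $\binom{M}{t+s}$, leaving exactly $t$ factors, each at least $M-s-t$. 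Collecting the residual $1/(t!(s+t)!)$ produces the stated bound.

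The main obstacle is conceptual rather than computational: unlike a textbook sphere-packing bound, where one \emph{upper}-bounds a ball size, here I need a guaranteed \emph{lower} bound on how many size-$(M-s)$ outputs each ball contains, so the construction must avoid overcounting. The injectivity observation that $\mathcal{S}'$ together with the known centre $\mathcal{S}$ determines both the deleted set $\mathcal{S}\setminus\mathcal{S}'$ and the inserted set $\mathcal{S}'\setminus\mathcal{S}$ is exactly what makes the count tight, and it relies on drawing the new sequences $A$ from outside $\mathcal{S}$ so that they are cleanly separated from the surviving sequences.
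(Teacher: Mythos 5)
Your proposal is correct and follows essentially the same route as the paper's proof: both lower-bound each error ball $B^\mathbb{L}_{s,t,\bullet}(\mathcal{S})$ by counting the received sets of size exactly $M-s$ obtained by deleting $s+t$ sequences from $\mathcal{S}$ and substituting $t$ of them into fresh sequences from $\Sigma_2^L\setminus\mathcal{S}$ (your $(R,A)$ parametrization is the same count as the paper's choice of $\SG\subseteq\mathcal{S}$ and $\SE'\subseteq\Sigma_2^L\setminus\mathcal{S}$, since $\binom{M}{s+t}=\binom{M}{M-s-t}$), establish injectivity via $A=\mathcal{S}'\setminus\mathcal{S}$ and $R=\mathcal{S}\setminus\mathcal{S}'$, and then pack disjoint balls into $\binom{2^L}{M-s}$. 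Your redundancy derivation via factor cancellation is the same algebra the paper performs with factorial ratios, so there is nothing to fix.
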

\begin{proof}
	We prove the theorem by finding a subset of $B^\mathbb{L}_{s,t,\bullet}(\mathcal{S})$, which gives a lower bound on the sphere size $|B^\mathbb{L}_{s,t,\bullet}(\mathcal{S})|$ for all $\mathcal{S} \in \mathcal{X}_M^L$. Let $\mathcal{S}' \in B^\mathbb{L}_{s,t,\bullet}(\mathcal{S}) \cap \Sigma_2^L$ denote an element from the error ball of $\mathcal{S}$, which contains only sequences of length $L$ and let $\SG, \SE'$ denote the corresponding error-free, respectively erroneous outcomes of the sequences, i.e. $\mathcal{S}' = \SG \cup \SE'$, according to Definition \ref{def:dna:channel:error:ball}. We construct such distinct $\mathcal{S}'$ in the following way. Choose $M-s-t$ error-free sequences $\SG \subseteq \mathcal{S}$ and choose the $t$ erroneous sequences in $\SE'$ to be distinct elements out of the $2^L-M$ sequences in $\Sigma_2^L \setminus \mathcal{S}$ and let $\mathcal{S}' = \SG \cup \SE'$. For any such $\SG \subseteq \mathcal{S}$ and $\SE' \subseteq \Sigma_2^L\setminus \mathcal{S}$ one obtains a unique element from the error ball $B^\mathbb{L}_{s,t,\bullet}(\mathcal{S})$, since $\mathcal{S}' = \SG \cup \SE'$ and $\SG, \SE'$ are both subsets of two distinct sets. There are in total $\binom{M}{s+t}$ ways to choose the set $\SG$ and $\binom{2^L-M}{t}$ ways to choose $\SE'$ and thus $|B^\mathbb{L}_{s,t,\bullet}(\mathcal{S})| \geq \binom{M}{s+t}\binom{2^L-M}{t}$. All such constructed received sets have $|\mathcal{S}'|=|\SG| + |\SE'| = M-s$ sequences of length $L$ and therefore, we obtain by a sphere packing argument, that any $(s,t,\bullet)_\mathbb{L}$-correcting code $\Code$ satisfies
	$$ |\Code| \leq \frac{\binom{2^L}{M-s}}{\binom{M}{t+s} \binom{2^L-M}{t}}. $$
	Therefore, the redundancy is at least
	\begin{align*}
	r(\Code) =& \log \binom{2^L}{M} - \log|\Code| \\ \geq& \log \frac{(2^L-M+s)!(M-s)!}{(2^L-M-t)!(M-s-t)!(s+t)!t!} \\
	\geq& (s+t) \log (2^L-M-t) +t\log (M-t-s) \\&\qquad\qquad\qquad\qquad\qquad\;\;- \log(t!(s+t)!).
	\end{align*}
	\vspace{-.2cm}
\end{proof}
This non-asymptotic bound directly implies an asymptotic bound, when $M\rightarrow \infty$ and $M = 2^{\beta L}$ for fixed $0<\beta <1$.
\begin{corollary} \label{cor:bound:red:arbitrary:error}
	For fixed $s,t \in \mathbb{N}_0$ and fixed $0<\beta<1$, the redundancy of any $(s,t,\bullet)_\mathbb{L}$-correcting code $\Code \subseteq \mathcal{X}_M^L$ is asymptotically at least
	$$ r(\Code) \geq (s+t)L+t\log M - \log(t!(s+t)!) + o(1), $$
	when $M \rightarrow \infty$ and $M = 2^{\beta L}$. Further, for any fixed $\sigma, \tau$ with $\sigma>0$,  $\tau>0$ and $\sigma+\tau<1$, the redundancy of any $(\sigma M,\tau M,\bullet)_\mathbb{L}$-correcting code $\Code \subseteq \mathcal{X}_M^L$ satisfies
	$$ r(\Code) \geq (\sigma+\tau)M(L-\log M + \log \e) + MH(\sigma+\tau) + o(M), $$ 
	where $H(p) = -p\log p - (1-p)\log(1-p)$ is the binary entropy function.
\end{corollary}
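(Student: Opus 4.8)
The plan is to derive both asymptotic statements directly from the non-asymptotic bound proved in Theorem \ref{thm:bound:arbitrary:error}, namely
$$ r(\Code) \geq (s+t)\log(2^L-M-t) + t\log(M-s-t) - \log(t!(s+t)!), $$
by substituting the appropriate growth regime for the parameters and carefully tracking which terms survive in the limit. The two cases differ only in whether $s,t$ are held fixed or scale linearly with $M$, so I would treat them one at a time.

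For the first statement, with $s,t$ fixed and $M = 2^{\beta L}$ for fixed $0<\beta<1$, I would argue as follows. Since $M = 2^{\beta L} = o(2^L)$ and $t$ is a constant, we have $2^L - M - t \sim 2^L$, so $\log(2^L - M - t) = L + \log(1 - M/2^L - t/2^L) = L + o(1)$, using $\log(1-x) \to 0$ as $x\to 0$. Likewise $M - s - t \sim M$ gives $\log(M-s-t) = \log M + \log(1 - (s+t)/M) = \log M + o(1)$. Substituting these into the non-asymptotic bound yields
$$ r(\Code) \geq (s+t)L + t\log M - \log(t!(s+t)!) + o(1), $$
which is exactly the claimed bound. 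The term $-\log(t!(s+t)!)$ is constant and carries through unchanged.

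For the second statement, with $s = \sigma M$, $t = \tau M$ scaling linearly, the dominant work is in estimating the binomial term $\binom{M}{t+s} = \binom{M}{(\sigma+\tau)M}$ via Stirling rather than reusing the crude $-\log(t!(s+t)!)$ bound, and in re-expanding the two logarithms to $o(M)$ precision. I would go back to the exact redundancy expression $r(\Code) \geq \log\binom{2^L}{M} - \log\binom{2^L}{M-s} + \log\binom{M}{s+t} + \log\binom{2^L-M}{t}$ and apply $\log\binom{n}{k} = kH(k/n)\cdot n/(\text{appropriate form})$; more directly, I would use that for $k = \rho M$ with $M = o(2^L)$ one has $\log\binom{2^L-M}{\rho M} = \rho M (L - \log M + \log\e) + o(M)$, since $\binom{N}{k}\sim (Ne/k)^k$ when $k = o(\sqrt{N})$ and here $N = 2^L - M \sim 2^L$ so $\log(N/k) = L - \log(\rho M) + o(1) = L - \log M + O(1)$ per factor, with the $\log\e$ coming from the $e^k$ in the estimate. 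Collecting the $(\sigma+\tau)M$ sequences that must be newly placed (the $s+t$ error-free choices absorbed into the $\binom{M}{s+t}$ factor contribute the $MH(\sigma+\tau)$ entropy term) gives the stated form $(\sigma+\tau)M(L-\log M+\log\e) + MH(\sigma+\tau) + o(M)$.

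The main obstacle is the bookkeeping in the linear-scaling case: unlike the fixed-parameter case, the term $\log\binom{M}{s+t}$ is no longer a lower-order constant but contributes the leading entropy term $MH(\sigma+\tau)$, so one must use the full Stirling approximation $\log\binom{M}{\rho M} = MH(\rho) + o(M)$ rather than the loose factorial bound used in Theorem \ref{thm:bound:arbitrary:error}. I would therefore not simply plug into the already-stated redundancy inequality but instead re-derive the estimate from the cardinality bound $|\Code| \leq \binom{2^L}{M-s}/(\binom{M}{t+s}\binom{2^L-M}{t})$, being careful that $M = o(\sqrt{2^L})$ (which holds since $\beta<1$, so $M = 2^{\beta L}$ and $\sqrt{2^L}=2^{L/2}$, requiring $\beta<1/2$ for the sharpest form, or using the more robust estimate $\log\binom{N}{k}=k\log(N/k)+k\log\e+o(k)$ valid whenever $k=o(N)$) so that the asymptotic expansion of each binomial coefficient is justified.
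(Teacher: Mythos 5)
Your first claim (fixed $s,t$) is correct and is exactly the paper's intended derivation: substitute $M=2^{\beta L}$ into the redundancy bound of Theorem~\ref{thm:bound:arbitrary:error}, use $\log(2^L-M-t)=L+o(1)$ and $\log(M-s-t)=\log M+o(1)$, and carry the constant $-\log(t!(s+t)!)$ through unchanged.

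The linear-scaling case, however, has a genuine gap, and it sits exactly where you flag ``the main obstacle'': the $\Theta(M)$ bookkeeping is not actually carried out, and when it is carried out it does not terminate in the stated form. Two problems. First, your claimed expansion $\log\binom{2^L-M}{\rho M}=\rho M(L-\log M+\log \e)+o(M)$ is false as an equality: by Lemma~\ref{lemma:approx:binom} (which, incidentally, makes your worry about $\beta<1/2$ moot, since it only needs $k=o(N)$), the correct expansion is $\tau M(L-\log M-\log\tau+\log \e)+o(M)$; the term $-\tau M\log\tau$ that you absorb into ``$O(1)$ per factor'' is $\Theta(M)$, i.e.\ the same order as the $MH(\sigma+\tau)$ and $\log\e$ terms you are trying to pin down (dropping it is at least harmless here, because it is positive). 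Second, and more seriously, you never expand $\log\binom{2^L}{M}-\log\binom{2^L}{M-s}$. Doing so gives $\sigma M(L-\log M+\log \e)+(1-\sigma)M\log(1-\sigma)+o(M)$, and the \emph{negative} $\Theta(M)$ term $(1-\sigma)M\log(1-\sigma)$ is nowhere accounted for in your ``collecting'' step. Assembling all four binomials exactly, Theorem~\ref{thm:bound:arbitrary:error} yields
$$ r(\Code) \geq (\sigma+\tau)M(L-\log M+\log \e)+MH(\sigma+\tau)+M\bigl[(1-\sigma)\log(1-\sigma)-\tau\log\tau\bigr]+o(M), $$
and the bracketed quantity is neither $o(1)$ nor always nonnegative: for $\sigma=0.9$, $\tau=0.05$ it is $0.1\log 0.1-0.05\log 0.05\approx -0.116$. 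So your computation, done to the precision the statement demands, establishes the corollary only with the error term relaxed from $o(M)$ to $O(M)$ — which suffices for the leading-order conclusion $(\sigma+\tau)M(L-\log M)$ reported in Table~\ref{tab:construction}, but not for the claim verbatim. This is not merely your sloppiness: it shows that the plug-in route from Theorem~\ref{thm:bound:arbitrary:error} (which is also the paper's implicit route) cannot recover the stated $O(M)$-level constants for all admissible $(\sigma,\tau)$, so a fully rigorous proof of the corollary as written would need either to track and justify discarding these terms or to sharpen the sphere-packing argument itself.
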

This result is particularly interesting, due to the following consideration. Both lost sequences and erroneous sequences do not carry any useful information, since the erroneous sequences can be distorted by an arbitrary number of errors. However, unlike the lost sequence, the erroneous sequence cannot directly be detected by the decoder and therefore, compared to a loss of sequence, requires additional redundancy of roughly $\log M$ bits to be corrected. This result is analogous to the case of standard binary substitution-correcting block-codes of length $n$, where erasures require a redundancy of only a single symbol, and errors require roughly $\log n$ symbols of redundancy to be corrected. This analogy becomes particularly visible when sequences are indexed and protected by a standard substitution-correcting code, similarly to Construction \ref{con:index:rs} (see Section \ref{ss:index:based:construction:MDS}), but also holds for the general case of any $(s,t,\bullet)_\mathbb{L}$-correcting code. However, this seems to be not the case, when the number of lost sequences and erroneous sequences scales with $M$, since the redundancy only depends on $\sigma+\tau$.

In the following, we find code size upper bounds for the case of having a combination of a loss of $s$ sequences and only $\epsilon$ insertion errors inside $t$ sequences. We start by defining a quantity that will be useful for the formulation of the bound.
\begin{defn}
	The largest intersection of two $\epsilon$-insertion spheres of any two distinct words $\ve{x},\ve{y} \in \Sigma_2^L$ is denoted by
	$$N_\epsilon^\ins(L) = \underset{\ve{x},\ve{y} \in \Sigma_2^L}{\max} |S_\epsilon^\ins(\ve{x}) \cap S_\epsilon^\ins(\ve{y})|.$$
\end{defn}
Note that from \cite{Lev01} it is known that $N_\epsilon^\ins(L) = \sum_{i=0}^{\epsilon-1}\binom{L+\epsilon}{i}(1-(-1)^{\epsilon-i})$. The sphere packing bound is derived in the following theorem.
\begin{thm} \label{thm:bound:insertion} The cardinality of any $(s,t,\epsilon)_\ins$-correcting code $\Code \subseteq \mathcal{X}_M^L$ satisfies
	$$|\Code| \leq \frac{\binom{2^L}{M-s-t}\binom{2^{L+\epsilon}}{t}}{\binom{M}{s,t}\prod_{i=0}^{t-1}(S_\epsilon^\ins(L) - (s+i)N_\epsilon^\ins(L)}.$$
\end{thm}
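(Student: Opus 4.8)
The plan is to use a sphere-packing (volume) argument restricted to the ``extremal'' received words, namely those obtained by losing exactly $s$ sequences and injecting exactly $\epsilon$ insertions into each of exactly $t$ sequences. Every such received word $\mathcal{S}'$ consists of $M-s-t$ strands of length $L$ (the set $\SG$) together with $t$ strands of length $L+\epsilon$ (the set $\SE'$), and since strands of different lengths are automatically distinct, all of these words live in a ground set of size $\binom{2^L}{M-s-t}\binom{2^{L+\epsilon}}{t}$. Because $\Code$ is $(s,t,\epsilon)_\ins$-correcting, the balls $B_{s,t,\epsilon}^\ins(\mathcal{S})$ are pairwise disjoint (Definition~\ref{def:correcting:code}), so I would bound $|\Code|\cdot \min_{\mathcal{S}}|B_{s,t,\epsilon}^\ins(\mathcal{S})\cap(\text{extremal words})|$ from above by $\binom{2^L}{M-s-t}\binom{2^{L+\epsilon}}{t}$. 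It then remains to prove the lower bound $|B_{s,t,\epsilon}^\ins(\mathcal{S})\cap(\text{extremal words})|\geq \binom{M}{s,t}\prod_{i=0}^{t-1}\bigl(S_\epsilon^\ins(L)-(s+i)N_\epsilon^\ins(L)\bigr)$ for every $\mathcal{S}\in\mathcal{X}_M^L$.

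To produce this many distinct extremal words I would use a greedy construction. First choose an (unordered) partition of $\mathcal{S}$ into the lost set $\SL$ ($|\SL|=s$), the erroneous set $\SE$ ($|\SE|=t$), and the correct set $\SG$; there are $\binom{M}{s,t}$ such partitions. After fixing an arbitrary order $\ve{x}_{e_1},\dots,\ve{x}_{e_t}$ of $\SE$, I would pick the insertion outcome $\ve{y}_i\in S_\epsilon^\ins(\ve{x}_{e_i})$ for $i=1,\dots,t$ one at a time, requiring $\ve{y}_i$ to avoid the insertion spheres $S_\epsilon^\ins(\ve{z})$ of every lost strand $\ve{z}\in\SL$ and of every previously processed erroneous strand $\ve{x}_{e_1},\dots,\ve{x}_{e_{i-1}}$. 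Since each excluded sphere removes at most $N_\epsilon^\ins(L)$ points from $S_\epsilon^\ins(\ve{x}_{e_i})$ (which has the center-independent size $S_\epsilon^\ins(L)$), at least $S_\epsilon^\ins(L)-(s+i-1)N_\epsilon^\ins(L)$ choices remain at step $i$, so the number of ordered tuples is at least $\prod_{i=0}^{t-1}(S_\epsilon^\ins(L)-(s+i)N_\epsilon^\ins(L))$; this is where one needs these factors to be nonnegative, i.e.\ to be in the meaningful parameter regime.

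The crux is to show that these constructions yield genuinely distinct received words, so that the counts do not overcount. I would argue this in two steps. First, a received extremal word $\mathcal{S}'$ reveals $\SG$ as its length-$L$ part and $\SE'=\{\ve{y}_1,\dots,\ve{y}_t\}$ as its length-$(L+\epsilon)$ part, hence $R=\mathcal{S}\setminus\SG$; and because no chosen outcome lies in the insertion sphere of any lost strand, no $\ve{y}_i$ can be attributed to a strand of $R$ outside $\SE$. This forces the split of $R$ into $\SL$ and $\SE$ to be recoverable from $\mathcal{S}'$, so each extremal word is produced by exactly one partition. Second, within a fixed partition the avoidance of earlier erroneous spheres makes the bipartite reachability graph between $\SE'$ and $\SE$ triangular, in that $\ve{y}_i$ is not reachable from $\ve{x}_{e_j}$ for $j<i$. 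A staircase (back-to-front) elimination then shows this graph has a unique perfect matching, namely $\ve{y}_i\leftrightarrow\ve{x}_{e_i}$, so the ordered tuple $(\ve{y}_1,\dots,\ve{y}_t)$ can be reconstructed from the unordered set $\SE'$. Consequently distinct tuples give distinct sets $\SE'$, and summing the per-partition counts yields the claimed lower bound on the ball size.

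Combining the volume inequality from the first step with this lower bound gives $|\Code|\cdot\binom{M}{s,t}\prod_{i=0}^{t-1}(S_\epsilon^\ins(L)-(s+i)N_\epsilon^\ins(L))\leq\binom{2^L}{M-s-t}\binom{2^{L+\epsilon}}{t}$, which rearranges to the statement. I expect the distinctness and unique-matching step to be the main obstacle, since it is precisely what justifies subtracting whole spheres $(s+i)N_\epsilon^\ins(L)$ rather than merely distinct points; keeping the bookkeeping of partitions, orderings, and matchings consistent is the delicate part, whereas the greedy count and the final volume bound are routine.
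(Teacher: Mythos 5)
Your proposal is correct and follows essentially the same route as the paper's proof: a sphere-packing argument over the received sets with $M-s-t$ strands of length $L$ and $t$ strands of length $L+\epsilon$, combined with a greedy lower bound on the ball size obtained by choosing a partition ($\binom{M}{s,t}$ ways) and then picking each insertion outcome outside the spheres of the lost strands and of the previously processed erroneous strands. Your staircase/unique-matching argument is a correct and somewhat more explicit justification of the distinctness step that the paper dispatches with the brief remark that $\SE'\neq\tilde{\SE}'$ ``due to the choice of the sequences in the set $\SE'$.''
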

\begin{proof}
	We prove the theorem by bounding the error ball size $|B^\ins_{s,t,\epsilon}(\mathcal{S})|$ from below for all $\mathcal{S}$, which yields an upper bound on the cardinality of $(s,t,\epsilon)_\ins$-correcting codes by a sphere packing argument. Distinct elements $S' \in B^\ins_{s,t,\epsilon}(\mathcal{S})$ of the error ball can be found in the following way. First, choose two distinct sets $\SL,\SE=\{\ve{x}_{e_1}, \dots, \ve{x}_{e_t} \} \subseteq \mathcal{S}$ with $|\SL| = s$ and $|\SE| = t$. Further choose the set of erroneous sequences $\SE' = \{\ve{x}_1', \dots, \ve{x}_t' \}$ such that
	$$ \ve{x}_i' \in S_\epsilon^\ins(\ve{x}_{e_i}) \bigg\backslash \Bigg(\bigcup_{\ve{y} \in \mathcal{P}_i} S_\epsilon^\ins(\ve{y})\Bigg)$$
	as illustrated in Fig. \ref{fig:illustration:erroneous:sequences:insertion}, where $\mathcal{P}_i = \SL \cup \{\ve{x}_{e_1}, \dots,\ve{x}_{e_{i-1}}  \}$. The received set $\mathcal{S}'$ is then constructed by $\mathcal{S}' = \SG\cup \SE'$, where $\SG = \mathcal{S}\setminus (\SL\cup\SE)$ are the error-free sequences, as in Definition \ref{def:dna:channel:error:ball}. We will show that each choice $\SL,\SE,\SE'$ leads to a unique element in $B^\ins_{s,t,\epsilon}(\mathcal{S})$. Denote by $\SL,\SE,\SE'$ and $\tilde{\SL},\tilde{\SE},\tilde{\SE}'$ two different choices and let $\mathcal{S}'$ and $\tilde{\mathcal{S}}'$ be the corresponding received sets. If $\SL \cup \SE \neq \tilde{\SL} \cup \mathcal{\tilde{F}}$, it directly follows that $\mathcal{S}'\neq \tilde{\mathcal{S}}'$, since the error-free sequences are different. However, if $\SL \cup \SE = \tilde{\SL} \cup \mathcal{\tilde{F}}$, it follows that $\SE' \neq \tilde{\SE}'$ due to the choice of the sequences in the set $\SE'$. Therefore, two different choices of the sets $\SL,\SE,\SE'$ yield different elements in $B^\ins_{s,t,\epsilon}(\mathcal{S})$. The number of possible sets $\SL,\SE$ is $\binom{M}{s,t}$. For each $\ve{x}_{e_i} \in \SE$, we have at least $S_\epsilon^\ins(L) - (s+i)N_\epsilon^\ins(L)$ possibilities to choose the erroneous outcome $\ve{x}_i'$, since there are $S_\epsilon^\ins(L)$ sequences in $S_\epsilon^\ins(\ve{x}_{e_i})$ and at most $(s+i)N_\epsilon^\ins(L)$ of them are in common with elements of the insertion spheres of $\mathcal{P}_i$. Hence, in total, there are $\binom{M}{s,t}\prod_{i=0}^{t-1}(S_\epsilon^\ins(L) - (s+i)N_\epsilon^\ins(L))$ ways to choose $\SL, \SE, \SE'$ and therefore $|B^\ins_{s,t,\epsilon}(\mathcal{S})|\geq\binom{M}{s,t}\prod_{i=0}^{t-1}(S_\epsilon^\ins(L)-(s+i)N_\epsilon^\ins(L))$ for all $\mathcal{S} \in \mathcal{X}_M^L$. Each such created received set $\mathcal{S}'$ consists of $M-s-t$ sequences of length $L$ and $t$ sequences of length $L+\epsilon$. There are in total $\binom{2^L}{M-s-t}\binom{2^{L+\epsilon}}{t}$ such sets, which yields the theorem by a sphere packing argument.
\end{proof}
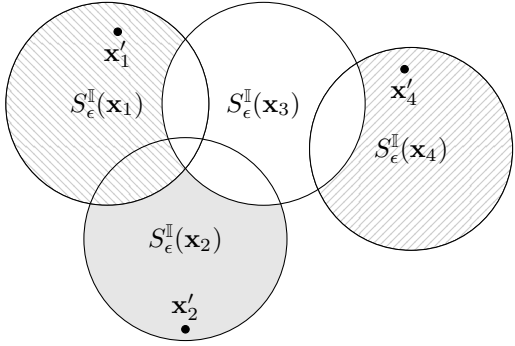
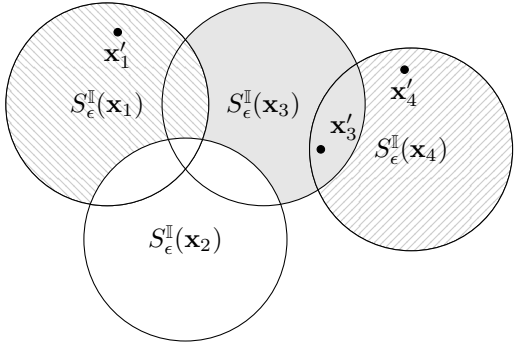
\begin{figure}
	\def\firstcircle{(150:1.2cm) circle (1.35cm)}
	\def\secondcircle{(270:1.2cm) circle (1.35cm)}
	\def\thirdcircle{(30:1.2cm) circle (1.35cm)}
	\def\fourthcircle{(0:3.0cm) circle (1.35cm)}
	
	\begin{minipage}{\linewidth}
		\centering
		\begin{tikzpicture}
		\begin{scope}[even odd rule]
		\clip \thirdcircle \firstcircle;
		\draw[pattern=north west lines,pattern color=black!20!] \firstcircle;
		\end{scope}
		\begin{scope}[even odd rule]
		\clip \firstcircle \secondcircle;
		\clip \thirdcircle \secondcircle;
		\fill[fill=black!10!] \secondcircle;
		\end{scope}
		\begin{scope}[even odd rule]
		\clip \thirdcircle \fourthcircle;
		\draw[pattern=north east lines,pattern color=black!20!] \fourthcircle;
		\end{scope}
		\draw \firstcircle node {$S_\epsilon^\ins(\ve{x}_{1})$};
		\draw \secondcircle node {$S_\epsilon^\ins(\ve{x}_{2})$};
		\draw \thirdcircle node{$S_\epsilon^\ins(\ve{x}_3)$};
		\draw \fourthcircle node{$S_\epsilon^\ins(\ve{x}_4)$};
		
		\draw[fill] (120:1.8) circle (0.05cm) node [below] {$\ve{x}_1'$};
		\draw[fill] (270:2.4) circle (0.05cm) node [above] {$\ve{x}_2'$};
		\draw[fill] (20:3.1) circle (0.05cm) node [below] {$\ve{x}_4'$};
		\end{tikzpicture}%
		\subcaption{Exemplary case: $\ve{x}_1, \ve{x}_2, \ve{x}_4 \in \SE$, and $\ve{x}_3 \in \SL$}
	\end{minipage}\vspace{.2cm}
	\begin{minipage}{\linewidth}
		\centering
		\begin{tikzpicture}
		\begin{scope}[even odd rule]
		\clip \secondcircle \firstcircle;
		\draw[pattern=north west lines,pattern color=black!20!] \firstcircle;
		\end{scope}
		\begin{scope}[even odd rule]
		\clip \firstcircle \thirdcircle;
		\clip \secondcircle \thirdcircle;
		\fill[fill=black!10!] \thirdcircle;
		\end{scope}
		\begin{scope}[even odd rule]
		\clip \thirdcircle \fourthcircle;
		\draw[pattern=north east lines,pattern color=black!20!] \fourthcircle;
		\end{scope}
		\draw \firstcircle node {$S_\epsilon^\ins(\ve{x}_{1})$};
		\draw \secondcircle node {$S_\epsilon^\ins(\ve{x}_{2})$};
		\draw \thirdcircle node{$S_\epsilon^\ins(\ve{x}_3)$};
		\draw \fourthcircle node{$S_\epsilon^\ins(\ve{x}_4)$};
		
		\draw[fill] (120:1.8) circle (0.05) node [below] {$\ve{x}_1'$};
		\draw[fill] (0:1.8) circle (0.05) node [above right] {$\ve{x}_3'$};
		\draw[fill] (20:3.1) circle (0.05) node [below] {$\ve{x}_4'$};
		\end{tikzpicture}%
		\subcaption{Exemplary case: $\ve{x}_1, \ve{x}_3, \ve{x}_4 \in \SE$, and $\ve{x}_2 \in \SL$}
	\end{minipage}
	\caption{Illustration for the choice of $\SE'$ in the proof of Theorem \ref{thm:bound:insertion}. The erroneous outcomes are chosen out of the corresponding error spheres, which are highlighted in gray.}
	\label{fig:illustration:erroneous:sequences:insertion}
\end{figure}
Note that, Theorem \ref{thm:bound:insertion} provides a valid upper bound for any parameter $M,L,s,t,\epsilon$. For the case of deletion errors or combinations of insertions and deletions, formulating a sphere packing bound based on the minimum error ball size yields a weak bound, since the minimum deletion ball size is $|B_\epsilon^\del(\ve{0})| = \epsilon+1$. Therefore, a conservative analysis similar to Theorem \ref{thm:bound:insertion} would yield unsatisfactory results. However, an asymptotic analysis, which yields asymptotically tighter bounds is possible, as we will see in Theorem \ref{thm:deletion:asymptotic}. 
\subsection{Asymptotic Bounds for Substitution Errors} \label{ss:asmyptotic:boubds:substitutions}
We now derive asymptotic sphere packing bounds for large numbers of sequences $M$ on the code size for $(s,t,\epsilon)_\sub$-correcting codes, which depicts the case of only substitution errors inside the sequences. As discussed before, the error ball sizes depend on the center $\mathcal{S}$. However, as it turns out, asymptotically the error balls have similar sizes. We will start by finding a lower bound on the error ball size for a set $\mathcal{S}$.
\begin{lemma} \label{lemma:substitutions:subset}
	Let $\mathcal{Y} \subseteq \mathcal{S} \in \mathcal{X}_M^L$ be an $\epsilon$-substitution-correcting code, i.e. $B_\epsilon^\sub(\ve{y}_1) \cap B_\epsilon^\sub(\ve{y}_2) = \emptyset$ for all $\ve{y}_1, \ve{y}_2 \in \mathcal{Y}$ and $\ve{y}_1 \neq \ve{y}_2$. Further, let $s+t\leq |\mathcal{Y}|$. Then,
	$$|B_{s,t,\epsilon}^\sub(\mathcal{S})| \geq \binom{|\mathcal{Y}|}{s,t}  \left(B_{\epsilon}^\sub(L)-1\right)^{t}.$$
\end{lemma}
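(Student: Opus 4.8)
The plan is to prove the bound by exhibiting an explicit subset of $B_{s,t,\epsilon}^\sub(\mathcal{S})$ of the claimed cardinality, obtained by only ever losing or corrupting sequences that belong to the substitution code $\mathcal{Y}$. The reason for restricting the affected sequences to $\mathcal{Y}$ is that the disjointness of the balls $B_\epsilon^\sub(\ve{y})$, $\ve{y}\in\mathcal{Y}$, is exactly what will let me reconstruct the underlying choice from a received set, and hence argue that distinct choices yield distinct received sets.

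First I would set up the construction. Choose an ordered pair of disjoint subsets $\SL,\SE\subseteq\mathcal{Y}$ with $|\SL|=s$ and $|\SE|=t$; this is possible since $s+t\le|\mathcal{Y}|$, and the number of such choices is $\binom{|\mathcal{Y}|}{s}\binom{|\mathcal{Y}|-s}{t}=\binom{|\mathcal{Y}|}{s,t}$. Set $\SG=\mathcal{S}\setminus(\SL\cup\SE)$ as the untouched sequences, and for every $\ve{y}\in\SE$ pick an erroneous outcome $\ve{z}\in B_\epsilon^\sub(\ve{y})\setminus\{\ve{y}\}$, which offers $B_\epsilon^\sub(L)-1$ possibilities per sequence, hence $(B_\epsilon^\sub(L)-1)^t$ in total. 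The received set is $\mathcal{S}'=\SG\cup\{\ve{z}:\ve{y}\in\SE\}$, which lies in $B_{s,t,\epsilon}^\sub(\mathcal{S})$ by Definition \ref{def:dna:channel:error:ball}. The number of configurations is the claimed $\binom{|\mathcal{Y}|}{s,t}(B_\epsilon^\sub(L)-1)^t$, so it remains to show that different configurations produce different sets $\mathcal{S}'$.

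The distinctness argument would proceed by reconstructing the configuration from $\mathcal{S}'$, assuming $\mathcal{S}$ and $\mathcal{Y}$ are known. Since a corrupted $\ve{y}$ maps to some $\ve{z}\neq\ve{y}$ and, by disjointness of the $\mathcal{Y}$-balls, $\ve{z}\notin\mathcal{Y}$, no corruption can re-create an element of $\mathcal{Y}$; therefore the set of affected sequences is recovered as $\SL\cup\SE=\{\ve{y}\in\mathcal{Y}:\ve{y}\notin\mathcal{S}'\}$. To separate the lost sequences from the corrupted ones and to read off the outcomes, I would inspect, for each affected $\ve{y}$, the intersection $\mathcal{S}'\cap B_\epsilon^\sub(\ve{y})$: because the balls around distinct elements of $\mathcal{Y}$ are disjoint, the outcome $\ve{z}$ of a corrupted $\ve{y}$ is the unique element of $\mathcal{S}'$ falling into $B_\epsilon^\sub(\ve{y})$, whereas a lost $\ve{y}$ leaves this intersection empty. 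This identifies $\SE$, the outcomes, and hence $\SL=(\SL\cup\SE)\setminus\SE$, establishing injectivity.

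The main obstacle is precisely this last injectivity step. The clean dichotomy ``corrupted $\Rightarrow$ $\mathcal{S}'\cap B_\epsilon^\sub(\ve{y})$ nonempty, lost $\Rightarrow$ empty'' is immediate only when no untouched sequence lies inside $B_\epsilon^\sub(\ve{y})$. The disjointness of the $\mathcal{Y}$-balls rules out untouched elements of $\mathcal{Y}$ as well as the outcomes of the other corrupted sequences, so the delicate point is whether an untouched sequence of $\mathcal{S}\setminus\mathcal{Y}$, or an outcome $\ve{z}$ that happens to coincide with such a sequence, can blur the distinction between a lost and a corrupted $\ve{y}$. Pinning down this collision case---using the disjoint-ball structure to localize every erroneous outcome to its unique source in $\mathcal{Y}$---is where the real work lies, and it is the step that genuinely exploits the hypothesis that $\mathcal{Y}$ is an $\epsilon$-substitution-correcting code rather than an arbitrary subset of $\mathcal{S}$.
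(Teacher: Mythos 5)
Your construction and counting are the same as the paper's, but the ``main obstacle'' you flag at the end is not a deferred technicality: it is exactly where the argument breaks, and the map you define is genuinely non-injective, so the proposal does not prove the lemma. Concretely, the lemma permits $\mathcal{S}\setminus\mathcal{Y}\neq\emptyset$ (and in the application in Theorem \ref{thm:substitutions:asymptotic} it is nonempty), so untouched sequences of $\mathcal{S}$ may sit inside the balls $B_\epsilon^\sub(\ve{y})$, $\ve{y}\in\mathcal{Y}$. Take $s=0$, $t=1$ and two distinct $\ve{x}_1,\ve{x}_2\in(\mathcal{S}\setminus\mathcal{Y})\cap B_\epsilon^\sub(\ve{y})$: corrupting $\ve{y}$ into $\ve{x}_1$ and corrupting $\ve{y}$ into $\ve{x}_2$ both yield $\mathcal{S}'=\mathcal{S}\setminus\{\ve{y}\}$. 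Or take $s=t=1$ with $\ve{x}\in(\mathcal{S}\setminus\mathcal{Y})\cap B_\epsilon^\sub(\ve{y}_2)$ and $\ve{x}'\in(\mathcal{S}\setminus\mathcal{Y})\cap B_\epsilon^\sub(\ve{y}_1)$: the configuration $\SL=\{\ve{y}_1\}$, $\SE=\{\ve{y}_2\}$ with outcome $\ve{x}$, and the configuration $\SL=\{\ve{y}_2\}$, $\SE=\{\ve{y}_1\}$ with outcome $\ve{x}'$, both produce $\mathcal{S}'=\mathcal{S}\setminus\{\ve{y}_1,\ve{y}_2\}$. Hence the number of configurations strictly exceeds the number of distinct received sets and the claimed lower bound does not follow. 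The obvious repair---restricting outcomes to $B_\epsilon^\sub(\ve{y})\setminus\mathcal{S}$---restores injectivity but only gives $B_\epsilon^\sub(L)-1-|(\mathcal{S}\setminus\{\ve{y}\})\cap B_\epsilon^\sub(\ve{y})|$ choices per corrupted sequence, which is weaker than the stated $\left(B_\epsilon^\sub(L)-1\right)^t$.

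The paper closes precisely this hole with a role-swapping device that you are missing. A configuration is $(\SL,\SE,\ve{E})$ with $\SL,\SE\subseteq\mathcal{Y}$ disjoint and $\ve{E}=(\ve{e}_1,\dots,\ve{e}_t)$ a tuple of nonzero error vectors of weight at most $\epsilon$, so the count is still $\binom{|\mathcal{Y}|}{s,t}\left(B_\epsilon^\sub(L)-1\right)^t$; but the realized error pattern depends on whether $\ve{y}_{e_i}+\ve{e}_i$ already lies in $\mathcal{S}$. If $\ve{y}_{e_i}+\ve{e}_i\notin\mathcal{S}$, corrupt $\ve{y}_{e_i}$ into $\ve{y}_{e_i}+\ve{e}_i$ as you do; if $\ve{y}_{e_i}+\ve{e}_i\in\mathcal{S}$, instead corrupt the existing sequence $\ve{y}_{e_i}+\ve{e}_i$ into $\ve{y}_{e_i}$ (add $-\ve{e}_i$). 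Writing $\mathcal{Y}_i=\mathcal{S}\cap B_\epsilon^\sub(\ve{y}_i)$ and $\mathcal{Y}_i'=\mathcal{S}'\cap B_\epsilon^\sub(\ve{y}_i)$, the three cases now leave pairwise distinguishable fingerprints: untouched gives $\mathcal{Y}_i'=\mathcal{Y}_i$; lost gives $\mathcal{Y}_i'=\mathcal{Y}_i\setminus\{\ve{y}_i\}$ (center gone, nothing new); corrupted gives either $(\mathcal{Y}_i\setminus\{\ve{y}_i\})\cup\{\ve{y}_i'\}$ with $\ve{y}_i'\notin\mathcal{S}$ (a new element appears) or $\mathcal{Y}_i\setminus\{\ve{x}\}$ with $\ve{x}\neq\ve{y}_i$ (center still present, a non-center element gone). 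Since all modifications are confined to the disjoint balls, $\mathcal{S}'$ determines every $\mathcal{Y}_i'$, from which $(\SL,\SE,\ve{E})$ is recovered, so the map is injective and the full count survives. This swap is the one idea needed to turn your outline into a proof.
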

\begin{proof}
	A lower bound for $|B_{s,t,\epsilon}^\sub(\mathcal{S})|$ will be proven by identifying and counting specific patterns of a loss of sequences and errors in sequences that lead to distinct channel outputs $\mathcal{S}' \in B_{s,t,\epsilon}^\sub(\mathcal{S})$. Throughout this proof, we impose a lexicographic ordering onto the sequences in $\Sigma_2^L$, which means that, writing $\mathcal{A}= \{\ve{a}_1, \dots, \ve{a}_{|\mathcal{A}|} \}$ for any set $\mathcal{A} \subseteq \Sigma_2^L$ uniquely determines each element $\ve{a}_i$. The sets of stored sequences in the error balls around the elements in $\mathcal{Y}$ are denoted by $\mathcal{Y}_i = \mathcal{S} \cap B_\epsilon^{\sub}(\ve{y}_i)$. Similarly, the sets of received sequences in these error balls are $\mathcal{Y}'_i = \mathcal{S}' \cap B_\epsilon^{\sub}(\ve{y}_i)$. Note that the sets $B_\epsilon^{\sub}(\ve{y}_i)$ and thus also the sets $\mathcal{Y}_i$ are distinct, since $\mathcal{Y}$ is an $\epsilon$-substitution-correcting code. We further define the selector function for sequences $\ve{a},\ve{b}, \ve{x} \in \Sigma_2^L$ as
	$$ \mathbb{I}_\ve{x}^\mathcal{S}(\ve{a},\ve{b}) = \left\{\begin{array}{ll}
	\ve{a}, & \text{if } \ve{x} \notin \mathcal{S} \\
	\ve{b}, & \text{otherwise}
	\end{array} \right. . $$
	The distinct channel outputs $\mathcal{S}' \in B_{s,t,\epsilon}^\sub(\mathcal{S})$ are obtained in the following manner. First, choose two distinct sets $\SL  \subseteq \mathcal{Y}$ with $|\SL| = s$ and $\SE = \{\ve{y}_{e_1}, \dots, \ve{y}_{e_t} \} \subseteq \mathcal{Y}$ with $|\SE| = t$ and a collection of error vectors $\ve{E} = (\ve{e}_1, \ve{e}_2, \dots, \ve{e}_t)$, where $\ve{e}_j \in \Sigma_2^L$ are non-zero error vectors of weight at most $\epsilon$. We will show that for each choice of $\SL, \SE$, and $\ve{E}$ we obtain a unique point $\mathcal{S}' \in B_{s,t,\epsilon}^\sub(\mathcal{S})$ in the following manner. First, all sequences in $\SL$ are lost. Let $\ve{y}_i' \triangleq \ve{y}_{e_i} + \ve{e}_i$. The set $\SE$ of erroneous sequences is chosen as
	$$ \SE = \bigcup_{i=1}^t \left\{\mathbb{I}_{\ve{y}_i'}^\mathcal{S}(\ve{y}_{e_i}, \ve{y}_i')\right\}. $$
	In other words, if $\ve{y}_i' \notin \mathcal{S}$ we choose the sequence, which will be distorted by errors to be $\ve{y}_{e_i}$ and otherwise we choose it to be exactly $\ve{y}_i'$. The erroneous outcomes of the sequences in $\SE$ are now constructed by
	$$ \SE' = \bigcup_{i=1}^t \left\{\mathbb{I}_{\ve{y}_i'}^\mathcal{S}(\ve{y}_i', \ve{y}_{e_i})\right\}. $$
	That is if $\ve{y}_i' \notin \mathcal{S}$, we have $\ve{y}_{e_i} \in \SE$ and we add $\ve{e}_i$ to that sequence to obtain $\ve{y}_i'\in \SE'$. If $\ve{y}_i' \in \mathcal{S}$, $\ve{y}_i' \in \SE$ is the sequence which is distorted and we add $-\ve{e}_i$, resulting in $\ve{y}_{e_i} \in \SE'$. It is very important to note that by this choice of error patterns, the erroneous sequence $\ve{y}_i' \in B_\epsilon^\sub(\ve{y}_{e_i})$ and therefore will never be present in another error ball $B_\epsilon^\sub(\ve{y}), \ve{y} \in \mathcal{Y} \setminus \{\ve{y}_{e_i}\}$, since $\mathcal{Y}$ is an $\epsilon$-substitution-correcting code. The received set is now $\mathcal{S}' = \SG \cup \SE'$, where $\SG = \mathcal{S} \setminus (\SL\cup \SE)$ are the error-free sequences, as in Definition \ref{def:dna:channel:error:ball}. We will show now that two choices $\SL, \SE, \ve{E}$ and $\tilde{\SL}, \tilde{\SE}, \tilde{\ve{E}}$ yield different received sets $\mathcal{S}'$ and $\tilde{\mathcal{S}}'$, if (and only if) they differ in at least one of the components, i.e., $\SL \neq \tilde{\SL}$, $\SE \neq \tilde{\SE}$, or $\ve{E} \neq \tilde{\ve{E}}$. We distinguish between the following three different cases (visualized in Fig. \ref{fig:error:patterns}) and the resulting received parts $\mathcal{Y}_i'$
	\begin{itemize}
		\item $\ve{y}_i \in \mathcal{Y}\setminus (\SL\cup \SE): \mathcal{Y}_i' = \mathcal{Y}_i $,
		\item $\ve{y}_i \in \SL: \mathcal{Y}_i' = \mathcal{Y}_i \setminus \{\ve{y}_i\}$,
		\item \makebox[1.4cm][l]{$\ve{y}_i \in \SE:$} $\mathcal{Y}_i' = (\mathcal{Y}_i\setminus \{\ve{y}_i\}) \cup \{\ve{y}_i' \}$ or 
		\item[]\makebox[1.54cm][l]{}$\mathcal{Y}_i' = \mathcal{Y}_i \setminus \{\ve{x}\}$,
	\end{itemize}
	where $\ve{y}_i' \in B_\epsilon^\sub(\ve{y}_i) \setminus \mathcal{S}$ and $\ve{x} \in \mathcal{Y}_i \setminus \{\ve{y}_i\}$. By comparing the outputs $\mathcal{Y}_i'$ for these three cases, it is verified that for any two different cases, $\mathcal{Y}_i'$ can never be the same. Now, if $\SL \neq \tilde{\SL}$ there is at least one $i$ such that $\ve{y}_i \in \SL$ and $\ve{y}_i \notin \tilde{\SL}$ and if $\SE \neq \tilde{\SE}$ there is at least one $i$ such that $\ve{y}_i \in \SE$ and $\ve{y}_i \notin \tilde{\SE}$ and therefore it follows that $\mathcal{Y}_i' \neq \tilde{\mathcal{Y}}_i'$ and $\mathcal{S}' \neq \tilde{\mathcal{S}}'$. Further, if $\SL = \tilde{\SL}$ and $\SE = \tilde{\SE}$, but $\ve{E}\neq \tilde{\ve{E}}$, there is at least one $i$ with $\ve{e}_i \neq \tilde{\ve{e}}_i$ and thus $\mathcal{Y}_i' \neq \tilde{\mathcal{Y}}_i'$. This proves that each $\SL, \SE, \ve{E}$ yields a unique point in $B_{s,t,\epsilon}^\sub(\mathcal{S})$. Finally, there are $\binom{|\mathcal{Y}|}{s,t}$ possible solutions to choose the sets $\SL$ and $\SE$ and $(B_\epsilon^\sub(L)-1)^t$ error patterns $\ve{E}$.
\end{proof}
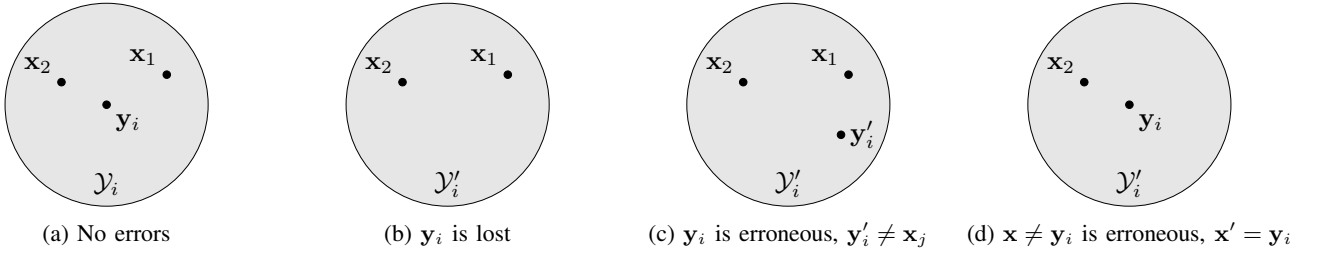
\begin{figure*}
	\begin{minipage}{.25\linewidth}
			\centering
			\begin{tikzpicture}[line width=0.125mm]
				\filldraw[fill=black!10!] (0,0) circle (1.35);
				\draw[fill] (0,0) circle (0.05) node [below right] {$\ve{y}_i$};
				\draw[fill] (0.8,0.4) circle (0.05) node [above left] {$\ve{x}_1$};
				\draw[fill] (-0.6,0.3) circle (0.05) node [above left] {$\ve{x}_2$};
				\draw (0.0,-1.05) node {$\mathcal{Y}_i$};
		\end{tikzpicture}
		\subcaption{No errors}
	\end{minipage}%
	\begin{minipage}{.25\linewidth}
		\centering
		\begin{tikzpicture}[line width=0.125mm]
			\filldraw[fill=black!10!] (0,0) circle (1.35);
			\draw[fill] (0.8,0.4) circle (0.05) node [above left] {$\ve{x}_1$};
			\draw[fill] (-0.6,0.3) circle (0.05) node [above left] {$\ve{x}_2$};
			\draw (0.0,-1.05) node {$\mathcal{Y}_i'$};
		\end{tikzpicture}
		\subcaption{$\ve{y}_i$ is lost}
	\end{minipage}%
	\begin{minipage}{.25\linewidth}
		\centering
		\begin{tikzpicture}[line width=0.125mm]
			\filldraw[fill=black!10!] (0,0) circle (1.35);
			\draw[fill] (0.7,-0.4) circle (0.05) node [right] {$\ve{y}_i'$};
			\draw[fill] (0.8,0.4) circle (0.05) node [above left] {$\ve{x}_1$};
			\draw[fill] (-0.6,0.3) circle (0.05) node [above left] {$\ve{x}_2$};
			\draw (0.0,-1.05) node {$\mathcal{Y}_i'$};
		\end{tikzpicture}
		\subcaption{$\ve{y}_i$ is erroneous, $\ve{y}_i' \neq \ve{x}_j$}
	\end{minipage}%
	\begin{minipage}{.25\linewidth}
		\centering
		\begin{tikzpicture}[line width=0.125mm]
			\filldraw[fill=black!10!] (0,0) circle (1.35);
			\draw[fill] (0,0) circle (0.05) node [below right] {$\ve{y}_i$};
			\draw[fill] (-0.6,0.3) circle (0.05) node [above left] {$\ve{x}_2$};
			\draw (0.0,-1.05) node {$\mathcal{Y}_i'$};
		\end{tikzpicture}
		\subcaption{$\ve{x} \neq \ve{y}_i$ is erroneous, $\ve{x}' = \ve{y}_i$}
	\end{minipage}%
	\caption{Cases for error patterns in Lemma \ref{lemma:substitutions:subset}}
	\label{fig:error:patterns}
\end{figure*}
This means, that if a set $\mathcal{S} \in \mathcal{X}_M^L$ contains an $\epsilon$-substitution-correcting code $\mathcal{Y}$ with cardinality $|\mathcal{Y}|$, the error ball has size at least $|B_{s,t,\epsilon}^\sub(\mathcal{S})| \geq \binom{|\mathcal{Y}|}{s,t}  B_{\epsilon}^\sub(L)^{t}$. Interestingly, for an appropriate choice of parameters, most of the sets $\mathcal{S} \in \mathcal{X}_M^L$ have the property of containing a large $\epsilon$-error-correcting code. To establish the fact, we need the following lemma.
\begin{lemma} \label{lemma:non:ecc:sets}
	Let $\mathcal{Y}\subseteq\mathcal{S}$ be the largest $\epsilon$-error-correcting code (error type $\ET$) with $B_\epsilon^\ET(\ve{y}_1) \cap B_\epsilon^\ET(\ve{y}_2) = \emptyset$ for all $\ve{y}_1, \ve{y}_2 \in \mathcal{Y}$ and $\ve{y}_1 \neq \ve{y}_2$. The number of sets $\mathcal{S} \subseteq \mathcal{X}_M^L$ with $|\mathcal{Y}| \leq K$, denoted as $D(K)$, is at most
	$$ D(K) \leq \binom{2^L}{K} \binom{KV^\ET_{\epsilon}}{M-K}, $$
	where
	$$V_\epsilon^\ET =  \underset{\ve{x} \in \Sigma_2^L}{\max} \, |\{ \ve{y} \in \Sigma_2^L: B_\epsilon^\ET(\ve{x}) \cap B_\epsilon^\ET(\ve{y}) \neq \emptyset \}|$$
	is the maximum number of sequences $\ve{y} \in \Sigma_2^L$ that have intersecting error balls $B_\epsilon^\ET(\cdot)$ with any $\ve{x} \in \Sigma_2^L$.
\end{lemma}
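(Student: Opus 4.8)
The plan is to exploit the maximality of $\mathcal{Y}$. Since $\mathcal{Y} \subseteq \mathcal{S}$ is a \emph{largest} $\epsilon$-error-correcting code contained in $\mathcal{S}$, it is in particular maximal under inclusion: no sequence of $\mathcal{S} \setminus \mathcal{Y}$ can be adjoined to $\mathcal{Y}$ without destroying the pairwise disjointness of the error balls. Hence every $\ve{x} \in \mathcal{S}$ satisfies $B_\epsilon^\ET(\ve{x}) \cap B_\epsilon^\ET(\ve{y}) \neq \emptyset$ for at least one $\ve{y} \in \mathcal{Y}$ (trivially for $\ve{x} \in \mathcal{Y}$, and by maximality for $\ve{x} \in \mathcal{S}\setminus\mathcal{Y}$). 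Writing $N(\mathcal{Y}) = \{\ve{z} \in \Sigma_2^L : B_\epsilon^\ET(\ve{z}) \cap B_\epsilon^\ET(\ve{y}) \neq \emptyset \text{ for some } \ve{y} \in \mathcal{Y}\}$ for the neighbourhood of $\mathcal{Y}$, this says precisely that $\mathcal{S} \subseteq N(\mathcal{Y})$, and a union bound together with the definition of $V_\epsilon^\ET$ gives $|N(\mathcal{Y})| \leq |\mathcal{Y}| \cdot V_\epsilon^\ET$.

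Next I would turn this covering property into a counting bound. The subtlety is that $|\mathcal{Y}|$ may be any value $k \leq K$, whereas the target bound only features the fixed value $K$. To remove this dependence I would pad: assuming $K \leq M$, enlarge $\mathcal{Y}$ to an arbitrary $K$-subset $\mathcal{W}$ with $\mathcal{Y} \subseteq \mathcal{W} \subseteq \mathcal{S}$, which is possible because $|\mathcal{S}| = M \geq K$. Since $\mathcal{Y} \subseteq \mathcal{W}$ we have $N(\mathcal{Y}) \subseteq N(\mathcal{W})$, whence $\mathcal{S} \subseteq N(\mathcal{W})$ and $|N(\mathcal{W})| \leq K V_\epsilon^\ET$. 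Every admissible $\mathcal{S}$ can therefore be recovered from the pair $(\mathcal{W}, \mathcal{S}\setminus\mathcal{W})$, where $\mathcal{W}$ is an arbitrary $K$-subset of $\Sigma_2^L$ and $\mathcal{S}\setminus\mathcal{W}$ is an $(M-K)$-subset of $N(\mathcal{W})$. Counting the choices gives at most $\binom{2^L}{K}$ possibilities for $\mathcal{W}$ and at most $\binom{|N(\mathcal{W})|}{M-K} \leq \binom{K V_\epsilon^\ET}{M-K}$ possibilities for the remaining sequences, so $D(K) \leq \binom{2^L}{K}\binom{K V_\epsilon^\ET}{M-K}$. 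Because the assignment $\mathcal{S} \mapsto (\mathcal{W}, \mathcal{S}\setminus\mathcal{W})$ need only be surjective onto the counted sets $\mathcal{S}$ (each such $\mathcal{S}$ arises from at least one pair), this overcounting is harmless for an upper bound.

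The conceptual crux, and the only step requiring genuine care, is the passage from \emph{maximum} to \emph{maximal} and the resulting covering statement $\mathcal{S} \subseteq N(\mathcal{Y})$; once that is in place the rest is routine double counting and a union bound. I would additionally verify two minor points: that the padding step keeps $\mathcal{W} \subseteq \mathcal{S}$, which is what guarantees that $\mathcal{S}\setminus\mathcal{W}$ is disjoint from $\mathcal{W}$ and has exactly $M-K$ elements; and the degenerate regime $K \geq M$, in which $|\mathcal{Y}| \leq K$ holds vacuously for every $\mathcal{S}$. Since the lemma is invoked only for $K < M$, the latter case can be set aside.
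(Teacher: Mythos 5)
Your proof is correct and takes essentially the same approach as the paper: both arguments reduce to the covering statement that $\mathcal{S}$ is contained in the neighborhood of an inclusion-maximal $\epsilon$-error-correcting code of size at most $K$ inside $\mathcal{S}$, pad that code to exactly $K$ sequences, and finish with the identical double count $\binom{2^L}{K}\binom{KV_\epsilon^\ET}{M-K}$. The only cosmetic difference is how the covering code is exhibited---the paper builds it by greedy peeling (repeatedly selecting a word of $\mathcal{S}$ and removing its neighborhood, which must terminate within $K$ steps), whereas you invoke the inclusion-maximality of the maximum code $\mathcal{Y}$ directly; both are sound.
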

\begin{proof}
	Consider the following procedure on a set $\mathcal{S} \in \mathcal{X}_M^L$ whose largest $\epsilon$-error-correcting subset $\mathcal{Y} \subseteq \mathcal{S}$ has size at most $K$. Write $\mathcal{S}^{(1)} \triangleq \mathcal{S}$. Take an arbitrary word $\ve{x}^{(1)} \in \mathcal{S}^{(1)}$ and remove all words $\ve{y} \in \Sigma_2^L$ with intersecting error balls, i.e. $B_\epsilon^\ET(\ve{x}) \cap B_\epsilon^\ET(\ve{y}) \neq \emptyset$ from $\mathcal{S}^{(1)}$. Then select an arbitrary sequence from the resulting set $\mathcal{S}^{(2)}$, and, again, remove all elements with intersecting error balls. Continue this procedure until $S^{(j+1)} = \emptyset$. This procedure will stop after at most $j \leq K$ steps, since otherwise $\ve{x}_1, \dots, \ve{x}_{K+1}$ would form an $\epsilon$-error-correcting code. Hence, each such set ${\mathcal{S}}$ can be constructed by first selecting $K$ arbitrary, distinct words $\ve{x}_1,\dots,\ve{x}_{K}$ and then choosing the remaining $M-K$ words to have intersecting error balls with at least one of the $\ve{x}_1,\dots,\ve{x}_{K}$.
\end{proof}
While the bound from Lemma \ref{lemma:non:ecc:sets} may not seem particularly strong, it can be used to show that the number of sets that do not contain an $\epsilon$-substitution-correcting code of large size is negligible with respect to the sets that do contain an $\epsilon$-substitution-correcting code. We will elaborate this result and use it in the following to prove an upper bound on the size of $(s,t,\epsilon)_\sub$-correcting codes.
\begin{thm} \label{thm:substitutions:asymptotic} For fixed $s,t,\epsilon \in \mathbb{N}_0$ and $0<\beta<1$, any $(s,t,\epsilon)_\sub$-correcting code $\Code \subseteq \mathcal{X}_M^L$ satisfies
	$$ |\Code| \lesssim \frac{\binom{2^L}{M-s}}{\binom{M}{s,t}\binom{L}{\epsilon}^t}, $$
	when $M \rightarrow \infty$ with $M = 2^{\beta L}$. The redundancy is at least
	$$ r(\Code) \geq sL + t \log M + t\epsilon \log L - \log\left(s!t!\epsilon!^t\right) + o(1), $$
\end{thm}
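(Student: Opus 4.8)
The plan is to combine a sphere-packing count over received sets of a fixed size with the two lemmas above, splitting the code according to how large an embedded substitution code each codeword contains. Fix a threshold $K = M - g$, where $g = g(L) = o(M)$ will be chosen at the end, and write $\mathcal{Y}(\mathcal{S})$ for a largest $\epsilon$-substitution-correcting subset of $\mathcal{S}$. Partition $\Code = \Code_1 \cup \Code_2$, where $\Code_1 = \{\mathcal{S} \in \Code : |\mathcal{Y}(\mathcal{S})| > K\}$ and $\Code_2 = \Code \setminus \Code_1$. Since $\Code$ is $(s,t,\epsilon)_\sub$-correcting, the balls $B_{s,t,\epsilon}^\sub(\mathcal{S})$ are pairwise disjoint, and so are their intersections with the family $\mathcal{X}_{M-s}^L$ of $(M-s)$-element subsets of $\Sigma_2^L$; this family has size $\binom{2^L}{M-s}$ and serves as the packing universe.

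For $\mathcal{S} \in \Code_1$ I would apply Lemma \ref{lemma:substitutions:subset} with $\mathcal{Y} = \mathcal{Y}(\mathcal{S})$; since $g = o(M)$ we have $s+t \le K < |\mathcal{Y}|$ for large $M$, so the lemma yields at least $\binom{|\mathcal{Y}|}{s,t}(B_\epsilon^\sub(L)-1)^t \ge \binom{K}{s,t}(B_\epsilon^\sub(L)-1)^t$ distinct received sets, each of size $M-s$. A packing count in $\mathcal{X}_{M-s}^L$ then gives
$$|\Code_1| \le \frac{\binom{2^L}{M-s}}{\binom{K}{s,t}(B_\epsilon^\sub(L)-1)^t}.$$
Because $\binom{K}{s,t}/\binom{M}{s,t} = \ff{K}{s+t}/\ff{M}{s+t} \to 1$ (as $g = o(M)$) and $B_\epsilon^\sub(L) = \binom{L}{\epsilon}(1+o(1))$, this equals $T(1+o(1))$ with $T := \binom{2^L}{M-s}/(\binom{M}{s,t}\binom{L}{\epsilon}^t)$, the target quantity.

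It remains to show $\Code_2$ is negligible. Every $\mathcal{S} \in \Code_2$ has $|\mathcal{Y}(\mathcal{S})| \le K$, so $|\Code_2| \le D(K)$ by Lemma \ref{lemma:non:ecc:sets}. Here $V_\epsilon^\sub = B_{2\epsilon}^\sub(L) = \Theta(L^{2\epsilon})$, since two radius-$\epsilon$ substitution balls intersect exactly when their centers are within Hamming distance $2\epsilon$. Using the identity $\binom{2^L}{M}\binom{M}{g} = \binom{2^L}{K}\binom{2^L-K}{g}$ together with elementary binomial estimates, one obtains
$$\frac{D(K)}{\binom{2^L}{M}} \le \frac{1}{g!}\,\bigl(\lambda(1+o(1))\bigr)^g, \qquad \lambda := \frac{M^2 V_\epsilon^\sub}{2^L} = 2^{(2\beta-1)L}\,\Theta(L^{2\epsilon}).$$
The decisive observation is that $\lambda/M = 2^{-(1-\beta)L}\Theta(L^{2\epsilon}) \to 0$ because $\beta < 1$, so one may pick $g = o(M)$ that nonetheless dwarfs $\lambda$ (for instance $g \asymp \sqrt{\lambda M}$ once $\lambda \ge 1$, or a sufficiently large constant when $\lambda \to 0$). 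For such $g$ the bound $\tfrac{1}{g!}\lambda^g \le (e\lambda/g)^g$ decays faster than the merely singly-exponential $T/\binom{2^L}{M} = 2^{-s(1-\beta)L}/\mathrm{poly}(M,L)$, giving $D(K) = o(T)$.

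Finally, $|\Code| \le |\Code_1| + |\Code_2| \le T(1+o(1)) + o(T) = T(1+o(1))$, which is exactly $|\Code| \lesssim T$; the redundancy bound follows by taking base-two logarithms and inserting $\log\binom{M}{s,t} = (s+t)\log M - \log(s!t!) + o(1)$, $\log\binom{L}{\epsilon} = \epsilon\log L - \log \epsilon! + o(1)$, and $\log\binom{2^L}{M-s} = \log\binom{2^L}{M} - s(L - \log M) + o(1)$. I expect the crux to be the estimate of $D(K)$: the packing term wants $K$ pushed up to $M$ (hence $g$ small), while the count of collision-heavy sets wants $g$ large, and reconciling the two is possible only because $V_\epsilon^\sub$ is polynomial in $L$ and $\beta < 1$ keeps the window $\lambda \ll g \ll M$ non-empty; making a single choice of $g$ behave uniformly across $\beta \in (0,1)$, in particular through the transition at $\beta = \tfrac12$ where $\lambda$ passes from $o(1)$ to $\omega(1)$, is the delicate point.
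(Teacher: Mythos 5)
Your overall strategy coincides with the paper's: partition codewords according to whether they contain an $\epsilon$-substitution-correcting subset of size larger than $M-g$, bound the ``good'' part by a sphere-packing argument via Lemma \ref{lemma:substitutions:subset}, and show via Lemma \ref{lemma:non:ecc:sets} that the ``bad'' part is negligible. There is, however, one step that fails as written: you assert that Lemma \ref{lemma:substitutions:subset} produces at least $\binom{K}{s,t}(B_\epsilon^\sub(L)-1)^t$ received sets \emph{each of size exactly $M-s$}, and you pack them into the universe $\mathcal{X}_{M-s}^L$. The lemma does not say this. Its counting argument deliberately includes error patterns in which the erroneous outcome of one stored sequence coincides with another sequence already present in $\mathcal{S}$ (the last case in Fig.~\ref{fig:error:patterns}, where $\mathcal{Y}_i'=\mathcal{Y}_i\setminus\{\ve{x}\}$); each such collision shrinks the received set by one element, so the constructed outputs have cardinalities anywhere in $\{M-s-t,\dots,M-s\}$, and the lemma gives no lower bound on $|B_{s,t,\epsilon}^\sub(\mathcal{S})\cap\mathcal{X}_{M-s}^L|$ alone. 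The repair is exactly what the paper does: take as packing universe all sets of cardinality between $M-s-t$ and $M-s$, so the numerator becomes $\sum_{i=0}^{t}\binom{2^L}{M-s-i}$, which is $\binom{2^L}{M-s}(1+o(1))$ because $M=o(2^L)$; the asymptotic conclusion and the redundancy bound are unchanged. (Alternatively one could restrict the count to collision-free patterns, but that requires modifying the proof of Lemma \ref{lemma:substitutions:subset}, not merely citing it.)

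Your treatment of the residual term $D(K)$ is correct --- the identity $\binom{2^L}{M}\binom{M}{g}=\binom{2^L}{K}\binom{2^L-K}{g}$, the value $V_\epsilon^\sub=B_{2\epsilon}^\sub(L)$, and the bound $D(K)/\binom{2^L}{M}\leq\frac{1}{g!}\bigl(\lambda(1+o(1))\bigr)^g$ with $\lambda=M^2V_\epsilon^\sub/2^L$ all check out --- but the case distinction in the choice of $g$, and the ``delicate point'' you perceive at $\beta=\tfrac12$, are avoidable. The paper fixes $y(M)=M/\log M$ once and for all; then $e\lambda/g=\Theta\bigl(2^{-(1-\beta)L}\,\mathrm{poly}(L)\bigr)\to 0$ for every fixed $\beta\in(0,1)$, so $(e\lambda/g)^g=2^{-\Theta(M)}$ decays doubly exponentially in $L$, uniformly in $\beta$, with no transition anywhere; this is packaged in the paper as Lemma \ref{lemma:asymtptotic:good:sets} with $z(L)=2^L/((M-y(M))B_{2\epsilon}^\sub(L))$, giving $\log\Delta\leq-\frac{1-\beta}{\beta}M+o(M)$. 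Your final passage from the cardinality bound to the redundancy bound, using $\log\binom{M}{s,t}=(s+t)\log M-\log(s!t!)+o(1)$ and $\log\binom{L}{\epsilon}=\epsilon\log L-\log\epsilon!+o(1)$, matches the paper's.
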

\begin{proof}
	Denote by $\mathcal{D} \subseteq \mathcal{X}_M^L$, the set of all $\mathcal{S} \in \mathcal{X}_M^L$, which contain an $\epsilon$-substitution-correcting code $\mathcal{Y} \subseteq \mathcal{S}$ of size larger than $|\mathcal{Y}| > M-y(M)$, where we define $y(M) = M/\log M$. The remaining sets are comprised in $\mathcal{D}^\mathrm{C} = \mathcal{X}_M^L \setminus \mathcal{D}$. With the partition $\mathcal{D}\cup\mathcal{D}^\mathrm{C} = \mathcal{X}_M^L$, it follows that the cardinality of any $(s,t,\epsilon)_\sub$-correcting code $\Code \subseteq \mathcal{X}_M^L$ is at most
	\begin{align*}
		|\Code| = |\Code \cap \mathcal{D}| + |\Code \cap \mathcal{D}^\mathrm{C}|\leq \frac{\Big|\bigcup\limits_{\mathcal{S} \in \mathcal{D}}B_{s,t,\epsilon}^\sub(\mathcal{S})\Big|}{\underset{\mathcal{S} \in \mathcal{D}}{\min}|B_{s,t,\epsilon}^\sub(\mathcal{S})|} + |\mathcal{D}^\mathrm{C}|.
	\end{align*}
	The first term follows from a sphere packing bound on all sets $\mathcal{S} \in\mathcal{D}$. The numerator counts the total number of possible channel outputs and the denominator is a lower bound on the error ball size for all sets $\mathcal{S} \in \mathcal{D}$. Since each channel output is a set of sequences of size $M-s-t$ up to $M-s$, we have
	$$ \Big|\bigcup\limits_{\mathcal{S} \in \mathcal{D}}B_{s,t,\epsilon}^\sub(\mathcal{S})\Big| \leq \sum_{i=0}^{t} \binom{2^L}{M-s-i}. $$
	From Lemma \ref{lemma:substitutions:subset} it is known that
	$$ \underset{\mathcal{S} \in \mathcal{D}}{\min}|B_{s,t,\epsilon}^\sub(\mathcal{S})| \geq \binom{M-y(M)}{s,t}(B_\epsilon^\sub(L)-1)^t, $$
	and applying Lemma \ref{lemma:non:ecc:sets}, we find that $|\mathcal{D}^\mathrm{C}| \leq D(M-y(M))$. It follows that
	\begin{align*}
		|\Code| &\leq \frac{\sum_{i=0}^{t} \binom{2^L}{M-s-i}}{\binom{M-y(M)}{s,t}(B_\epsilon^\sub(L)-1)^t} + D(M-y(M)) \\
		&= \frac{\sum_{i=0}^{t} \binom{2^L}{M-s-i}}{\binom{M-y(M)}{s,t}(B_\epsilon^\sub(L)-1)^t} (1+\Delta),
	\end{align*}
	where $\Delta$ accounts for $D(M-y(M))$ and is defined implicitly as in the following equation. We will show that for our choice of $y(M)$, the first summand dominates the bound, i.e. $\Delta \rightarrow 0$ for $M \rightarrow \infty$. We obtain
	\begin{align*}
		\log \Delta =&\log  \frac{ D(M-y(M))\binom{M-y(M)}{s,t}(B_\epsilon^{\sub}(L)-1)^{t} }{\sum_{i=0}^{t} \binom{2^L}{M-s-i}} \\
		\overset{(a)}{\leq} & \log \frac{\binom{2^L}{M-y(M)}\binom{(M-y(M)) B_{2e}^\sub(L)}{y(M)}}{\binom{2^L}{M-s}} + O(L) \\
		\overset{(b)}{\leq} & - \frac{1-\beta}{\beta}M + o(M),
	\end{align*}
	where for inequality $(a)$ we used $|V^\sub_{\epsilon}(\ve{x})| = B_{2\epsilon}^\sub(L)$ for all $\ve{x} \in \Sigma_2^L$, $\log \binom{M-y(M)}{s,t} = O(L)$ and $t\log(B_\epsilon^{\sub}(L)-1) = O(\log L)$. Inequality $(b)$ follows from an application of Lemma~\ref{lemma:asymtptotic:good:sets} with $z(L) = 2^L/((M-y(M))B_{2\epsilon}^\sub(L))$. Therefore, $\Delta \rightarrow 0$, as $M \rightarrow \infty$ and $D(M-y(M))$ is asymptotically negligible. We obtain for any $(s,t,\epsilon)_\sub$-correcting code $\Code \subseteq \mathcal{X}_M^L$
	$$ |\Code| \lesssim \frac{\sum_{i=0}^{t} \binom{2^L}{M-s-i}}{\binom{M-y(M)}{s,t}(B_\epsilon^\sub(L)-1)^t} \sim \frac{\binom{2^L}{M-s}}{\binom{M}{s,t}\binom{L}{\epsilon}^t}. $$
	The redundancy is asymptotically at least
	\begin{align*}
		r(\Code) & = \log \frac{\binom{2^L}{M}}{|\Code|} \geq \log \frac{\binom{2^L}{M}\binom{M}{s} \binom{M-s}{t} \binom{L}{\epsilon}^{t}}{\binom{2^L}{M-s}}  + o(1) \\
		& \geq s \log (2^L-M) + t\log \left(ML^\epsilon\right) -\log (s!t!\epsilon!^t) + o(1)  \\ 
		& = sL + t \log M + t\epsilon \log L - \log\left(s!t!\epsilon!^t\right) + o(1),
	\end{align*}
	where we used that $\log \binom{M}{s} = s \log M - \log s! + o(1)$, $\log \binom{M-s}{t} = t \log M - \log t! + o(1)$ and $t\log \binom{L}{\epsilon} = t(\epsilon\log L - \log \epsilon!) + o(1)$.
\end{proof}
In particular, for $s=0$ and $\epsilon=1$, the redundancy of any $(0,t,1)_\sub$-correcting code $\Code \subseteq \mathcal{X}_M^L$ is at least $t \log (ML)-\log t!$ bits. Note that this coincides with the results from \cite{SRB18} for $t=1$. Comparing the bound on the redundancy stated in Theorem \ref{thm:substitutions:asymptotic} with the well known sphere packing bound for conventional $\epsilon$-substitution-correcting block codes, $\log B_\epsilon^{\sub}(L)$, yields an interesting interpretation of the $(0,t,1)_\sub$ channel. While it seems intuitive that the redundancy required is at least $t \log (ML)-\log t!$ bits, since there are $t$ errors inside a total of $ML$ symbols, it is interesting that from a sphere packing point of view, the fact the sequences are not ordered does appear to require as much redundancy as not knowing the distribution of the errors in an ordered array. While Theorem \ref{thm:substitutions:asymptotic} is formulated for a fixed number of errors $s,t$, we will find a bound for the case, when number of erroneous sequences $t$ is scaling with $M$ in the following.
\begin{thm} \label{thm:substitution:scaling}
	For fixed $s,\epsilon \in \mathbb{N}_0$ and fixed $0<\beta<1$, any $(s,M-s,\epsilon)_\sub$-correcting code $\Code \subseteq \mathcal{X}_M^L$ satisfies
	$$ r(\Code) \geq M \epsilon \log L + O(M), $$
	when $M \rightarrow \infty$ with $M = 2^{\beta L}$.
\end{thm}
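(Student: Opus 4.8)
The plan is to mirror the proof of Theorem~\ref{thm:substitutions:asymptotic}, partitioning the input space $\mathcal{X}_M^L$ into a collection $\mathcal{D}$ of \emph{good} sets, which contain a large $\epsilon$-substitution-correcting subset, and the complementary \emph{bad} sets $\mathcal{D}^\mathrm{C}$. Concretely, I would fix a threshold function $y(M)$ and let $\mathcal{D}$ consist of all $\mathcal{S} \in \mathcal{X}_M^L$ that contain an $\epsilon$-substitution-correcting code $\mathcal{Y}\subseteq\mathcal{S}$ with $|\mathcal{Y}| \geq M - y(M)$. Writing $\Code = (\Code\cap\mathcal{D}) \cup (\Code\cap\mathcal{D}^\mathrm{C})$, the first part is bounded by a sphere-packing argument over $\mathcal{D}$ and the second trivially by $|\mathcal{D}^\mathrm{C}| \leq D(M-y(M))$ from Lemma~\ref{lemma:non:ecc:sets}. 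Since substitutions preserve length, every channel output is a subset of $\Sigma_2^L$ of size at most $M-s$, so the number of possible outputs is at most $\sum_{i=0}^{M-s}\binom{2^L}{i} \lesssim \binom{2^L}{M-s}$ up to a polynomial factor.

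The first main step is the error-ball lower bound for good sets. The obstacle here is that Lemma~\ref{lemma:substitutions:subset} requires $s+t \leq |\mathcal{Y}|$, which for the full parameter $t = M-s$ would force $\mathcal{Y}=\mathcal{S}$; instead I would only err the sequences of $\mathcal{Y}$. Formally, for $\mathcal{S}\in\mathcal{D}$ with a subset $\mathcal{Y}$ of size $K = |\mathcal{Y}| \geq M - y(M)$, I apply Lemma~\ref{lemma:substitutions:subset} with the reduced parameter $t' = K - s \leq M - s$ to obtain $|B_{s,K-s,\epsilon}^\sub(\mathcal{S})| \geq \binom{K}{s}(B_\epsilon^\sub(L)-1)^{K-s}$, and then use the monotonicity $B_{s,K-s,\epsilon}^\sub(\mathcal{S}) \subseteq B_{s,M-s,\epsilon}^\sub(\mathcal{S})$ to carry the bound over to the desired ball. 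Taking logarithms and using $\log(B_\epsilon^\sub(L)-1) = \epsilon\log L - \log\epsilon! + o(1)$ yields $\log|B_{s,M-s,\epsilon}^\sub(\mathcal{S})| \geq M\epsilon\log L - O(M)$, provided the threshold satisfies $y(M)\,\epsilon\log L = O(M)$, i.e.\ $y(M) = O(M/\log L)$.

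The crux is the choice of $y(M)$, which must simultaneously keep the ball exponent within $O(M)$ of $M\epsilon\log L$ (forcing $y(M) = O(M/\log L)$) and make the bad-set count $D(M-y(M))$ negligible against the main term $\binom{2^L}{M-s}/[\binom{K}{s}(B_\epsilon^\sub(L)-1)^{K-s}] \approx 2^{\log\binom{2^L}{M} - M\epsilon\log L}$. Estimating $D(M-y(M))$ through Lemma~\ref{lemma:non:ecc:sets} with $V_\epsilon^\sub = B_{2\epsilon}^\sub(L)$, the dominant contribution is $\log D(M-y(M)) \lesssim \log\binom{2^L}{M} - y(M)(1-\beta)L + y(M)\cdot O(\log L)$, so the deficit relative to $\binom{2^L}{M}$ is of order $y(M)(1-\beta)L$. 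I would take $y(M) = M/\log L$: then the ball loss is $y(M)\epsilon\log L = M\epsilon = O(M)$, while the deficit $y(M)(1-\beta)L = M(1-\beta)L/\log L$ dominates $M\epsilon\log L$ because $L \gg (\log L)^2$. Hence $D(M-y(M))$ is exponentially smaller than the main term for \emph{every} $0<\beta<1$. This is exactly where the scaling case is harder than Theorem~\ref{thm:substitutions:asymptotic}: there $t$ is fixed and the coarser threshold $M/\log M$ suffices, whereas here a typical set with $\beta \geq 1/2$ fails to be a substitution-correcting code, and only the finer threshold $M/\log L$ renders the bad sets negligible against an exponentially smaller main term.

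Combining the two estimates gives $|\Code| \lesssim 2^{\log\binom{2^L}{M}-M\epsilon\log L+O(M)}$, and Definition~\ref{def:redundancy} then yields $r(\Code) = \log\binom{2^L}{M} - \log|\Code| \geq M\epsilon\log L + O(M)$, as claimed. The only genuinely delicate point is the asymptotic bookkeeping in the third paragraph, for which I would invoke the same estimate (Lemma~\ref{lemma:asymtptotic:good:sets}) that controls the bad-set ratio in the proof of Theorem~\ref{thm:substitutions:asymptotic}.
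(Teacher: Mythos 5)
Your proposal is correct and follows essentially the same route as the paper's proof: the same good/bad-set decomposition with Lemma~\ref{lemma:non:ecc:sets}, the same application of Lemma~\ref{lemma:substitutions:subset} to a reduced number of erroneous sequences (the paper takes $t = M-s-y(M)$ directly, which is exactly your monotonicity step), and the same final bookkeeping via Lemma~\ref{lemma:asymtptotic:good:sets}. The only cosmetic difference is your threshold $y(M)=M/\log L$ versus the paper's $y(M)=M/\log\log M$, which are asymptotically interchangeable since $\log\log M = \log(\beta L) \sim \log L$ when $M = 2^{\beta L}$, so both satisfy the two competing constraints you correctly identified.
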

\begin{proof}
	We follow a similar outline as in the proof for Theorem \ref{thm:substitutions:asymptotic}. Denote by $\mathcal{D} \subseteq \mathcal{X}_M^L$, the set of all $\mathcal{S} \in \mathcal{X}_M^L$, which contain an $\epsilon$-substitution-correcting code $\mathcal{Y} \subseteq \mathcal{S}$ of size $|\mathcal{Y}| > M-y(M)$, where we define $y(M) = M/\log \log M$ and $\mathcal{D}^\mathrm{C} = \mathcal{X}_M^L \setminus \mathcal{D}$. Allowing only $t=M-s-y(M)$ erroneous sequences, we can apply Lemma \ref{lemma:substitutions:subset} and obtain
	$$ |B_{s,t,\epsilon}^\sub(\mathcal{S})| \geq \binom{M-y(M)}{s}(B_\epsilon^\sub(L)-1)^{M-y(M)-s}, $$
	for all $\mathcal{S} \in \mathcal{D}$. It follows that
	$$ |\Code| \leq \frac{\sum_{i=0}^{M-y(M)-s
		} \binom{2^L}{M-s-i}}{\binom{M-y(M)}{s}(B_\epsilon^\sub(L)-1)^{M-y(M)-s}} (1+\Delta). $$
	We will show that $\Delta \rightarrow 0$ for $M \rightarrow \infty$. We obtain
	\begin{align*}
	\log \Delta =&\log  \frac{\binom{M-y(M)}{s}(B_\epsilon^{\sub}(L)-1)^{M-y(M)-s} D(M-y(M))}{\sum_{i=0}^{M-y(M)-s} \binom{2^L}{M-s-i}} \\ \overset{(a)}{\leq} & \log \frac{\binom{2^L}{M-y(M)}\binom{(M-y(M)) B_{2\epsilon}^\sub(L)}{y(M)}}{\binom{2^L}{M-s}} + M\epsilon\log L + O(L) \\
	\overset{(b)}{\leq} & -\frac{ML}{\log \log M} + M \epsilon \log L + o\left(\frac{M}{\log \log M}\right) \\
	= & - \frac{ML}{\log(\beta L)} + O(M \log L)
	\end{align*}
	where in inequality $(a)$ we used $\log \binom{M-y(M)}{s} = O(L)$. For inequality $(b)$ we applied Lemma \ref{lemma:asymtptotic:good:sets} with $z(L) = 2^L/((M-y(M))B_{2\epsilon}^\sub(L))$. Therefore, $\Delta \rightarrow 0$, as $M \rightarrow \infty$. We obtain for any $(s,M-s,\epsilon)_\sub$-correcting code $\Code \subseteq \mathcal{X}_M^L$
	$$ |\Code| \lesssim \frac{\sum_{i=0}^{M-y(M)-s} \binom{2^L}{M-s-i}}{\binom{M-y(M)}{s}(B_\epsilon^\sub(L)-1)^{M-y(M)-s}} \lesssim \frac{\binom{2^L}{M-s}}{\binom{M}{s}\binom{L}{\epsilon}^{M-y(M)-s}}. $$
	Therefore, the redundancy satisfies
	\begin{align*}
	r(\Code) & = \log \frac{\binom{2^L}{M}}{|\Code|} \geq \log \frac{\binom{2^L}{M}\binom{M}{s}  \binom{L}{\epsilon}^{M-y(M)-s}}{\binom{2^L}{M-s}}  + o(1) \\
	%& \geq s L + (y(M)-s)\log \binom{L}{\epsilon} + O(1)  \\ 
	& \geq sL + (M-y(M)-s)\epsilon\log (L/\epsilon) -\log s! + o(1).
	\end{align*}%
\end{proof}
\subsection{Asymptotic Bounds for Deletion Errors} \label{ss:asmyptotic:boubds:deletions}
We will now turn to derive an asymptotic bound on the cardinality of $(s,t,\epsilon)_\del$-correcting codes. Note that it is possible to use the technique that we present here also for insertion errors, however this is deferred to future work. Since the deletion ball is non-uniform, it is not directly possible to use an analogue of Lemma \ref{lemma:substitutions:subset} as in Theorem \ref{thm:substitutions:asymptotic}. We will therefore slightly adapt our arguments and use the fact that, although the deletion ball size is non-uniform, most of the deletion balls have a similar size. It has been shown in \cite{Lev66} that
$$ |S_\epsilon^\del(\ve{x})| \geq \binom{||\ve{x}||-\epsilon+1}{\epsilon} $$
and most words $\ve{x} \in\Sigma_2^n$ have roughly $L/2$ runs. We will elaborate this result in the following.
\begin{lemma} \label{lemma:number:words:run}
	Let $\rho \in \mathbb{N}$. The number of words with less than $L/2-\rho$ runs satisfies
	$$ \left|\left\{ \ve{x} \in \Sigma_2^L: ||\ve{x}|| < \frac{L}{2} - \rho \right\}\right| \leq \frac{2^L}{\e^{\frac{2\rho^2}{L}}}. $$
\end{lemma}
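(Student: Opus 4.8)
The plan is to recast the count as a binomial tail probability and then apply a Chernoff/Hoeffding estimate. Put the uniform distribution on $\Sigma_2^L$ and define the run-boundary indicators $Z_i \in \{0,1\}$ by $Z_i = 1$ exactly when $x_i \neq x_{i+1}$, for $i \in [L-1]$. Since the map $(x_1, Z_1, \dots, Z_{L-1}) \mapsto \ve{x}$ (via $x_{i+1} = x_i \oplus Z_i$) is a bijection from $\Sigma_2 \times \Sigma_2^{L-1}$ onto $\Sigma_2^L$, the $Z_i$ are, under the uniform measure, i.i.d. Bernoulli$(1/2)$. Hence $S \triangleq ||\ve{x}|| - 1 = \sum_{i=1}^{L-1} Z_i$ is Binomial$(L-1,1/2)$ with mean $(L-1)/2$, and I would write
$$\left|\left\{\ve{x} \in \Sigma_2^L : ||\ve{x}|| < \tfrac{L}{2} - \rho\right\}\right| = 2^L \cdot \Pr\!\left[S < \tfrac{L}{2} - \rho - 1\right].$$
Equivalently, using that exactly $2\binom{L-1}{i-1}$ words have $i$ runs (as already invoked in the proof of Theorem~\ref{thm:gv:s:t:e:del}), the left-hand side equals $2\sum_{j < L/2 - \rho - 1}\binom{L-1}{j}$, a lower binomial tail.

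Next I would control this lower tail with Hoeffding's inequality. Rewriting the threshold as a deviation below the mean, $\tfrac{L}{2} - \rho - 1 = \tfrac{L-1}{2} - (\rho + \tfrac12)$, the inequality for a sum of $L-1$ independent $[0,1]$-valued variables gives
$$\Pr\!\left[S \le \tfrac{L-1}{2} - a\right] \le \e^{-2a^2/(L-1)}, \qquad a = \rho + \tfrac12,$$
and since the strict event $\{S < \tfrac{L}{2}-\rho-1\}$ is contained in $\{S \le \tfrac{L}{2}-\rho-1\}$, the same bound applies.

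Finally I would simplify the exponent to the claimed form. Because $(\rho+\tfrac12)^2 \ge \rho^2$ and $L-1 \le L$, we have $2(\rho+\tfrac12)^2/(L-1) \ge 2\rho^2/L$, so $\e^{-2(\rho+1/2)^2/(L-1)} \le \e^{-2\rho^2/L}$; multiplying by $2^L$ yields $2^L/\e^{2\rho^2/L}$.

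The independence claim and the Chernoff estimate are routine; the only delicate point is the half-integer bookkeeping (the shift $+\tfrac12$ arising jointly from $||\ve{x}|| = S+1$ and the mean $(L-1)/2$), which must be aligned so that the final exponent is exactly $2\rho^2/L$ rather than a weaker constant. The two monotonicity steps $(\rho+\tfrac12)^2 \ge \rho^2$ and $L-1 \le L$ absorb this slack cleanly, so I do not expect a genuine obstacle; the main conceptual step is simply recognizing the run count as a binomial, which is what makes a tail bound available at all.
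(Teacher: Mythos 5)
Your proof is correct and follows essentially the same route as the paper's: both reduce the count to a binomial tail (using that exactly $2\binom{L-1}{i-1}$ words have $i$ runs) and then apply Hoeffding's inequality. The only difference is bookkeeping --- the paper first uses $2\binom{L-1}{i-1}\leq\binom{L}{i}$ for $i\leq L/2$ so that Hoeffding applies to $L$ trials with deviation $\rho$, while you apply it to $L-1$ trials with deviation $\rho+\tfrac12$ and absorb the slack via $(\rho+\tfrac12)^2\geq\rho^2$ and $L-1\leq L$.
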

\begin{proof}
	The number of words $\ve{x}\in \Sigma_2^L$ with exactly $i$ runs, i.e., $||\ve{x}|| = i$ is given by $2\binom{L-1}{i-1}$. Therefore, the number of words with less than $L/2 - \rho$ runs is given by
	\begin{align*}
	|\{\ve{x} & \in \Sigma_2^L :||\ve{x}|| < L/2-\rho \}|  = 2\sum_{i=1}^{L/2-\rho-1} \binom{L-1}{i-1} \\ & \overset{(a)}{\leq} \sum_{i=1}^{L/2-\rho} \binom{L}{i} \overset{(b)}{\leq} \frac{2^L}{\e^{\frac{2 \rho^2}{L}}},
	\end{align*}
	where we used $\binom{L-1}{i-1} \leq \frac12\binom{L}{i}$ for $i\leq\frac{L}{2}$ in inequality $(a)$ and Hoeffding's inequality \cite{Hoe63} on the binomial sum in $(b)$.
\end{proof}
Next, we find a lower bound on the ball size $B_{s,t,\epsilon}^\del(\mathcal{S})$, for sets, which contain a deletion-correcting code.
\begin{lemma} \label{lemma:lower:bound:deletion}
	Let $\mathcal{Y} \subseteq \mathcal{S} \in \mathcal{X}_M^L$ be an $\epsilon$-deletion-correcting code, i.e. $B_\epsilon^\del(\ve{y}_1) \cap B_\epsilon^\del(\ve{y}_2) = \emptyset$ for all $\ve{y}_1, \ve{y}_2 \in \mathcal{Y}$ and $\ve{y}_1 \neq \ve{y}_2$. Further, let $s+t\leq |\mathcal{Y}|$. Then,
	$$|B_{s,t,\epsilon}^\sub(\mathcal{S})| \geq \sum_{\substack{\SE,\SL \subseteq \mathcal{Y}, \SE\cap\SL=\emptyset,\\ |\SL| = s, |\SE| = t}} \,\, \prod_{\ve{y} \in \SE} |S_{\epsilon}^\del(\ve{y})|,$$
\end{lemma}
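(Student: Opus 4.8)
The plan is to mirror the construction in the proof of Lemma \ref{lemma:substitutions:subset}, exploiting the structural simplification that deletions provide. For each pair of disjoint subsets $\SL,\SE \subseteq \mathcal{Y}$ with $|\SL|=s$ and $\SE=\{\ve{y}_{e_1},\dots,\ve{y}_{e_t}\}$, $|\SE|=t$, and for each choice of deletion outcomes $\ve{y}_i' \in S_\epsilon^\del(\ve{y}_{e_i})$, $i \in [t]$, I would construct the candidate channel output $\mathcal{S}' = \SG \cup \SE'$, where $\SG = \mathcal{S}\setminus(\SL\cup\SE)$ and $\SE' = \{\ve{y}_1',\dots,\ve{y}_t'\}$. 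Concretely, the sequences in $\SL$ are lost and each $\ve{y}_{e_i} \in \SE$ is distorted by the $\epsilon$ deletions producing $\ve{y}_i'$. The hypothesis $s+t\leq|\mathcal{Y}|$ guarantees that such disjoint $\SL,\SE$ exist. Since every $\ve{y}_i' \in S_\epsilon^\del(\ve{y}_{e_i})$ has length $L-\epsilon < L$ and thus differs from $\ve{y}_{e_i}$, each such $\mathcal{S}'$ is a genuine element of $B_{s,t,\epsilon}^\del(\mathcal{S})$.

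The crux is to show that distinct triples $(\SL,\SE,(\ve{y}_1',\dots,\ve{y}_t'))$ yield distinct outputs $\mathcal{S}'$, so that the number of constructed outputs equals the claimed sum. Here the deletion setting is strictly easier than the substitution setting of Lemma \ref{lemma:substitutions:subset}: because the erroneous sequences have length $L-\epsilon$ while every sequence of $\mathcal{S}$, and hence every sequence of $\SG$, has length $L$, the two groups can never collide and no analogue of the selector function $\mathbb{I}_{\ve{x}}^\mathcal{S}$ is needed. I would argue injectivity by recovering the data from $\mathcal{S}'$: the length-$L$ sequences of $\mathcal{S}'$ are exactly $\SG$, so $\SL\cup\SE = \mathcal{S}\setminus\SG$ is determined; the length-$(L-\epsilon)$ sequences are exactly the $\ve{y}_i'$, and since $\mathcal{Y}$ is an $\epsilon$-deletion-correcting code the spheres $S_\epsilon^\del(\ve{y})$, $\ve{y}\in\mathcal{Y}$, are pairwise disjoint, so each $\ve{y}_i'$ lies in a unique sphere and thereby pinpoints its source $\ve{y}_{e_i}$. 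This recovers $\SE$, hence $\SL$, and finally the individual outcomes $\ve{y}_i'$ themselves. The disjointness of the spheres also ensures the $\ve{y}_i'$ are pairwise distinct, so $|\SE'|=t$ and no unintended merging of erroneous sequences occurs.

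Counting then finishes the proof: for fixed disjoint $\SL,\SE$ there are $\prod_{\ve{y}\in\SE}|S_\epsilon^\del(\ve{y})|$ ways to choose the outcomes, and summing over all admissible pairs $(\SL,\SE)$ gives the stated lower bound on $|B_{s,t,\epsilon}^\del(\mathcal{S})|$. I expect the main point requiring care to be the injectivity claim, specifically the clean separation of $\mathcal{S}'$ into its length-$L$ and length-$(L-\epsilon)$ parts together with the disjoint-sphere argument that the erroneous outcomes uniquely identify their originating codewords in $\mathcal{Y}$; the counting itself is routine.
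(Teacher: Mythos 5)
Your proposal is correct and follows essentially the same route as the paper's proof: you construct the same family of outputs $\mathcal{S}' = \SG \cup \SE'$ from triples $(\SL,\SE,\SE')$ with $\SL,\SE \subseteq \mathcal{Y}$ disjoint, and establish injectivity from the same two facts --- the length separation ($L$ versus $L-\epsilon$) that identifies the error-free part, and the disjointness of the deletion spheres of the code $\mathcal{Y}$ that identifies each outcome's source. The only difference is presentational: the paper argues injectivity by a three-case comparison of two distinct triples, while you phrase it as recovering the triple from the output, which is logically the same argument.
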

\begin{proof}
	We will find a lower bound on the number of words inside the error ball $|B^\del_{s,t,\epsilon}(\mathcal{S})|$ by counting distinct elements $S' \in B^\del_{s,t,\epsilon}(\mathcal{S})$ in the following way. Choose two arbitrary distinct sets $\SL,\SE=\{\ve{y}_{e_1}, \dots, \ve{y}_{e_t} \} \subseteq \mathcal{Y}$ with $|\SL| = s$ and $|\SE| = t$ and choose a set of erroneous outcomes $\SE' = \{\ve{y}_1', \dots, \ve{y}_t' \}$, where $\ve{y}_i' \in S_\epsilon^\del(\ve{y}_{e_i})$. Note that we delete exactly $\epsilon$ symbols from each $\ve{y}_{e_i}$ and thus $\ve{y}_i' \in \Sigma_2^{L-\epsilon}$. Denote by $\SL, \SE, \SE'$ and $\tilde{\SL}, \tilde{\SE}, \tilde{\SE}'$ two different choices of error realizations and let $\mathcal{S}'$ and $\tilde{\mathcal{S}}'$ be the corresponding received sets. If $\SL\cup \SE \neq\tilde{\SL}\cup \tilde{\SE} $, then $\mathcal{S}' \neq \tilde{\mathcal{S}}'$, as the resulting error-free sequences in $\mathcal{S}'$ and $\tilde{\mathcal{S}}$ of length $L$ are different. In the case $\SL\cup \SE =\tilde{\SL}\cup \tilde{\SE}$ and $\SE \neq \tilde{\SE}$, it follows that $\SE' \neq \tilde{\SE}'$, as the erroneous outcomes are chosen out of the $\epsilon$ deletion spheres from an $\epsilon$-deletion-correcting code. Finally, if $\SL\cup \SE =\tilde{\SL}\cup \tilde{\SE}$ and $\SE = \tilde{\SE}$ it follows that $\SL = \tilde{\SL}$ and thus $\SE' \neq \tilde{\SE}'$ as we chose $\SL, \SE, \SE'$ and $\tilde{\SL}, \tilde{\SE}, \tilde{\SE}'$ to be different. Hence, for each choice of error realizations $\SL, \SE, \SE'$, we obtain a unique element in $B_{s,t,\epsilon}^\del(\mathcal{S})$. Counting the number of choices yields the lemma.
\end{proof}
This allows to formulate the following theorem.
\begin{thm} \label{thm:deletion:asymptotic} For fixed $s,t,\epsilon \in \mathbb{N}_0$ and $0<\beta<1$, any $(s,t,\epsilon)_\del$-correcting code $\Code \subseteq \mathcal{X}_M^L$ satisfies
	$$ |\Code| \lesssim \frac{\binom{2^L}{M-s-t}\binom{2^{L-\epsilon}}{t}}{\binom{M}{s,t}\binom{L/2}{\epsilon}^t} $$
	when $M \rightarrow \infty$ with $M = 2^{\beta L}$. The redundancy is at least
	$$ r(\Code) \geq sL+t\epsilon \log L -\log (s!\epsilon!^t)  + o(1). $$
\end{thm}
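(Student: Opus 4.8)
The plan is to mirror the proof of Theorem~\ref{thm:substitutions:asymptotic}, replacing the uniform substitution ball by the non-uniform deletion ball and controlling the latter through the run statistics of Lemma~\ref{lemma:number:words:run}. First I would partition $\mathcal{X}_M^L = \mathcal{D}\cup\mathcal{D}^\mathrm{C}$, where $\mathcal{D}$ consists of the sets $\mathcal{S}$ that simultaneously (i) contain an $\epsilon$-deletion-correcting code $\mathcal{Y}\subseteq\mathcal{S}$ with $|\mathcal{Y}|>M-y(M)$, and (ii) contain at most $w(M)$ \emph{low-run} sequences $\ve{x}$ with $||\ve{x}||<L/2-\rho$. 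I would take $y(M)=o(M)$ (e.g.\ $y(M)=M/\log M$, as in the substitution case), a radius $\rho=\rho(L)$ with $\sqrt{L}\ll\rho\ll L$, and $w(M)=o(M)$ set to a suitable constant multiple of the expected number $\mu=M\,\e^{-2\rho^2/L}$ of low-run sequences in a random $M$-set. This two-sided constraint $\sqrt L\ll\rho\ll L$ is precisely what makes the final constant emerge as $\binom{L/2}{\epsilon}$ (it forces $\rho=o(L)$) while still keeping low-run sequences rare (it forces $\rho=\omega(\sqrt L)$, i.e.\ $w(M)=o(M)$).

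For the sphere-packing step on $\mathcal{D}$ I would not use the whole error ball but only its \emph{canonical} part $B^\star(\mathcal{S})\subseteq B_{s,t,\epsilon}^\del(\mathcal{S})$ of outputs obtained by losing exactly $s$ sequences and deleting exactly $\epsilon$ symbols in each of exactly $t$ sequences, so that every element of $B^\star(\mathcal{S})$ is a set of $M-s-t$ distinct length-$L$ strands together with $t$ distinct length-$(L-\epsilon)$ strands. All such outputs live in an ambient space of size $\binom{2^L}{M-s-t}\binom{2^{L-\epsilon}}{t}$, and since the balls of distinct codewords are disjoint, so are the $B^\star(\mathcal{S})$. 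To lower-bound $|B^\star(\mathcal{S})|$ for $\mathcal{S}\in\mathcal{D}$, I would apply Lemma~\ref{lemma:lower:bound:deletion} to the code $\mathcal{Y}_G$ obtained from $\mathcal{Y}$ by discarding its (at most $w(M)$) low-run sequences; then $|\mathcal{Y}_G|>M-y(M)-w(M)$ and every $\ve{y}\in\mathcal{Y}_G$ satisfies $|S_\epsilon^\del(\ve{y})|\geq\binom{||\ve{y}||-\epsilon+1}{\epsilon}\geq\binom{L/2-\rho-\epsilon+1}{\epsilon}$, giving
$$|B^\star(\mathcal{S})|\geq\binom{M-y(M)-w(M)}{s,t}\binom{L/2-\rho-\epsilon+1}{\epsilon}^{t}\sim\binom{M}{s,t}\binom{L/2}{\epsilon}^{t},$$
where the asymptotic equivalence uses $y(M),w(M)=o(M)$ and $\rho=o(L)$. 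Dividing the ambient space size by this minimum ball size yields $|\Code\cap\mathcal{D}|\lesssim\binom{2^L}{M-s-t}\binom{2^{L-\epsilon}}{t}\big/\big(\binom{M}{s,t}\binom{L/2}{\epsilon}^{t}\big)$.

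It then remains to show that $|\mathcal{D}^\mathrm{C}|$ is asymptotically negligible against this main term, which I expect to be the main obstacle. I would split $\mathcal{D}^\mathrm{C}$ into the sets violating (i) and those violating (ii). The first is at most $D(M-y(M))$ by Lemma~\ref{lemma:non:ecc:sets} (with $V_\epsilon^\del=O(L^{2\epsilon})$, polynomial in $L$), and is driven to $o(\text{main term})$ by exactly the asymptotic estimate used for inequality $(b)$ in Theorem~\ref{thm:substitutions:asymptotic} (Lemma~\ref{lemma:asymtptotic:good:sets}); only the constant and sub-exponential factors of the main term change, so the conclusion $\Delta\to0$ carries over verbatim. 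For the second, Lemma~\ref{lemma:number:words:run} bounds the number of low-run sequences by $2^L\e^{-2\rho^2/L}$, so the fraction of $M$-sets with more than $w(M)$ of them is at most $\mu^{w(M)}/w(M)!$; taking $w(M)$ a constant multiple of $\mu$ above $\e\mu$ makes this $\e^{-\Omega(\mu)}$, and since $\mu=2^{\beta L}\e^{-2\rho^2/L}\to\infty$ this decays faster than any power of $L$, hence faster than the polynomial loss $L^{-\Theta(t\epsilon)}$ (and the exponential loss $2^{-\Theta(sL)}$) separating the main term from $\binom{2^L}{M}$. Combining the two estimates gives $|\Code|\leq|\Code\cap\mathcal{D}|+|\mathcal{D}^\mathrm{C}|\lesssim\binom{2^L}{M-s-t}\binom{2^{L-\epsilon}}{t}\big/\big(\binom{M}{s,t}\binom{L/2}{\epsilon}^{t}\big)$. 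Finally I would convert this into the redundancy bound through the standard estimates $\log\binom{2^L}{M}-\log\binom{2^L}{M-s-t}\sim(s+t)(L-\log M)$, $\log\binom{2^{L-\epsilon}}{t}\sim t(L-\epsilon)-\log t!$, $\log\binom{M}{s,t}\sim(s+t)\log M-\log(s!t!)$, and $t\log\binom{L/2}{\epsilon}\sim t\epsilon\log L-t\epsilon-t\log\epsilon!$; here the $+t\epsilon$ arising from the length drop $L\to L-\epsilon$ cancels the $-t\epsilon$ coming from $\binom{L/2}{\epsilon}$, leaving exactly $r(\Code)\geq sL+t\epsilon\log L-\log(s!\epsilon!^t)+o(1)$.
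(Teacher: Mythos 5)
Your proposal is correct and follows essentially the same route as the paper's proof: the same partition of $\mathcal{X}_M^L$ into sets that contain a large $\epsilon$-deletion-correcting subcode and few low-run sequences, the same use of Lemma~\ref{lemma:lower:bound:deletion} (applied to the high-run part of $\mathcal{Y}$) for the ball lower bound, Lemmas~\ref{lemma:number:words:run}, \ref{lemma:non:ecc:sets}, and \ref{lemma:asymtptotic:good:sets} for showing $|\mathcal{D}^\mathrm{C}|$ is negligible, and the identical final redundancy computation with the $t\epsilon$ cancellation. The only deviations are cosmetic: you keep $\rho$ flexible with $\sqrt{L}\ll\rho\ll L$ and bound the low-run exceptional sets by a Poisson-type tail $\mu^{w}/w!$ with threshold $w(M)=\Theta(\mu)$, where the paper fixes $\rho(L)=\sqrt{L\ln L}$ and threshold $y(M)=M/\log M$ and invokes Lemma~\ref{lemma:asymtptotic:good:sets} again; and your explicit restriction to the canonical sub-ball $B^\star$ makes rigorous what the paper's numerator/denominator counting does implicitly.
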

\begin{proof}
	Denote by $\mathcal{D}_\mathrm{r} \subseteq \mathcal{X}_M^L$, the set of all $\mathcal{S} \in \mathcal{X}_M^L$, which contain more than $M-y(M)$ sequences with $||\ve{x}|| \geq L/2-\rho(L)$, where we choose $y(M) = M/\log M$ and $\rho(L) = \sqrt{L \ln L}$. Further, let $\mathcal{D}_\mathrm{e} \subseteq \mathcal{X}_M^L$ be all sets $\mathcal{S} \in \mathcal{X}_M^L$ that contain an $\epsilon$-deletion-correcting code $\mathcal{Y}\subseteq \mathcal{S}$ of size $|\mathcal{Y}| >M-y(M)$ and let $\mathcal{D} = \mathcal{D}_\mathrm{r} \cap \mathcal{D}_\mathrm{e}$. The remaining sets are comprised in $\mathcal{D}^\mathrm{C} = \mathcal{X}_M^L \setminus \mathcal{D}$. Since $\mathcal{D}$ and $\mathcal{D}^\mathrm{C}$ are a partition of $\mathcal{X}_M^L$, every $(s,t,\epsilon)_\del$-correcting code $\Code\subseteq \mathcal{X}_M^L$ satisfies
	$$ |\Code| = |\Code \cap \mathcal{D}| + |\Code \cap \mathcal{D}^\mathrm{C}|\leq \frac{\Big|\bigcup\limits_{\mathcal{S} \in \mathcal{D}}B_{s,t,\epsilon}^\del(\mathcal{S})\Big|}{\underset{\mathcal{S} \in \mathcal{D}}{\min}|B_{s,t,\epsilon}^\del(\mathcal{S})|} + |\mathcal{D}^\mathrm{C}|. $$
	The number of received sets after a loss of exactly $s$ sequences and $t$ sequences with exactly $\epsilon$ deletions each is at most
	$$ \Big|\bigcup\limits_{\mathcal{S} \in \mathcal{D}}B_{s,t,\epsilon}^\del(\mathcal{S})\Big| \leq \binom{2^L}{M-s-t} \binom{2^{L-\epsilon}}{t}, $$
	as each received set consists of $M-s-t$ error-free sequences and $t$ sequences of length $L-\epsilon$. Each $\mathcal{S} \in \mathcal{D}$ contains less than $y(M)$ sequences, which do not belong to the $\epsilon$-deletion-correcting code $\mathcal{Y}$ and less than $y(M)$ (possibly different) sequences with $||\ve{x}|| < L/2-\rho(L)$. Thus, at least $M-2y(M)$ sequences form an $\epsilon$-deletion-correcting code and satisfy $||\ve{x}|| \geq  L/2-\rho(L)$ and by Lemma \ref{lemma:lower:bound:deletion}, we have
	$$|B_{s,t,\epsilon}^\del(\mathcal{S})| \geq \binom{M-2y(M)}{s,t} \binom{L/2-\rho(L)-\epsilon}{\epsilon}^{t} $$
	for each $\mathcal{S} \in \mathcal{D}$. The number of remaining sets $\mathcal{S} \notin \mathcal{D}$ satisfies $|\mathcal{D}^\mathrm{C}| = |\mathcal{X}_M^L \setminus \mathcal{D}| \leq |\mathcal{X}_M^L \setminus \mathcal{D}_\mathrm{r}| + |\mathcal{X}_M^L\setminus \mathcal{D}_\mathrm{e}|$. Each such set contains at least $y(M)$ sequences with $||\ve{x}|| < L/2-\rho(L)$ or does not contain an $\epsilon$-deletion-correcting code of size more than $M-y(M)$. By Lemma \ref{lemma:number:words:run}, we have that
	$$ |\mathcal{X}_M^L \setminus \mathcal{D}_\mathrm{r}| \leq \binom{2^L}{M-y(M)}\binom{2^L/L^2}{y(M)}, $$
	for large enough $L$, as each $\mathcal{S} \in \mathcal{X}_M^L \setminus \mathcal{D}_\mathrm{r}$ can be constructed by choosing $y(M)$ sequences to have less than $L/2-\rho(L)$ runs and the remaining sequences are chosen arbitrarily. Further, using Lemma \ref{lemma:non:ecc:sets}, it follows that
	$$ |\mathcal{X}_M^L\setminus \mathcal{D}_\mathrm{e}| \leq \binom{2^L}{M-y(M)} \binom{K V_\epsilon^\del }{y(M)}, $$
	where $V_\epsilon^\del = \max_{\ve{x} \in \Sigma_2^L} \, |\{ \ve{y} \in \Sigma_2^L: B_\epsilon^\del(\ve{x}) \cap B_\epsilon^\del(\ve{y}) \neq \emptyset \}|$. This number can be bounded from above by the following consideration. Given $\ve{x} \in \Sigma_2^L$, each ${\ve{y}} \in \Sigma_2^L$ can be constructed by first deleting $\epsilon$ symbols from $\ve{x}$ and then inserting $\epsilon$ arbitrary symbols to the result. Using $|S_\epsilon^\del(\ve{x})| \leq \binom{L}{\epsilon}$ for all $\ve{x} \in \Sigma_2^L$ and $|S_\epsilon^\ins(\ve{x}')| = \sum_{i=0}^{\epsilon} \binom{L}{i}= B_\epsilon^\sub$ for all $\ve{x}' \in \Sigma_2^{L-\epsilon}$ yields $V_\epsilon^\del \leq \binom{L}{\epsilon} B_\epsilon^\sub$. It follows that the size of any $(s,t,\epsilon)_\del$-correcting code $\Code \subseteq \mathcal{X}_M^L$ is at most
	\begin{align*}
	|\Code| \leq & \frac{\binom{2^L}{M-s-t}\binom{2^{L-\epsilon}}{t}}{\binom{M-2y(M)}{s,t} \binom{L/2-\rho(L)-\epsilon}{\epsilon}^{t}} \\
	& + \binom{2^L}{M-y(M)} \left(\binom{2^L/L^2}{y(M)}+ \binom{K \binom{L}{\epsilon} B_\epsilon^\sub}{y(M)}\right) \\
	=& \frac{\binom{2^L}{M-s-t}\binom{2^{L-\epsilon}}{t}}{\binom{M-2y(M)}{s,t} \binom{L/2-\rho(L)-\epsilon}{\epsilon}^{t}} (1+\Delta_\mathrm{r}+\Delta_\mathrm{e}).
	\end{align*}
	We will show now that $\Delta_\mathrm{r} \rightarrow 0$ and $\Delta_\mathrm{e} \rightarrow 0$ for $M \rightarrow \infty$.
	\begin{align*}
	\log \Delta_\mathrm{r} =&\log \frac{\binom{M}{s,t} \binom{\frac{L}{2}}{\epsilon}^{t} \binom{2^L/L^2}{y(M)}\binom{2^L}{M-y(M)} }{\binom{2^L}{M-s-t}\binom{2^{L-\epsilon}}{t}} + o(1) \\
	=& \log \frac{\binom{2^{L}/L^2}{y(M)}  \binom{2^L}{M-y(M)} }{\binom{2^L}{M-s-t}} + O(L) \\
	\overset{(a)}{\leq} & -\frac{M}{\log M} \log \log M + O\left( \frac{M}{\log M}\right),
	\end{align*}
	where we applied Lemma \ref{lemma:asymtptotic:good:sets} in inequality $(a)$. Hence, $\Delta_\mathrm{r} \rightarrow 0$ for $M \rightarrow \infty$. Analogous to the proof of Theorem \ref{thm:substitutions:asymptotic}, it can be shown that $\Delta_\mathrm{e} \rightarrow 0$ for $M \rightarrow \infty$. We obtain for the maximum size of a $(s,t,\epsilon)_\del$-correcting code
	$$ |\Code| \lesssim \frac{\binom{2^L}{M-s-t}\binom{2^{L-\epsilon}}{t}}{\binom{M}{s,t}\binom{L/2}{\epsilon}^t}. $$
	The redundancy is consequently at least
	\begin{align*}
	r(\Code) &= \log \binom{2^L}{M} - \log |\Code| \geq \log \frac{\binom{2^L}{M}\binom{M}{s,t}\binom{L/2}{\epsilon}^t}{\binom{2^L}{M-s-t}\binom{2^{L-\epsilon}}{t}}  +  o(1) \\
	&= sL+t\epsilon \log L -\log (s!\epsilon!^t)  + o(1).
	\end{align*}
\end{proof}
The result of Theorem \ref{thm:deletion:asymptotic} is particularly interesting, when comparing with Theorem \ref{thm:substitutions:asymptotic}, which depicts the case of substitution errors inside the sequences. It can be seen that correcting substitutions requires $t \log M - \log t!$ more bits of redundancy as compared to insertion or deletion errors only. While this seems surprising, there is a practical reason for this phenomena. For the case of insertion or deletion errors, it is directly possible to identify erroneous sequences, by checking their length to be different from $L$. This is not possible for substitution errors, and erroneous sequences can be confused with correct sequences, which means that additional redundancy is required for detecting the erroneous sequences. In fact, we will show in Construction \ref{con:single:insertion}, how to constructively exploit the identification of erroneous sequences for the case of $(0,1,1)_\mathbb{D}$ deletion errors and obtain a code that asymptotically achieves the bound from Theorem \ref{thm:deletion:asymptotic}. In the following we derive a sphere packing bound for the case when the number of erroneous sequences scales with $M$.
%In the following we elaborate the bound for deletion errors, when the number of erroneous sequences is scaling with $M$.
%
\vspace{-.01cm}
\begin{thm} \label{thm:deletion:scaling}
	For fixed $s,\epsilon \in \mathbb{N}_0$ and fixed $0<\beta<1$, any $(s,M-s,\epsilon)_\del$-correcting code $\Code \subseteq \mathcal{X}_M^L$ satisfies
	$$ r(\Code) \geq M \epsilon \log L +O(M), $$
	when $M \rightarrow \infty$ with $M = 2^{\beta L}$.
\end{thm}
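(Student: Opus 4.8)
The plan is to fuse the deletion-ball lower-bound machinery of Theorem~\ref{thm:deletion:asymptotic} with the linear-scaling argument of Theorem~\ref{thm:substitution:scaling}, now letting the number of erroneous sequences grow with $M$. I would take $y(M) = M/\log\log M$ and $\rho(L) = \sqrt{L\ln L}$ and define $\mathcal{D} = \mathcal{D}_\mathrm{r}\cap\mathcal{D}_\mathrm{e}$ exactly as in Theorem~\ref{thm:deletion:asymptotic}: $\mathcal{D}_\mathrm{r}$ collects all $\mathcal{S}$ with more than $M-y(M)$ sequences satisfying $||\ve{x}||\geq L/2-\rho(L)$, and $\mathcal{D}_\mathrm{e}$ collects all $\mathcal{S}$ containing an $\epsilon$-deletion-correcting subcode of size more than $M-y(M)$. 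Every $\mathcal{S}\in\mathcal{D}$ then has at least $M-2y(M)$ sequences that simultaneously form an $\epsilon$-deletion-correcting code and satisfy $||\ve{x}||\geq L/2-\rho(L)$; this is the set I feed into Lemma~\ref{lemma:lower:bound:deletion}, fixing the erroneous-sequence count to $t = M-s-2y(M)\leq M-s$ (we only need a subfamily of the admissible error patterns to lower-bound the ball size).

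With this choice the multinomial collapses, $\binom{M-2y(M)}{s,t}=\binom{M-2y(M)}{s}$, and Lemma~\ref{lemma:lower:bound:deletion} yields the uniform bound $|B_{s,t,\epsilon}^\del(\mathcal{S})|\geq\binom{M-2y(M)}{s}\binom{L/2-\rho(L)-\epsilon}{\epsilon}^{t}$ for all $\mathcal{S}\in\mathcal{D}$. A sphere-packing count over $\mathcal{D}$, whose outputs consist of $M-s-t=2y(M)$ full-length sequences and $t$ sequences of length $L-\epsilon$, bounds the union of these sub-balls by $\binom{2^L}{2y(M)}\binom{2^{L-\epsilon}}{t}$, so that $|\Code|\leq \binom{2^L}{2y(M)}\binom{2^{L-\epsilon}}{t}\big/\big(\binom{M-2y(M)}{s}\binom{L/2-\rho(L)-\epsilon}{\epsilon}^{t}\big)+|\mathcal{D}^\mathrm{C}|$. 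I would then bound the leftover by $|\mathcal{D}^\mathrm{C}|\leq|\mathcal{X}_M^L\setminus\mathcal{D}_\mathrm{r}|+|\mathcal{X}_M^L\setminus\mathcal{D}_\mathrm{e}|$ via Lemma~\ref{lemma:number:words:run} (reduction factor $\e^{2\rho^2/L}=L^2$) and Lemma~\ref{lemma:non:ecc:sets} with $V_\epsilon^\del\leq\binom{L}{\epsilon}B_\epsilon^\sub$, rewriting the whole bound as the first quotient times $(1+\Delta_\mathrm{r}+\Delta_\mathrm{e})$.

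The crux, and the step I expect to be the main obstacle, is showing $\Delta_\mathrm{r}\to 0$ and $\Delta_\mathrm{e}\to 0$, because, unlike in Theorem~\ref{thm:deletion:asymptotic}, the ball-size factor $\binom{L/2}{\epsilon}^{t}$ now contributes a \emph{growing} positive term $t\epsilon\log(L/2)\sim M\epsilon\log L$ to $\log\Delta_\mathrm{r}$ and $\log\Delta_\mathrm{e}$. The point to verify carefully is that the binomial bookkeeping encapsulated by Lemma~\ref{lemma:asymtptotic:good:sets} (applied as in Theorems~\ref{thm:substitutions:asymptotic} and~\ref{thm:substitution:scaling}) produces a negative term of order $-M\log M/\log\log M$: in $\log\Delta_\mathrm{r}$ the dominant $\Theta(ML(1-\beta))$ contributions of $\binom{2^L}{M-y(M)}$ and $\binom{2^{L-\epsilon}}{t}$ cancel against $\binom{2^L}{2y(M)}$, and the surviving factorial mismatch leaves exactly $-\Theta(M\log M/\log\log M)$, with the $\rho(L)$ term only required to be $o(L)$ so the ball bound stays valid. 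Since $M\log M/\log\log M=\omega(M\epsilon\log L)$, as $\log M/(\log\log M)^2\to\infty$, the negative term dominates the positive $M\epsilon\log L$ and $\Delta_\mathrm{r}\to 0$; the term $\Delta_\mathrm{e}$ is handled identically to Theorem~\ref{thm:substitution:scaling}, its reduction factor $2^{(1-\beta)L}/L^{2\epsilon}$ being far stronger.

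Finally, with $\Delta_\mathrm{r},\Delta_\mathrm{e}\to 0$ I would read off $|\Code|\lesssim \binom{2^L}{M-s-t}\binom{2^{L-\epsilon}}{t}\big/\big(\binom{M}{s}\binom{L/2}{\epsilon}^{t}\big)$ (using $\binom{M-2y(M)}{s}\sim\binom{M}{s}$ for fixed $s$, and $\binom{L/2-\rho(L)-\epsilon}{\epsilon}^t\sim\binom{L/2}{\epsilon}^t$ up to an $o(M)$ factor since $\rho(L)=o(L)$) and compute the redundancy. The dominant contribution is $t\log\binom{L/2}{\epsilon}=t\epsilon\log L-O(t)$ with $t=M-s-2y(M)=M(1+o(1))$, giving $M\epsilon\log L$; a Stirling estimate of the remaining ratio $\binom{2^L}{M}\binom{M}{s}\big/\big(\binom{2^L}{2y(M)}\binom{2^{L-\epsilon}}{t}\big)$ shows its $L$-order terms cancel to $sL=O(\log M)$ and its factorial terms cancel up to $O(M)$. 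Hence $r(\Code)\geq M\epsilon\log L+O(M)$, as claimed.
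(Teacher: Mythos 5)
Your overall architecture matches the paper's proof (same partition $\mathcal{D}=\mathcal{D}_\mathrm{r}\cap\mathcal{D}_\mathrm{e}$, same use of Lemmas \ref{lemma:lower:bound:deletion}, \ref{lemma:number:words:run}, \ref{lemma:non:ecc:sets}, \ref{lemma:asymtptotic:good:sets}, same sphere-packing numerator $\binom{2^L}{2y(M)}\binom{2^{L-\epsilon}}{t}$), and your treatment of $\Delta_\mathrm{e}$ and of the final read-off is fine. But there is a genuine gap at exactly the step you flagged as the crux: with your choice $\rho(L)=\sqrt{L\ln L}$, the term $\Delta_\mathrm{r}$ does \emph{not} tend to zero, and the negative term you invoke does not exist. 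Your claim is that the "factorial mismatch" in the quotient leaves $-\Theta(M\log M/\log\log M)$. A careful Stirling computation shows otherwise: writing $y=y(M)$, $t=M-s-2y$, the factorial terms in
$$\log\frac{\binom{2^L}{M-y}\binom{2^L/L^2}{y}}{\binom{2^L}{2y}\binom{2^{L-\epsilon}}{t}}$$
cancel up to $-s\log M+O(M)$ (the coefficient of $\log M$ is $-(M-y)+y+t=-s$), so no term of order $M\log M/\log\log M$ survives. The only systematic negative contribution is the one coming from the reduction factor $z(L)=\e^{2\rho^2/L}$ of Lemma \ref{lemma:number:words:run}, namely $-y(M)\log\bigl(z(L)y(M)/(\e M)\bigr)$. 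With your $\rho(L)=\sqrt{L\ln L}$ you get $z(L)=L^2$, and since $y(M)=M/\log\log M$ and $\log\log M\sim\log L$, this is $-\tfrac{M}{\log\log M}\bigl(2\log L-\log\log\log M+O(1)\bigr)\approx -2M=O(M)$. Meanwhile the ball-size factor contributes the growing positive term $t\epsilon\log(L/2)\sim M\epsilon\log L$, so $\log\Delta_\mathrm{r}\approx M\epsilon\log L+O(M)\to+\infty$. This is not an artifact of loose bounding: the number of sets with $y(M)$ low-run sequences genuinely is of order $\binom{2^L}{M-y}\binom{2^L/L^2}{y}$, so the bad sets really do swamp the main sphere-packing term once that term has been divided by $\binom{L/2}{\epsilon}^{t}\approx L^{M\epsilon}$.

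The fix is the parameter change the paper makes and which you missed: the run threshold must be strengthened when $t$ scales with $M$. The paper takes $\rho(L)=\sqrt{(L/2)\ln L\,\log^2 L}$, so that Lemma \ref{lemma:number:words:run} gives $z(L)=\e^{2\rho^2/L}=L^{\log^2 L}$ and Lemma \ref{lemma:asymtptotic:good:sets} then produces a negative term of order
$$-y(M)\log\frac{z(L)\,y(M)}{\e M}\approx-\frac{M\log^3 L}{\log(\beta L)}\approx-M\log^2 L,$$
which does dominate $+M\epsilon\log L$ (any $\rho(L)=\omega(\sqrt{L}\log L)$ with $\rho(L)=o(L)$ works; the $o(L)$ condition keeps $\binom{L/2-\rho(L)-\epsilon}{\epsilon}^t$ within $2^{O(M)}$ of $\binom{L/2}{\epsilon}^t$, so the final redundancy is unaffected). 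Your remark that $\rho(L)$ "only [needs] to be $o(L)$ so the ball bound stays valid" has it backwards: the binding constraint on $\rho(L)$ here is a \emph{lower} bound, not an upper one. With this single substitution (and $\Delta_\mathrm{e}$ handled as you describe, its reduction factor $2^{(1-\beta)L}/L^{2\epsilon}$ being strong enough already), the rest of your argument goes through and coincides with the paper's proof.
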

\begin{proof}
	The proof is similar to that of Theorem \ref{thm:deletion:asymptotic} and we use the same notation for $\mathcal{D} = \mathcal{D}_\mathrm{r} \cap \mathcal{D}_\mathrm{e}$ for sets that contain an $\epsilon$-deletion-correcting code of size $|\mathcal{Y}| > M-y(M)$ and more than $M-y(M)$ sequences with at least $||\ve{x}|| \geq L/2 -\rho(L)$ runs, where $y(M) =  M/\log \log M$ and $\rho(L) = \sqrt{L/2 \ln L \log^2 L}$. With Lemma \ref{lemma:lower:bound:deletion}, it follows
	$$ |B_{s,t,\epsilon}^\del(\mathcal{S})| \geq \binom{M-2y(M)}{s} \binom{L/2-\rho(L)-\epsilon}{\epsilon}^{M-2y(M)-s} $$
	for all $\mathcal{S} \in \mathcal{D}$. It follows that the size of any $(s,t,\epsilon)_\del$-correcting code $\Code \subseteq \mathcal{X}_M^L$ is at most
	\begin{align*}
	|\Code| \leq \frac{\binom{2^L}{2y(M)}\binom{2^{L-\epsilon}}{M-2y(M)-s}}{\binom{M-2y(M)}{s} \binom{L/2-\rho(L)-\epsilon}{\epsilon}^{M-y(M)-s}} (1+\Delta_\mathrm{r}+\Delta_\mathrm{e}).
	\end{align*}
	We will show now that $\Delta_\mathrm{r} \rightarrow 0$ and $\Delta_\mathrm{e} \rightarrow 0$ for $M \rightarrow \infty$.
	\begin{align*}
	\log \Delta_\mathrm{r} \leq &\frac{ \binom{L/2-\rho(L)-\epsilon}{\epsilon}^{M} \binom{2^L/L^{\log^2L}}{y(M)}\binom{2^L}{M-y(M)} }{\binom{2^L}{2y(M)}\binom{2^{L-\epsilon}}{M-2y(M)-s}} + O(L) \\
	\leq &\log\frac{\binom{2^L/L^{\log^2L}}{y(M)}  \binom{2^L}{M-y(M)} }{\binom{2^L}{2y(M)}\binom{2^{L-\epsilon}}{M-2y(M)-s}}
	 +M\epsilon \log L + O(L) \\
	\overset{(a)}{\leq} &\log \frac{\binom{2^L/L^{\log^2L}}{y(M)}  \binom{2^L}{M-y(M)} }{\binom{2^L}{M-s}} + M\epsilon \log L + O(M) \\
	\overset{(b)}{\leq} & -\frac{M \log^3L}{\log (\beta L)} + O(M\log L)
	\end{align*}
	where for inequality $(a)$ we used that
	$$\log \frac{\binom{2^L}{M-s}}{\binom{2^L}{2y(M)}\binom{2^{L-\epsilon}}{M-2y(M)-s}} \leq O(M)$$
	and applied Lemma \ref{lemma:asymtptotic:good:sets} in inequality $(b)$. Hence, $\Delta_\mathrm{r} \rightarrow 0$ for $M \rightarrow \infty$. Analogous to the proof of Theorem \ref{thm:substitution:scaling}, it can be shown that $\Delta_\mathrm{e} \rightarrow 0$ for $M \rightarrow \infty$. We obtain for the maximum size of a $(s,M-s,\epsilon)_\del$-correcting code
	$$ |\Code| \lesssim \frac{\binom{2^L}{2y(M)}\binom{2^{L-\epsilon}}{M-2y(M)-s}}{\binom{M}{s} \binom{L/2-\rho(L)-\epsilon}{\epsilon}^{M-y(M)-s}}.$$
	The redundancy is consequently at least
	\begin{align*}
	r(\Code) & \geq \log \frac{\binom{2^L}{M}\binom{M}{s} \binom{L/2-\rho(L)-\epsilon}{\epsilon}^{M-y(M)-s}}{\binom{2^L}{2y(M)}\binom{2^{L-\epsilon}}{M-2y(M)-s}} + o(1) \\ &\geq  M \epsilon \log L + O(M).
	\end{align*}
\end{proof}
\section{Code Constructions}\label{sec:const}
Having available suitable bounds on the redundancy of $(s,t,\epsilon)_\ET$-correcting codes, we now present several code constructions for DNA storage systems that are suitable for different types of errors $\ET$ and choices of parameters $s,t$ and $\epsilon$. Note that constructing codes for the considered channel model is surprisingly challenging. This can be explained as follows. In order to find efficient codes, one has to deal with both, the fact that sequences are received in an unordered fashion and also that sequences are distorted by random errors. Especially for the case when only few sequences contain errors, concatenated schemes, where each strand is protected by an inner code and an outer code is used over all strands, are suboptimal. This is because the inner code is "wasted" for the correct sequences, as they do not contain any errors and no inner code would be needed to obtain the correct strand. Therefore, it is important to design codes over the whole set of strands that use redundancy over different sequences and allow to correct errors in a subset of strands. We start with constructions that are suitable for an arbitrary number of errors per sequence and will elaborate more specialized constructions towards the end of this section.

\subsection{Indexing Sequences}
A common efficient way to combat the loss of ordering of sequences is to prepend an index to each sequence, which contains the position $i$ of the sequence. This approach has been discussed in different settings, e.g. \cite{KT18,HRT17}. The set of all possible sets of sequences with indexing is given by
$$ \Code_\mathrm{I}(M,L)  = \{ \mathcal{S} \in \mathcal{X}_M^L : \ve{x}_i = (\ve{I}(i),\ve{u}_i), i \in \{1,2,\dots, M \},$$
where $\mathbf{I}(i) \in \Sigma_2^{\lceil \log M \rceil}$ denotes the binary representation of $i-1$ and $\mathbf{u}_i \in \Sigma_2^{L-\lceil \log M \rceil}$ are arbitrary information vectors. Note that by this definition the prefix of an indexed sequence contains the index of a sequence $\mathrm{pref}_{\lceil \log M\rceil}(\ve{x}_i) = \ve{I}(i)$.

This requires an index $\mathbf{I}(i)$ of $\lceil \log M\rceil$ bits in each sequence so the maximum number of information bits that can be stored this way is $M(L-\lceil \log M\rceil)$ without any error correction. While this solution is attractive for its simplicity, it introduces already a redundancy, which increases linearly in $M$, which is stated in the following theorem.
\begin{thm} \label{thm:indexing:redundancy}
	For fixed $0<\beta<1$, the redundancy required for indexing sequences is given by
	$$r(\Code_\mathrm{I}(M,L)) = M(\lceil \log M \rceil - \log M+\log \e) + o(M),$$
	when $M \rightarrow \infty$ with $M = 2^{\beta L}$.
\end{thm}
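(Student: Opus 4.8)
The plan is to first compute the cardinality of $\Code_\mathrm{I}(M,L)$ exactly and then read off the redundancy from Definition \ref{def:redundancy} using standard asymptotic expansions. Since the indices $\ve{I}(1),\dots,\ve{I}(M)$ are the distinct binary representations of $0,\dots,M-1$, every choice of information vectors $(\ve{u}_1,\dots,\ve{u}_M) \in (\Sigma_2^{L-\lceil \log M\rceil})^M$ produces $M$ pairwise distinct sequences $\ve{x}_i = (\ve{I}(i),\ve{u}_i)$ and hence a valid word in $\mathcal{X}_M^L$. Conversely, because each sequence carries its own index in its prefix, the information vectors can be recovered from the unordered set $\mathcal{S}$ (match each index value to the unique sequence carrying it), so distinct tuples yield distinct sets. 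This bijection gives
$$|\Code_\mathrm{I}(M,L)| = 2^{M(L - \lceil \log M\rceil)},$$
and by Definition \ref{def:redundancy} it remains to show that $\log\binom{2^L}{M} = ML - M\log M + M\log\e + o(M)$, since then $r(\Code_\mathrm{I}(M,L)) = \log\binom{2^L}{M} - M(L-\lceil\log M\rceil)$ collapses to the claimed expression.

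I would estimate $\log\binom{2^L}{M}$ by splitting $\binom{2^L}{M} = \ff{(2^L)}{M}/M!$ and handling the two factors separately. For the falling factorial, write $\log \ff{(2^L)}{M} = \sum_{i=0}^{M-1}\log(2^L - i) = ML + \sum_{i=0}^{M-1}\log(1 - i2^{-L})$. The essential input here is the regime $M = 2^{\beta L}$ with $\beta<1$: since $i2^{-L} \le M2^{-L} = 2^{(\beta-1)L} \to 0$, the correction term is, up to the constant $\log\e$, of order $\sum_{i=0}^{M-1} i2^{-L} = O(M^2 2^{-L}) = O(2^{(2\beta-1)L})$, which is $o(M) = o(2^{\beta L})$ precisely because $2\beta-1 < \beta$ when $\beta<1$. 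Hence $\log \ff{(2^L)}{M} = ML + o(M)$.

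Next I would apply Stirling's formula $\log M! = M\log M - M\log\e + \tfrac12\log(2\pi M) + o(1)$, whose trailing terms are $O(\log M) = o(M)$, so that $-\log M! = -M\log M + M\log\e + o(M)$. Combining the two estimates gives $\log\binom{2^L}{M} = ML - M\log M + M\log\e + o(M)$, and substituting into the redundancy identity yields
$$r(\Code_\mathrm{I}(M,L)) = M(\lceil \log M\rceil - \log M + \log\e) + o(M).$$

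The main obstacle is the control of the falling-factorial correction $\sum_{i=0}^{M-1}\log(1-i2^{-L})$: one must verify it is genuinely $o(M)$ and not merely bounded, and this is exactly the point where the hypothesis $\beta<1$ (equivalently $2\beta-1<\beta$) is indispensable, since for $\beta$ approaching or exceeding $1$ this sum would contribute at order $\Theta(M)$ or larger and the stated asymptotic would break. Everything else is routine Stirling bookkeeping of $o(M)$ terms.
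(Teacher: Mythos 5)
Your proposal is correct, and its skeleton necessarily matches the paper's: establish $|\Code_\mathrm{I}(M,L)| = 2^{M(L-\lceil \log M \rceil)}$ and then expand $\log\binom{2^L}{M}$ asymptotically. The difference lies in how the binomial estimate is carried out. The paper's proof is a two-line appeal to its Lemma~\ref{lemma:approx:binom}, which gives $\log \binom{f(n)}{g(n)} = g(n) \log \frac{\e f(n)}{g(n)} + o(g(n))$ under the hypotheses $g(n) = o(f(n))$ and $g(n) = \omega(1)$; that lemma is proved in Appendix~\ref{app:auxiliary:lemmas} by applying Robbins' refinement of Stirling's formula to all three factorials in $\frac{f!}{(f-g)!\,g!}$ and then expanding the resulting term $\bigl(f-g+\tfrac12\bigr)\log\bigl(1-\tfrac{g}{f}\bigr)$ via Lemma~\ref{lemma:ln:1:x}. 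You instead split $\binom{2^L}{M} = \ff{(2^L)}{M}/M!$, estimate the falling factorial as the telescoping sum $\sum_{i=0}^{M-1}\log(2^L-i) = ML + O(M^2 2^{-L})$, and apply Stirling only to $M!$. The two computations are equivalent in substance (Stirling plus a first-order logarithm expansion), but your route is more self-contained: it avoids Stirling approximations of the enormous factorials $(2^L)!$ and $(2^L-M)!$ and needs no separate lemma. The price is that you invoke the specific regime $M = 2^{\beta L}$ with $\beta < 1$ to dispose of the correction $O(M^2 2^{-L}) = O(2^{(2\beta-1)L})$, whereas the paper's lemma gets the same conclusion from the weaker qualitative hypotheses $M = o(2^L)$ and $M = \omega(1)$, which is the generality the paper's two-line proof actually cites ($M = o(2^L)$, $M = \omega(1)$). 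A minor bonus on your side: you make explicit the bijection between information tuples $(\ve{u}_1,\dots,\ve{u}_M)$ and codewords that justifies $\log|\Code_\mathrm{I}(M,L)| = M(L - \lceil \log M\rceil)$, a point the paper leaves implicit in the sentence preceding the theorem.
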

\begin{proof}
	From $M = 2^{\beta L}$ with $0<\beta<1$, we have that $M = o(2^L)$ and $M = \omega(1)$, when $M \rightarrow \infty$. Therefore,
	\begin{align*}
	r(\Code_\mathrm{I}(M,L)) =& \log\binom{2^L}{M} - M(L-\lceil \log M\rceil)\\
	=& M(\left\lceil \log M  \right\rceil - \log M + \log \e) + o(M),
	\end{align*}
	where we used Lemma \ref{lemma:approx:binom}, which is derived in Appendix \ref{app:auxiliary:lemmas}, to characterize the binomial coefficient.
\end{proof}
This means that every construction which uses indexing already incurs a redundancy of at best roughly $M\log \e$ bits. Note that this amount can be significant, as the number of sequences $M$ is significantly larger than their length $L$, as explained in Remark \ref{rem:order:L:M}. However, in terms of code rate, it has been shown in \cite{HRT17} that for the case of no errors inside the sequences, the indexing approach is capacity achieving.

The following function, which collects all indices of a set of sequences will be useful for our constructions that are based on indexing.
\begin{defn}
	For any set $\mathcal{A}\subseteq \mathcal{X}_M^L$ we define
	$$\mathcal{I}(\mathcal{A}) = \bigcup_{\ve{x} \in \mathcal{A}} \{\mathrm{pref}_{\lceil \log M \rceil}(\ve{x})\}$$
	to be the set of indices of the sequences in $\mathcal{A}$. 
\end{defn}
Note that it is possible that $|\mathcal{I}(\mathcal{A})| < |\mathcal{A}|$, if one (or more) of the indices appear multiple times because of errors. 

\subsection{An Index-Based Construction using MDS Codes} \label{ss:index:based:construction:MDS}
The following construction is based on adding an index in front of all sequences $\ve{x}_i$ and using an MDS code over the $M$ sequences for error correction. For all $n$ and $k$, where $k\leq n$ we denote by $\mathsf{MDS}[n,k]$ an MDS code over any field of size at least $n-1$.

In Construction~\ref{con:index:rs}, the sequences $\ve{x}_i = (\ve{I}(i), \ve{u}_i)$ of each codeword set are constructed by writing a binary representation of the index, $\ve{I}(i)$, of length $\lceil \log M \rceil$ in the first part of each sequence. Then, the remaining part $\ve{u}_i$ is viewed as a symbol over the extension field $\mathbb{F}_{2^{L-\lceil \log M \rceil}}$, and $(\ve{u}_1,\ldots,\ve{u}_M)$  will form a codeword in some MDS code\footnote{Note that we assume $M \leq \sqrt{2^L}$ in this section to guarantee the existence of the MDS code \cite[ch. 11]{Rot06}. However, the case $M > \sqrt{2^L}$ can always be used by employing non-MDS codes.}. A similar construction has been used in \cite{HRT17}, where index-based constructions are analyzed for the correction of only a loss of sequences.
\begin{construction} \label{con:index:rs}
For all $M, L$, and a positive integer $\delta$, let $\Code_1 (M,L,\delta)$ be the code defined by \vspace{-0.5ex}
	\begin{align*}
	\Code_1 (M,L,\delta)= \{ \mathcal{S} \in \mathcal{X}_M^L : \, & \ve{x}_i = (\ve{I}(i), \ve{u}_i),  \\ 
	&(\ve{u}_1, \ldots, \ve{u}_M) \in \mathsf{MDS}[M,M-\delta] \}.
	\end{align*}
\end{construction}
This code provides a direct construction to correct a loss of sequences and erroneous sequences with an arbitrary amount of errors each. The error correction capability for several types of errors is summarized in the following lemma.
\begin{lemma} \label{lemma:index:rs:cor}
For all $M,L,\delta$, the code $\Code_1 (M,L,\delta)$ is
\begin{itemize}
	\item $(s,t,\bullet)_\mathbb{L}$-correcting for all $s+2t \leq \delta$,
	\item $(s,t,\bullet)_\ins$-correcting for all $s+t \leq \delta$,
	\item $(s,t,\bullet)_\del$-correcting for all $s+t \leq \delta$.
\end{itemize}
\end{lemma}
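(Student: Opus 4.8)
The plan is to give a single decoding procedure and then bound its error-and-erasure count separately for each error type. The starting point is that $\mathsf{MDS}[M,M-\delta]$ has minimum distance $M-(M-\delta)+1=\delta+1$, so it recovers $(\ve{u}_1,\dots,\ve{u}_M)$ from any pattern of $e$ symbol errors and $\rho$ symbol erasures whenever $2e+\rho\le\delta$. Since the index $\ve{I}(i)=\mathrm{pref}_{\lceil\log M\rceil}(\ve{x}_i)$ identifies the coordinate of each information symbol $\ve{u}_i$, I would convert the received set $\mathcal{S}'$ into a (partially erased, possibly errorful) input for the MDS decoder as follows: for each coordinate $i\in[M]$, gather all length-$L$ sequences $\ve{y}\in\mathcal{S}'$ with $\mathrm{pref}_{\lceil\log M\rceil}(\ve{y})=\ve{I}(i)$; if there is exactly one such $\ve{y}$, feed its suffix $\mathrm{suff}_{L-\lceil\log M\rceil}(\ve{y})$ as the symbol at coordinate $i$, and otherwise declare an erasure at coordinate $i$. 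Each part of the lemma then reduces to bounding $e$ and $\rho$ for this rule and invoking $2e+\rho\le\delta$.

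I would first dispose of the two easier cases, $\ins$ and $\del$ under $s+t\le\delta$. The key point is that a purely inserted (resp. deleted) erroneous sequence necessarily has length strictly greater (resp. smaller) than $L$, while every correct sequence has length exactly $L$. Hence no erroneous sequence survives the length-$L$ filter, so for each $i$ the only possible length-$L$ candidate carrying prefix $\ve{I}(i)$ is the correct sequence $\ve{x}_i$ itself, if present. Consequently every coordinate received correctly decodes to the right symbol, and each of the $\le s$ lost and $\le t$ erroneous coordinates simply has no length-$L$ candidate and becomes an erasure. This yields $e=0$ and $\rho\le s+t\le\delta$, so MDS erasure decoding recovers $(\ve{u}_1,\dots,\ve{u}_M)$ and hence $\mathcal{S}$.

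The main work is the $\mathbb{L}$ case under $s+2t\le\delta$, where an erroneous sequence may again have length exactly $L$ (balanced insertions and deletions, or substitutions) with a corrupted index, and can therefore ``pollute'' a coordinate. I would set up a charging argument. Call an erroneous sequence \emph{active} if it has length $L$ and its prefix equals some valid index; there are at most $t$ of these, and each points to a unique coordinate. Let $L_0$ be the set of coordinates whose correct sequence $\ve{x}_i$ is absent from $\mathcal{S}'$, so $|L_0|\le s+t$. Every error occurs at a coordinate of $L_0$ (if $\ve{x}_i$ is present, the coordinate either has it as the unique candidate and decodes correctly, or has $\ge 2$ candidates and is erased), and each error consumes exactly one active sequence pointing into $L_0$; writing $p_{L_0}$ and $p_{\bar L_0}$ for the numbers of active sequences pointing at $L_0$ and its complement, injectivity of this charge gives $e\le p_{L_0}$. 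Splitting erasures the same way gives $\rho\le(|L_0|-e)+p_{\bar L_0}\le(s+t-e)+p_{\bar L_0}$, because the erased $L_0$ coordinates are at most $|L_0|-e$ and each erased non-$L_0$ coordinate requires a distinct active sequence pointing at it. Combining, $2e+\rho\le e+(s+t)+p_{\bar L_0}\le p_{L_0}+p_{\bar L_0}+(s+t)\le t+(s+t)=s+2t\le\delta$, so MDS decoding again succeeds.

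I expect the $\mathbb{L}$ counting to be the only real obstacle, since one must check that coincidental collisions—an erroneous sequence landing exactly on a codeword sequence, or two erroneous sequences sharing a corrupted index, which are exactly the duplicate-collapse phenomena allowed by Definition \ref{def:dna:channel:error:ball}—never push $2e+\rho$ above the bound. The cleanest way to make this airtight is to route the whole argument through the injective charging of errors and of non-$L_0$ erasures to distinct active erroneous sequences, as sketched: each such collision can only turn a potential error into a correct symbol or create an extra erasure at an already-counted coordinate, so it cannot increase $2e+\rho$ beyond $s+2t$.
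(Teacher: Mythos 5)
Your proof is correct and follows essentially the same route as the paper's: the same decoder (declare coordinate $i$ an erasure unless exactly one received sequence carries prefix $\ve{I}(i)$, otherwise take its suffix), the same length-based identification of erroneous sequences for the pure-insertion and pure-deletion cases, and the same counting showing that the resulting error/erasure pattern satisfies $2e+\rho\le s+2t\le\delta$. The only difference is presentational: you organize the count as a static charging argument (errors and non-$L_0$ erasures injectively charged to ``active'' erroneous sequences), whereas the paper tracks the same quantities via a sequential ``genie'' argument that adds the error-free sequences first and then the erroneous ones one by one, bounding $s'+2t'\le s+t+t'+|\mathcal{I}(\SE')\cap\mathcal{I}(\SG)|\le s+2t$.
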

\begin{proof}
Denote by $\mathcal{S}'$ the received set after a loss of sequences and errors. We start with proving the lemma for the case of arbitrary edit errors. According to Definition \ref{def:dna:channel:error:ball}, we write $\SG, \SL,\SE$ as the sets of error-free, lost, and erroneous sequences, and $\SE'$ are the erroneous outcomes of the sequences in $\SE$. First, we observe that if we can recover the MDS codeword $\ve{U} = (\ve{u}_1, \ve{u}_2, \dots, \ve{u}_M)$, we can also recover $\mathcal{S}$ by prepending the index $\ve{I}(i)$ in front of each $\ve{u}_i$. Given $\mathcal{S}'$, we then create the received estimate word $\ve{U}'$ by declaring all positions $i$ with
	$$ |\{ \ve{x}' \in \mathcal{S}': \mathrm{pref}_{\lceil \log M \rceil} (\ve{x}') = \ve{I}(i) \}| \neq 1, $$
	i.e., for which there is not exactly one index in $\mathcal{S}'$, as erasures. The remaining positions in $\ve{U}'$ are filled with the corresponding symbols $\ve{u}'_i$. We will show that the number of erasures $s'$ and the number of errors $t'$ in $\ve{U}'$ satisfy $s' + 2t' \leq \delta$ by the following consideration. Consider a genie, which first only adds the error-free sequences $\SG$ to $\ve{U}'$. Since $|\SG|\geq M-s-t$ at least $M-s-t$ positions in $\ve{U}'$ have been filled with correct symbols, and thus there remain at most $s'\leq s+t$ erasures and $t'=0$ errors inside $\ve{U}'$ up to this point. Now, the genie successively adds the $t$ erroneous sequences $\SE'$ to $\ve{U}'$. Each of the $t$ erroneous sequences $\ve{x}' \in \SE'$ can have an arbitrary index $\mathrm{pref}_{\lceil \log M \rceil}(\ve{x}')$ from $0$ to $M-1$. If the erroneous sequence $\ve{x}'$ has an index of a position which is not occupied yet, $t'$ increases by one and $s'$ decreases by one, as this position is not declared as an erasure anymore and contains now an erroneous symbol. If the erroneous sequence $\ve{x}'$ has an index, which is already present, this position is declared as an erasure as explained above. Consequently, $s'$ increases by one for this case. Hence, the number of erasures in $\ve{U}'$ is bounded from above by $s' \leq s+t-t' + |\mathcal{I}(\SE') \cap \mathcal{I}(\SG)|$ where $|\mathcal{I}(\SE') \cap \mathcal{I}(\SG)|$ accounts for the situation when an erroneous sequence has the same index as an error-free one. The number of errors is at most $t' \leq |\mathcal{I}(\SE') \cap (\mathcal{I}(\SE) \cup \mathcal{I}(\SL))|$. Hence, $s'+2t' \leq s+t+t' + |\mathcal{I}(\SE') \cap \mathcal{I}(\SG)| \leq s+2t \leq \delta$, which proves the error correcting capability. 
	
	For the case of only insertion ($\ins$) and only deletion ($\del$) errors, it is possible to identify the erroneous sequences by checking their length to be larger (respectively smaller) than $L$. If these sequences are discarded, there are in total $s+t$ erasures inside the MDS codeword, which can be corrected, if $s+t\leq \delta$.
\end{proof}
Note that for the practically important case of a loss of sequences and combinations of substitution and deletion errors, $\Code_1(M,L,\delta)$ can correct all errors, if $s+2t_\sub + t_\del\leq \delta$, where $t_\sub$ is the number of sequences suffering from substitution errors only and $t_\del$ is the number of sequences with deletion errors. The same also holds for combinations of substitution and insertion errors. However, this is not true for combinations of substitutions, insertions \emph{and} deletions as a sequence that contains insertions and deletions might have length exactly $L$ and therefore cannot be erased. In this case, as elaborated in the proof, $s+2t \leq \delta$ has to hold. More generally, erroneous sequences which have length exactly $L$ require $2$ redundancy symbols inside the MDS codeword to be correctable, while sequences which have a different length only require a single symbol, as they can be detected as erroneous.

The redundancy of Construction~\ref{con:index:rs} is stated in the following theorem.
\begin{thm} \label{thm:red:index:rs}
For all $M,L,\delta$, the redundancy of the code $\Code_1 (M,L,\delta)$ is
	$$ r(\Code_1 (M,L,\delta)) = r(\Code_\mathrm{I}(M,L)) + \delta (L-\lceil \log M \rceil).$$
\end{thm}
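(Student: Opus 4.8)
The plan is to compute $|\Code_1(M,L,\delta)|$ and $|\Code_\mathrm{I}(M,L)|$ exactly and compare them through Definition \ref{def:redundancy}, since both redundancies are measured against the same quantity $\log\binom{2^L}{M}$. The essential observation is that every codeword set $\mathcal{S} \in \Code_1(M,L,\delta)$ is uniquely determined by its tuple of information symbols $(\ve{u}_1,\ldots,\ve{u}_M)$: because the prefixes $\ve{I}(1),\ldots,\ve{I}(M)$ are pairwise distinct, the index of each sequence $\ve{x}_i$ identifies its position $i$, so $\mathcal{S}$ can be reconstructed from the MDS codeword and, conversely, distinct MDS codewords yield distinct sets. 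Hence the map from codewords of $\mathsf{MDS}[M,M-\delta]$ to codeword sets is a bijection onto $\Code_1(M,L,\delta)$.

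First I would count the information symbols. Each $\ve{u}_i$ lies in $\Sigma_2^{L-\lceil \log M \rceil}$, which we identify with the field $\mathbb{F}_Q$ of size $Q = 2^{L-\lceil \log M\rceil}$. An $\mathsf{MDS}[M,M-\delta]$ code over $\mathbb{F}_Q$ has exactly $Q^{M-\delta}$ codewords, so by the bijection above $|\Code_1(M,L,\delta)| = Q^{M-\delta}$ and therefore $\log|\Code_1(M,L,\delta)| = (M-\delta)(L-\lceil \log M\rceil)$. For the unconstrained indexing code, $\Code_\mathrm{I}(M,L)$ places an arbitrary $\ve{u}_i \in \Sigma_2^{L-\lceil \log M\rceil}$ behind each of the $M$ fixed indices, giving $|\Code_\mathrm{I}(M,L)| = Q^M$ and $\log|\Code_\mathrm{I}(M,L)| = M(L - \lceil\log M\rceil)$.

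Substituting both cardinalities into Definition \ref{def:redundancy} yields
\begin{align*}
r(\Code_1(M,L,\delta)) &= \log\binom{2^L}{M} - (M-\delta)(L-\lceil\log M\rceil)\\
&= \Big(\log\binom{2^L}{M} - M(L-\lceil\log M\rceil)\Big) + \delta(L-\lceil\log M\rceil)\\
&= r(\Code_\mathrm{I}(M,L)) + \delta(L-\lceil\log M\rceil),
\end{align*}
which is the claimed identity. Note that the result is exact and holds for all $M,L,\delta$, so no asymptotic expansion of $r(\Code_\mathrm{I}(M,L))$ is needed; only the relative difference of the log-cardinalities matters. The one point requiring care is justifying that no two distinct MDS codewords collapse to the same set and that an MDS code over $\mathbb{F}_Q$ with the stated number of codewords exists: the former follows from the distinctness of the indices established above, and the latter is guaranteed by the standing assumption $M \le \sqrt{2^L}$ noted in the definition of the construction.
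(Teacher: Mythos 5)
Your proof is correct and follows essentially the same route as the paper: the paper's (much terser) proof likewise decomposes the redundancy as the indexing cost $r(\Code_\mathrm{I}(M,L))$ plus $\delta(L-\lceil\log M\rceil)$ bits for the $\delta$ redundant MDS symbols, which is exactly your cardinality computation $\log|\Code_\mathrm{I}(M,L)|-\log|\Code_1(M,L,\delta)| = \delta(L-\lceil\log M\rceil)$. Your explicit verification that the map from MDS codewords to sets is a bijection (via distinctness of the indices) is a detail the paper leaves implicit, but it is the same argument.
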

\begin{proof}
	First, indexing the sequences requires a redundancy of $r(\Code_\mathrm{I}(M,L))$, which is derived in Theorem \ref{thm:indexing:redundancy}. Second, the MDS code has $\delta$ redundant symbols and thus there are $\delta(L-\lceil \log M \rceil)$ additional redundancy bits.
\end{proof}
While the redundancy of Construction \ref{con:index:rs} can be very large, especially for the case $M \gg L$, it provides some very useful features. First, it is possible to efficiently encode and decode this code using standard encoders and decoders for MDS codes. Second, it is not necessary to design the code for a specific number of errors $s$ and $t$, but rather their sum $s+2t$, which allows for a flexible decoding procedure.

\subsection{A Construction Based On Constant Weight Codes}

Imposing an ordering (e.g., lexicographic) onto the sequences in $\Sigma_2^L$, every data set $S \in \mathcal{X}_M^L$ can be represented by a binary vector $\mathbf{v}(\mathcal{S})$ of length $2^L$, where each non-zero entry in $\mathbf{v}(\mathcal{S})$ indicates that a specific sequence is contained in the set $\mathcal{S}$. 

The set of possible data sets can therefore be represented\footnote{This representation has been used as a proof technique in \cite{HRT17}.} by constant-weight binary vectors of length $2^L$
\[ \mathcal{V}_M^L = \{ \mathbf{v} \in \{0,1\}^{2^L} : \mathrm{wt}(\mathbf{v}) = M \}, \]
where $\mathrm{wt}(\mathbf{v})$ denotes the \textit{Hamming weight} of $\mathbf{v}$, i.e., the number of non-zero entries inside the vector $\mathbf{v}$. That is, the mapping $\mathbf{v}$ defines an isomorphism between $\mathcal{X}_M^L$ and $\mathcal{V}_M^L$ and thus $\ve{v}^{-1}$ is well-defined. Using this representation, a loss of a sequence $\ve{x} \in \mathcal{S}$ corresponds to an asymmetric $1 \rightarrow 0$ error inside $\ve{v}(\mathcal{S})$ at the position corresponding to $\ve{x}$. Substitution errors inside a sequence $\ve{x} \in \mathcal{S}$ translate to single errors in the Johnson graph in $\ve{v}(\mathcal{S})$, i.e. a single $1 \rightarrow 0$ at the corresponding position of the original sequence $\ve{x}$, and a single $0 \rightarrow 1$ error at the position of its erroneous outcome $\ve{x}'$. In case, the erroneous outcome $\ve{x}'$ is already present in $\mathcal{S}'$, the $0 \rightarrow 1$ error is omitted and there is only a single asymmetric $1 \rightarrow 0$ error at the position of the original sequence $\ve{x}$, similar to a loss of a sequence. For codes in the Johnson graph, the reader is referred to, e.g., \cite{BSSS90}
\begin{example} \label{ex:set:mapping}
	Consider the following $M=3$ stored sequences $\mathcal{S} = \{(001), (010), (110)\}$, each of length $L=3$. We choose $\ve{v}(\mathcal{S})$ to map each sequence $\ve{x} \in \mathcal{S}$ to its decimal equivalent by standard base conversion and let $\ve{v}(\mathcal{S})$ be non-zero at exactly these indices. Hence, e.g., the sequence $(110)$ is mapped to $1\cdot 2^2 + 1 \cdot 2^1 + 0 \cdot 2^0 + 1 = 7$ and thus $\ve{v}(\mathcal{S})$ will be non-zero at index $7$. Note that we additionally add $1$, since we index vectors starting by $1$. Therefore, $\ve{v}(\mathcal{S}) = (01100010)$. Assume now, the set $\mathcal{S}$ is transmitted over a $(1,1,2)_\sub$ channel, resulting in $\mathcal{S}' = \{ (001),(111) \}$, where the sequence $(110)$ was lost and the sequence $(010)$ has been perturbed by two substitution errors. The corresponding binary representation is $\ve{v}(\mathcal{S}') = (01000001)$, where there was a single $1\rightarrow 0$ at position $7$ due to the loss of the sequence $(110)$ and $1\rightarrow 0$ and $0 \rightarrow 1$ errors at positions $3$, respectively $8$, since the sequence $(010)$ was distorted to the sequence $(111)$.
\end{example}

With this principle in mind, we define a code that can correct asymmetric errors and errors in the Johnson graph.
\begin{defn}
	For all $M,L$ and positive integers $s,t$, we define $\Code_M^L(s,t)\subseteq \mathcal{V}_M^L$ to be a code of length $2^L$ that consists of codewords with constant Hamming weight $M$, which corrects $s$ asymmetric $1 \rightarrow 0$ errors and $t$ errors in the Johnson graph.
\end{defn}
With such a code $\Code_M^L(s,t)\subseteq \mathcal{V}_M^L$ in hand that can correct asymmetric errors and errors in the Johnson graph, we can construct a code for the DNA storage channel.
\begin{construction} \label{con:constant:weight}
 For all $M,L$, we define the following code
	$$ \Code_2(M,L,s,t) = \{ \mathcal{S} \in \mathcal{X}_M^L : \ve{v}(\mathcal{S}) \in \Code_M^L(s, t) \}. $$
\end{construction}
By this construction, given a constant-weight code $\Code_M^L(s,t)$, we construct the DNA storage code $\Code_2(M,L,s,t)$ by mapping each $\ve{c} \in \Code_M^L(s,t)$ to its corresponding set $\mathcal{S} = \ve{v}^{-1}(\ve{c})$. Note that this mapping can be efficiently implemented, by, e.g., a decimal to binary mapping of the non-zero positions in $\ve{c}$, as illustrated in Example \ref{ex:set:mapping}. The correctness of the construction is established in the following lemma. 
\begin{lemma} \label{lemma:constant:weight:correcting}
For all $M,L$ and positive integers $s,t$, the code $\Code_2(M,L,s,t) $ is an $(s, t, \bullet)_\mathbb{L}$-correcting code.
\end{lemma}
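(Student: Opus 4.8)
The plan is to transfer the problem into the constant-weight representation and then invoke the error-correction guarantee of $\Code_M^L(s,t)$. Recall that $\ve{v}$ is a bijection between $\mathcal{X}_M^L$ and $\mathcal{V}_M^L$, so for two distinct codewords $\mathcal{S}_1, \mathcal{S}_2 \in \Code_2(M,L,s,t)$ the associated indicator vectors $\ve{v}(\mathcal{S}_1)$ and $\ve{v}(\mathcal{S}_2)$ are distinct elements of $\Code_M^L(s,t)$. Following Definition \ref{def:correcting:code}, I would argue by contradiction: suppose some $\mathcal{S}' \in B^\mathbb{L}_{s,t,\bullet}(\mathcal{S}_1) \cap B^\mathbb{L}_{s,t,\bullet}(\mathcal{S}_2)$ exists, and set $\ve{r} = \ve{v}(\mathcal{S}')$, the indicator vector of the (possibly smaller than $M$) received set.

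The core step is to show that $\ve{r}$ lies in the asymmetric-plus-Johnson error ball of each $\ve{v}(\mathcal{S}_i)$. Fix $i$ and a partition $(\SG, \SL, \SE)$ of $\mathcal{S}_i$ with $|\SL| \le s$, $|\SE| \le t$ and set $\SE'$ of erroneous outcomes, so that $\mathcal{S}' = \SG \cup \SE'$. I would translate each channel event into a coordinate operation on $\ve{v}(\mathcal{S}_i)$, exactly as in the discussion preceding Construction \ref{con:constant:weight}: each lost sequence $\ve{x} \in \SL$ flips the coordinate at position $\ve{x}$ from $1$ to $0$, an asymmetric error; each erroneous sequence $\ve{x}_{e_j} \to \ve{x}_j'$ flips the coordinate at $\ve{x}_{e_j}$ from $1$ to $0$ and the coordinate at $\ve{x}_j'$ from $0$ to $1$, a single edge in the Johnson graph. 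This accounts for at most $s$ asymmetric and at most $t$ Johnson errors, so $\ve{r}$ lies in the corresponding error ball of both $\ve{v}(\mathcal{S}_1)$ and $\ve{v}(\mathcal{S}_2)$; since $\Code_M^L(s,t)$ corrects these error types and the two centers are distinct, the balls are disjoint, yielding the desired contradiction.

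The main obstacle is the bookkeeping for \emph{degenerate} events, where a coordinate that should receive a $0 \to 1$ flip is already occupied: this happens when an erroneous outcome $\ve{x}_j'$ coincides with another received sequence (an error-free sequence in $\SG$, the outcome of a different erroneous sequence, or even the position of a lost sequence in $\SL$). In such cases the duplicate is discarded, the Johnson edge collapses to a bare $1 \to 0$ flip, and $\ve{r}$ has strictly smaller weight than $M$. I would handle this by verifying that a collapsed (degenerate) Johnson error is never harder to decode than a genuine one: if a common $\ve{r}$ were channel-reachable from both centers through such degenerate events, then comparing the weights and symmetric differences of $\ve{v}(\mathcal{S}_1)$, $\ve{v}(\mathcal{S}_2)$ and $\ve{r}$ shows they still differ by at most the combined budget of $s$ asymmetric and $t$ Johnson moves, so the overlap forces the two codewords inside the correction radius of $\Code_M^L(s,t)$. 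Confirming that these collisions keep $\ve{r}$ within the ball that $\Code_M^L(s,t)$ is guaranteed to correct, rather than enlarging it beyond that radius, is the one place where I would spend the most care.
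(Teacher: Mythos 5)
Your setup is sound and matches the paper's framework: the isomorphism $\ve{v}$, the translation of a lost sequence into a single asymmetric $1\rightarrow 0$ error and of an erroneous sequence into a Johnson-graph error, and the identification of collision (degenerate) events as the crux of the proof. But the crux is exactly where your argument stops being a proof. In a degenerate event the received vector $\ve{r}$ has $s'$ asymmetric errors and $t'$ Johnson errors with $s'+t'\leq s+t$ and $t'\leq t$, but possibly $s'>s$ (up to $s+t$ bare $1\rightarrow 0$ flips). Such an $\ve{r}$ lies \emph{outside} the ball of ``at most $s$ asymmetric plus at most $t$ Johnson errors'' that $\Code_M^L(s,t)$ is, by definition, guaranteed to handle; the definition of $\Code_M^L(s,t)$ is operational, not a minimum-distance condition, so nothing can be concluded from the bare fact that the centers ``differ by at most the combined budget.'' Consequently, the property you hope to verify in your last paragraph --- that collisions keep $\ve{r}$ within the guaranteed ball --- is false as stated, and ``comparing weights and symmetric differences'' is a restatement of the goal rather than an argument.

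The missing idea, which is how the paper closes this gap, is a decoder-side (equivalently, padding) trick: since $s'=M-\mathrm{wt}(\ve{r})$ is observable, when $s'>s$ one adds $s'-s$ ones at arbitrary zero positions of $\ve{r}$, obtaining a vector $\ve{r}'$ of weight $M-s$. Relative to either codeword, each added one either restores a lost position or creates a $0\rightarrow 1$ flip that pairs with an existing $1\rightarrow 0$ flip to form a Johnson error, so $\ve{r}'$ is reachable from that codeword by exactly $s$ asymmetric errors and at most $t'+s'-s\leq t$ Johnson errors. In your contradiction framing: if $\ve{r}$ were reachable from both $\ve{v}(\mathcal{S}_1)$ and $\ve{v}(\mathcal{S}_2)$ (note both reachability patterns have the same $s'$, since $s'$ is determined by $\mathrm{wt}(\ve{r})$), then the \emph{same} padded vector $\ve{r}'$ lies in the standard $(s,t)$-ball of both codewords, contradicting the correction capability of $\Code_M^L(s,t)$. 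Without this conversion of excess asymmetric errors into within-budget Johnson errors, the degenerate case is genuinely unresolved, so as written the proposal has a gap precisely at the step you flagged.
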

\begin{proof}
	Denote by $\mathcal{S}'$ the received set after a loss of at most $s$ sequences and errors in at most $t$ sequences. Let $s'$ be the number of asymmetric errors and $t'$ be the number of errors in $\ve{v}(\mathcal{S})$ with $s'+t' \leq s+t$ and $t'\leq t$. Note that $s' = M - \mathrm{wt}(\ve{v}(\mathcal{S}'))$ is detectable by the decoder. If $s' \leq s$, then the decoder can directly decode the loss of $s'\leq s$ sequences and $t'\leq t$ errors in the Johnson graph. If $s' > s$, the decoder adds $s'-s$ (arbitrarily placed) ones to $\ve{v}(\mathcal{S}')$, resulting in a loss of exactly $s$ sequences and at most $t'+s'-s \leq t$ errors in the Johnson graph.
\end{proof}
To obtain a code based on Construction \ref{con:constant:weight}, we use the fact that an asymmetric error can be represented by a single substitution error and an error in the Johnson graph can be represented by two substitution errors. With an appropriate minimum distance, it is therefore possible to employ standard codes, which will be done in the following theorem.
\begin{thm}
There exists a construction of the code $\Code_2(M,L,s,t) $ with redundancy at most $$ r(\Code_2(M,L,s,t)) \leq (s + 2 t)L. $$
\end{thm}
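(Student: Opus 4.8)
The plan is to reduce the statement to the existence of one good constant-weight code and then build that code from a narrow-sense BCH parity check, exploiting the fact that we only ever need to control \emph{even-weight} differences. Since $\ve{v}$ is a bijection between $\mathcal{X}_M^L$ and $\mathcal{V}_M^L$, Construction \ref{con:constant:weight} and Definition \ref{def:redundancy} give $r(\Code_2(M,L,s,t)) = \log\binom{2^L}{M} - \log|\Code_M^L(s,t)|$, and by Lemma \ref{lemma:constant:weight:correcting} any admissible $\Code_M^L(s,t)$ already yields an $(s,t,\bullet)_\mathbb{L}$-correcting code. Hence it suffices to exhibit a constant-weight code $\Code_M^L(s,t)\subseteq\mathcal{V}_M^L$ correcting $s$ asymmetric $1\to 0$ errors and $t$ errors in the Johnson graph with $|\Code_M^L(s,t)| \geq \binom{2^L}{M}/2^{(s+2t)L}$.

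Set $e = s+2t$. First I would record the key relaxation: an asymmetric error contributes Hamming weight $1$ and a Johnson-graph error Hamming weight $2$, so the combined distortion is a Hamming error of weight at most $e$, and correcting it only requires distinct codewords to be at Hamming distance strictly greater than $2e$. Moreover, any two words of $\mathcal{V}_M^L$ share the weight $M$, so their modulo-$2$ difference has \emph{even} Hamming weight. It is therefore enough to build a weight-$M$ code in which no two codewords differ by a nonzero even-weight vector of weight at most $2e$; odd-weight low-weight vectors are irrelevant, and it is precisely this restriction that saves a factor of two over a generic minimum-distance argument.

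Next I would construct a binary parity-check matrix $H$ with $2^L$ columns indexed by $\beta\in\mathbb{F}_{2^L}$, the column of $\beta$ being $(\beta,\beta^3,\beta^5,\dots,\beta^{2e-1})$ viewed as $eL$ bits, and show that $\ker H$ contains no even-weight codeword of weight in $\{2,\dots,2e\}$. The heart of this is the BCH bound: over $\mathbb{F}_2$ the even-power syndromes are squares of the odd-power ones (Frobenius together with $c_\beta^2=c_\beta$), so the $e$ odd-power constraints force all $2e$ consecutive-power constraints, whence the code restricted to the $2^L-1$ nonzero positions has minimum distance at least $2e+1$. The column of $\beta=0$ is all-zero and contributes nothing to any syndrome, so a codeword using the zero position together with a set $S$ of nonzero positions forces $S$ itself to satisfy the BCH equations; an even-weight codeword of weight at most $2e$ would then leave a nonzero support of size at most $2e<2e+1$ among the nonzero positions, a contradiction. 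Consequently every nonzero even-weight element of $\ker H$ has weight at least $2e+2$.

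Finally I would invoke a coset/pigeonhole argument: the syndrome map $\ve{v}\mapsto H\ve{v}$ partitions the $\binom{2^L}{M}$ elements of $\mathcal{V}_M^L$ into at most $2^{eL}$ classes, so some class $\Code_M^L(s,t)$ has size at least $\binom{2^L}{M}/2^{eL}$. Two distinct members of one class differ by a nonzero element of $\ker H$ which, being a difference of weight-$M$ vectors, is even-weight and hence of weight at least $2e+2$; their Hamming distance exceeds $2e$, so by the relaxation above the class corrects $s$ asymmetric and $t$ Johnson errors, yielding $r(\Code_2(M,L,s,t)) \leq eL = (s+2t)L$. The main obstacle is the BCH step: one must simultaneously achieve the efficient redundancy $eL$ (rather than $2eL$) and accommodate the extra zero position needed to reach length $2^L$ without losing distance, which is exactly why the even-weight restriction and the careful handling of $\beta=0$ are indispensable.
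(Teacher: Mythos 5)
Your proposal is correct, and it follows the same skeleton as the paper's proof — the reduction of one sequence loss to a single substitution and one Johnson-graph error to two substitutions in $\ve{v}(\mathcal{S})$, followed by a pigeonhole argument over the $2^{(s+2t)L}$ syndrome classes of a linear code to extract a large weight-$M$ subcode — but it replaces the paper's key existence step by a genuinely different and more self-contained argument. The paper simply cites the existence of a binary alternant (Goppa-type) code of length $2^L$ and redundancy $\tau L$ correcting $\tau=s+2t$ substitutions; this is a nontrivial fact, since a generic alternant code with designed distance $2\tau+1$ would cost $2\tau L$ redundancy, and a primitive BCH code only has length $2^L-1$. You instead build the parity check explicitly from the odd-power syndromes $(\beta,\beta^3,\dots,\beta^{2e-1})$ over \emph{all} of $\mathbb{F}_{2^L}$, accept that the resulting kernel is not an $e$-error-correcting code in the usual sense (the all-zero column of $\beta=0$ puts a weight-one vector in the kernel), and then observe that this does not matter: differences of two weight-$M$ vectors always have even weight, and every nonzero \emph{even}-weight kernel element has weight at least $2e+2$ by the Frobenius-squaring/BCH-bound argument applied to the nonzero support. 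This buys a fully explicit construction and makes transparent why length $2^L$ is attainable at redundancy exactly $eL$, at the cost of a more delicate distance argument; the paper's citation is shorter but hides exactly this subtlety inside the reference. One small quibble: your framing that the even-weight restriction "saves a factor of two" slightly misattributes the saving — the factor of two (redundancy $eL$ rather than $2eL$) comes from the conjugacy trick that even-power syndromes are squares of odd-power ones, while the even-weight restriction serves the separate purpose of rescuing the zero column; this is only a matter of exposition and does not affect the validity of the proof.
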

\begin{proof}
	By Lemma \ref{lemma:constant:weight:correcting}, it is sufficient to find a sufficiently large $M$-constant-weight code which can correct $s+2t$ substitution errors. This is since each loss in $\mathcal{S}$ causes an $1\rightarrow 0$ asymmetric error in $\ve{v}(\mathcal{S})$ and can be represented as a single substitution error and every error in a sequence in $\mathcal{S}$ will cause at most one $1\rightarrow 0$ and one $0 \rightarrow 1$ error in $\mathbf{v}(\mathcal{S})$ and thus can be represented by two substitution errors. Next, it is known, that there exists a $\tau$-substitution-correcting binary alternant code of length $2^L$ and dimension $2^L-\tau L$, cf. \cite{Rot06}. Due to the pigeonhole principle and since the alternant code has at most $2^{\tau L}$ cosets, there is one coset of the alternant code that contains at least $\binom{2^L}{M}\big/2^{\tau L}$ words with constant weight $M$, and therefore there exists a code $\Code_2(M,L,s,t)$ of cardinality at least $\binom{2^L}{M}\big/2^{\tau L}$.
	With this alternant code, the redundancy of Construction \ref{con:constant:weight} is therefore at most
	$$ r(\Code_2(M,L,s,t)) \leq \log \binom{2^L}{M} - \log \frac{\binom{2^L}{M}}{2^{\tau L}} = \tau L.$$
	Using $\tau = s+2t$ yields the theorem.
\end{proof}
The redundancy of Construction \ref{con:constant:weight} is lower than that of Construction \ref{con:index:rs}, especially for the considered case $M = 2^{\beta L}$. However, for Construction~\ref{con:index:rs} there exist efficient encoders and decoders while this is unclear for Construction~\ref{con:constant:weight}, also since the code length of the constant-weight code is exponential in~$L$.
\subsection{An Improved Indexed-Based Construction}
Construction~\ref{con:index:rs}, which uses indexing, is beneficial for its simplicity in the encoding and decoding procedures, however its redundancy is significantly larger than the one achieved by Construction~\ref{con:constant:weight}. On the other hand, Construction~\ref{con:constant:weight} does not provide an efficient encoder and decoder due to the lack of ordering in the set $\mathcal{S}$. In this section, we present a construction which introduces ideas from both of these methods. 

The main idea of this construction is to reduce the number of bits allocated for indexing each sequence. This allows a trade-off in redundancy with respect to $L$ and $M$. To simplify notation, we assume here that $M = 2^z$ for some $z \in \mathbb{N}$.
\begin{construction} \label{con:reduce:index}
Denote by $\ve{I}_c(i) \in \Sigma_2^{c \log M}$ the $c \log M$ most significant bits of the binary representation $\ve{I}(i)$ of $i$, where $0\leq  c < 1$ and $c \log M \in \mathbb{N}_0$. Further, for $1\leq i\leq M^{c}$, let $\ve{U}_i = \{ \ve{u}_{(i-1)M^{1-c}+1}, \dots, \ve{u}_{iM^{1-c}} \}$ denote a set of distinct sequences with the same index $\ve{I}_c(i)$, which are ordered lexicographically and form a symbol over a field of size $\binom{2^LM^{-c}}{M^{1-c}}$, where $\ve{u}_j \in \Sigma_2^{L-c\log M}$. 

For $\delta \geq 0$, let $\Code_3 (M,L,c,\delta) $ be the code defined by 
	\begin{align*}
	\Code_3 (M,L,c,\delta) &= \{  \mathcal{S}  \in \mathcal{X}_M^L : \, \ve{x}_i = (\ve{I}_c(i), \ve{u}_i) , \\ &(\ve{U}_1, \dots, \ve{U}_{M^{c}}) \in \mathsf{MDS}[M^{c},M^{c}-\delta] \}. 
	\end{align*}
	To guarantee existence of the MDS code, we require $M^{c} \leq \binom{2^LM^{-c}}{M^{1-c}}$ \cite{Rot06}. For $M= 2^{\beta L}$, $c \leq 1+ \frac{\log \frac{1-\beta}{\beta}}{\beta L}$ is sufficient.
\end{construction}

Note that there are $M^{c}$ groups of sequences which use the same index and each group contains $M^{1-c}$ sequences.
\begin{lemma} \label{lemma:reduce:index}
For all $M,L,\delta$, the code $\Code_3 (M,L,c,\delta)$ is
\begin{itemize}
	\item $(s,t,\bullet)_\mathbb{L}$-correcting for all $s+2t \leq \delta$,
	\item $(s,t,\bullet)_\ins$-correcting for all $s+t \leq \delta$,
	\item $(s,t,\bullet)_\del$-correcting for all $s+t \leq \delta$,
\end{itemize}
\end{lemma}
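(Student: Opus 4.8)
The plan is to mirror the proof of Lemma~\ref{lemma:index:rs:cor}, adapting it to the coarser indexing scheme of Construction~\ref{con:reduce:index}. The key structural fact is that here indices are shared: there are $M^c$ distinct index values $\ve{I}_c(i)$, and each index is used by exactly $M^{1-c}$ sequences. These $M^{1-c}$ sequences are grouped into a single super-symbol $\ve{U}_i$ over a large field, and the $M^c$ super-symbols form an $\mathsf{MDS}[M^c, M^c-\delta]$ codeword. Thus error correction now happens at the granularity of \emph{groups} rather than individual sequences, and the central task is to show that a loss of $s$ sequences and distortion of $t$ sequences translates into at most $\delta$ redundancy symbols' worth of damage in the group-level MDS code, i.e. $s'+2t' \le s+2t \le \delta$ where $s'$ counts erased groups and $t'$ counts erroneous groups.

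**Decoding and counting damaged groups (edit errors).**

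First I would describe the decoder. Given the received set $\mathcal{S}'$, for each index value $\ve{I}_c(i)$ with $1 \le i \le M^c$, collect all received sequences whose length-$c\log M$ prefix equals $\ve{I}_c(i)$. If exactly $M^{1-c}$ distinct sequences carry this index, we reconstruct the candidate super-symbol $\ve{U}_i'$ (ordering the suffixes lexicographically); otherwise we declare position $i$ an \emph{erasure} in the MDS codeword. The analysis then follows the genie argument of Lemma~\ref{lemma:index:rs:cor}: a genie first inserts only the error-free sequences $\SG$. Since $|\SG| \ge M - s - t$, at most $s+t$ sequences are missing from their groups, so at most $s+t$ groups fail to be complete and are (correctly) flagged as erasures, giving $s' \le s+t$ and $t'=0$ at this stage. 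Then the genie adds the $t$ erroneous outcomes $\SE'$ one at a time. Each erroneous sequence carries some index prefix; if it lands in a group that was previously incomplete, it may complete that group (removing an erasure) but with a wrong super-symbol value, turning an erasure into a group error and giving the same $s' \!-\!1$, $t'\!+\!1$ bookkeeping as before; if it lands in an already-complete or foreign group, it makes that group's count deviate from $M^{1-c}$ and forces an erasure. The same inequality $s' + 2t' \le s + 2t \le \delta$ therefore follows exactly as in the earlier proof, so the MDS code corrects all $s'$ erasures and $t'$ errors, recovering $(\ve{U}_1,\dots,\ve{U}_{M^c})$ and hence $\mathcal{S}$.

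**The insertion and deletion cases.**

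For the pure-$\ins$ and pure-$\del$ cases the argument simplifies, just as in Lemma~\ref{lemma:index:rs:cor}. A sequence suffering only insertions (respectively only deletions) has length strictly greater than (respectively less than) $L$, so every erroneous sequence is detectable by its length and can be discarded. Discarding the $t$ erroneous sequences together with the $s$ lost ones leaves at most $s+t$ groups incomplete, each flagged as an erasure, and no group carries a wrong value; thus $t'=0$ and $s' \le s+t$, which the MDS code corrects whenever $s+t \le \delta$.

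**Main obstacle.**

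The subtle point, and the step I would be most careful about, is that with shared indices the duplicate-suppression in Definition~\ref{def:dna:channel:error:ball} can cause two distinct sequences within the \emph{same} group to collapse, or an erroneous sequence to coincide with a correct one carrying the same index; then the group's cardinality can equal $M^{1-c}$ for the \emph{wrong} reason and masquerade as correct. I would need to verify that whenever a group's received count is exactly $M^{1-c}$ \emph{but its value is wrong}, this is charged as a group \emph{error} ($t'$) rather than silently accepted, and that the net effect still respects $s'+2t' \le s+2t$. This is precisely the role of the $|\mathcal{I}(\SE') \cap \mathcal{I}(\SG)|$ term in the original proof, now applied at the group level, and it is the only place where the coarser indexing differs materially from Construction~\ref{con:index:rs}; everything else is a transcription of that earlier argument with ``sequence'' replaced by ``group.''
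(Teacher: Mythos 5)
Your proposal is correct and follows essentially the same route as the paper's proof: the same group-level erasure rule (declare position $i$ an erasure iff $|\ve{U}_i'| \neq M^{1-c}$), the same genie-style accounting showing $s' + 2t' \leq s + 2t \leq \delta$, and the same length-based discarding of erroneous sequences for the pure-insertion and pure-deletion cases. The duplicate-collapse issue you flag as the main obstacle is resolved by exactly the case analysis you sketch --- a complete group with a wrong value is charged to $t'$, an overfull or deficient group to $s'$, and a suppressed duplicate changes nothing, so each erroneous sequence raises $s' + 2t'$ by at most one --- which is precisely how the paper's proof disposes of it ("in all other cases neither $s'$ nor $t'$ change").
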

\begin{proof}
	The proof follows the same idea as that for Lemma \ref{lemma:index:rs:cor}. We will show that the MDS codeword $\ve{U} = (\ve{U}_1, \ve{U}_2, \dots, \ve{U}_{M^{c}})$ can be recovered from $\ve{U}' = (\ve{U}'_1, \ve{U}'_2, \dots, \ve{U}'_{M^{c}})$, where $\ve{U}'_i$ collects all sequences in $\mathcal{S}'$ which have the same index $i$, i.e. $\ve{U}'_i = \{ \mathrm{suff}_{L-c\log M}(\ve{x}') : \ve{x}' \in \mathcal{S}', \mathrm{pref}_{c\log M }(\ve{x}) = \ve{I}_c(i) \}$. Given $\mathcal{S}'$, we create the received estimate word $\ve{U}'$ by declaring all positions $i$ with
	$$ |\ve{U}_i'| \neq M^{1-c}, $$
	as erasures. The remaining positions in $\ve{U}'$ are filled with the corresponding symbols $\ve{U}'_i$. We will show that the number of erasures $s'$ and the number of errors $t'$ in $\ve{U}'$ satisfy $s' + 2t' \leq \delta$ by the following consideration. First, insert all error-free sequences $\ve{x} \in \SG$ into $\ve{U}'$. Up to this point $s' \leq s+t$ and $t'=0$, since there are $s+t$ sequences missing and all inserted sequences are error-free. Therefore, the $s+t$ affected groups, which contain less than $M^{1-c}$ sequences can be detected and declared erasures. Now, each of the $t$ erroneous sequences $\ve{x}' \in \SE'$ is inserted to $\ve{U}'$ and can have an arbitrary index $i$ due to errors. If the erroneous sequence $\ve{x}'$ has an index $i$ of an index group with $|\ve{U}_i'| = M^{1-c}-1$ elements, this group cannot be detected as erroneous anymore, as it contains now exactly  $M^{1-c}$ sequences. Consequently $t'$ increases by one and $s'$ decreases by one, as the group is erroneous but is not declared as an erasure in $\ve{U}'$ anymore. If the erroneous sequence $\ve{x}'$ has an index of an index group with $|\ve{U}_i'| = M^{1-c}$, this group will contain $M^{1-c}+1$ sequences afterwards and can be detected as erroneous and thus declared as erasure. In this case the number of erasures $s'$ increases by one. In all other cases neither $s'$ nor $t'$ change. Since $t$ sequences of $\mathcal{S}$ are erroneous the sum $s'+2t'$ can increase at most by $t$ with respect to the starting point $s+t$ and thus $s'+2t' \leq s+2t \leq \delta$, which proves the error correcting capability. 
	
	For the case of only insertion ($\ins$) and only deletion ($\del$) errors, it is possible to identify the erroneous groups by checking the length of the respective sequences to be larger (respectively smaller) than $L$. If these sequences are discarded and the corresponding groups declared as erasures, there are in total at most $s+t$ erasures inside the MDS codeword, which can be corrected, if $s+t\leq \delta$.
\end{proof}
The redundancy of Construction \ref{con:reduce:index} is stated in the following theorem.
\begin{thm} \label{thm:redundancy:reduce:index}
	The redundancy of Construction \ref{con:reduce:index} is given by
	$$ r(\Code_3 (M,L,c,\delta)) = \log \binom{2^L}{M} - (M^c-\delta) \log \binom{2^LM^{-c}}{M^{1-c}}. $$
 	For fixed $0<c<1$, $\delta \in \mathbb{N}_0$ and $0<\beta<1$, the redundancy of $\Code_3(M,L,c,\delta)$ is asymptotically 
	\begin{align*}
		r&(\Code_3(M,L,c,\delta)) = \frac{(1-c)}{2}M^c\log M + \frac{\log 2\pi}{2} M^c \\ 
		&+ \delta M^{1-c} \left( L-\log M+\log \e \right) + o(M^c+M^{1-c}),
	\end{align*}
	when $M\rightarrow \infty$ with $M = 2^{\beta L}$.
\end{thm}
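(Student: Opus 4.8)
The plan is to first pin down the exact redundancy by counting codewords, and then to expand the two binomial coefficients to the required precision. A codeword of $\Code_3(M,L,c,\delta)$ is determined by the MDS codeword $(\ve{U}_1,\dots,\ve{U}_{M^c})$, each symbol $\ve{U}_i$ being one of the $q \triangleq \binom{2^L M^{-c}}{M^{1-c}}$ subsets of $M^{1-c}$ distinct sequences of $\Sigma_2^{L-c\log M}$. Because the prefix $\ve{I}_c(i)$ tags the group of every sequence, distinct MDS codewords map to distinct sets $\mathcal{S}$, so $|\Code_3(M,L,c,\delta)| = q^{M^c-\delta}$. By Definition \ref{def:redundancy} this yields at once the exact redundancy $r(\Code_3) = \log\binom{2^L}{M} - (M^c-\delta)\log\binom{2^L M^{-c}}{M^{1-c}}$.

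For the asymptotics I would split

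\begin{equation*}
r(\Code_3) = \underbrace{\Big[\log\binom{2^L}{M} - M^c\log\binom{2^L M^{-c}}{M^{1-c}}\Big]}_{(\mathrm{I})} + \underbrace{\delta\log\binom{2^L M^{-c}}{M^{1-c}}}_{(\mathrm{II})}
\end{equation*}

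and treat the two parts separately. Part $(\mathrm{II})$ is the easy one: since $M^{1-c} = 2^{(1-c)\beta L}$ is a power $2^{\beta' L}$ with $0<\beta'<1$ of the reduced alphabet size $2^{L-c\log M} = 2^{(1-c\beta)L}$, the binomial estimate of Lemma \ref{lemma:approx:binom} gives $\log\binom{2^L M^{-c}}{M^{1-c}} = M^{1-c}(L - c\log M - (1-c)\log M + \log\e) + o(M^{1-c}) = M^{1-c}(L-\log M+\log\e) + o(M^{1-c})$, so $(\mathrm{II}) = \delta M^{1-c}(L-\log M+\log\e) + o(M^{1-c})$, which is exactly the third claimed term.

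The heart of the proof is part $(\mathrm{I})$, where the two binomials each carry $\Theta(ML)$-sized contributions that must cancel. Writing every binomial as $\binom{n}{k} = \ff{n}{k}/k!$, I would separate the falling-factorial part from the factorial part. The falling-factorial contribution to $(\mathrm{I})$ is $\big(ML + S_1\big) - M^c\big(M^{1-c}(L-c\log M) + S_2\big) = cM\log M + (S_1 - M^c S_2)$, where $S_1 = \tfrac1{\ln2}\sum_{i=0}^{M-1}\ln(1-i2^{-L})$ and $S_2 = \tfrac1{\ln2}\sum_{j=0}^{M^{1-c}-1}\ln(1-j2^{-L}M^{c})$. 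The crucial observation is the density-matching identity $M^{1-c}/(2^L M^{-c}) = M/2^L$, which makes $S_1$ and $M^c S_2$ equal to the same integral $\tfrac{2^L}{\ln2}\int_0^{M/2^L}\ln(1-w)\,dw$, their residual difference being only a Riemann-sum mismatch of size $O(M^{1+c}2^{-L}) = o(M^c)$ because $\beta<1$. For the factorial part, Stirling's formula $\log k! = k\log k - k\log\e + \tfrac12\log(2\pi k) + o(1)$ applied to $\log M!$ and to $M^c\log(M^{1-c})!$ produces the contribution $-\,[\log M! - M^c\log(M^{1-c})!] = -\,cM\log M - \tfrac12\log(2\pi M) + \tfrac{\log 2\pi}{2}M^c + \tfrac{1-c}{2}M^c\log M + o(M^c)$. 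Adding the two parts, the two $cM\log M$ terms cancel and the isolated $-\tfrac12\log(2\pi M) = o(M^c)$ is absorbed, leaving $(\mathrm{I}) = \tfrac{1-c}{2}M^c\log M + \tfrac{\log 2\pi}{2}M^c + o(M^c)$. Combining $(\mathrm{I})$ and $(\mathrm{II})$ gives the stated expansion with total error $o(M^c+M^{1-c})$.

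The main obstacle is exactly this cancellation inside $(\mathrm{I})$: both binomials individually contain terms of order up to $\Theta(ML)$ and $\Theta(M\log M)$---vastly larger than the $O(M^c\log M + M^{1-c}L)$ target---so one must expand each just far enough to certify that everything above the surviving $\tfrac{1-c}{2}M^c\log M$ and $\tfrac{\log 2\pi}{2}M^c$ corrections cancels to within $o(M^c+M^{1-c})$, uniformly over $0<c<1$ and $0<\beta<1$. The two enablers that make the bookkeeping close are the density-matching identity above (so the leading integrals coincide) and the hypothesis $\beta<1$ (so the leftover Riemann-sum term stays $o(M^c)$); verifying these bounds carefully is where the real work lies.
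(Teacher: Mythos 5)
Your proof is correct. The exact-redundancy step (counting $|\Code_3(M,L,c,\delta)| = \binom{2^LM^{-c}}{M^{1-c}}^{M^c-\delta}$ via injectivity of the map from MDS codewords to stored sets) is identical to the paper's, but your asymptotic expansion takes a genuinely different route. The paper applies the Robbins refinement of Stirling's formula to all factorials in both binomial coefficients at once, obtaining an intermediate expression with terms such as $\frac{M^c-1}{2}\log(1-M2^{-L})$ and $\left(2^LM^{-c}-M^{1-c}+\frac12\right)\log(1-M2^{-L})$, and then expands those logarithms with Lemma \ref{lemma:ln:1:x}; the cancellation of the $\Theta(ML)$ and $\Theta(M\log M)$ contributions happens implicitly inside that single expansion. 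You instead (i) peel off the $\delta$-term and dispatch it with the paper's own Lemma \ref{lemma:approx:binom}, which applies precisely because $M^{1-c}=\omega(1)$ and $M^{1-c}=o(2^LM^{-c})$ hold when $c<1$ and $\beta<1$; and (ii) for the dominant part, write each binomial as $\ff{n}{k}/k!$, so that the $\Theta(ML)$ terms cancel algebraically, the residual sums $S_1$ and $M^cS_2$ agree to within $O(M^{1+c}2^{-L})=o(M^c)$ by Riemann-sum matching resting on the step-size identity $M^{1-c}\cdot M^c2^{-L}=M2^{-L}$, and Stirling is needed only for $\log M!$ and $M^c\log\left(M^{1-c}\right)!$, which is exactly where $\frac{1-c}{2}M^c\log M$ and $\frac{\log 2\pi}{2}M^c$ emerge. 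I checked the bookkeeping (cancellation of the $cM\log M$ and $M\log\e$ terms, absorption of $\frac12\log(2\pi M)$ into $o(M^c)$ using $c>0$, and the Riemann-sum error bound for the decreasing integrand $\ln(1-w)$), and it all closes. What each approach buys: yours makes the two large-term cancellations explicit and exact rather than hidden inside a wholesale Stirling expansion, and needs refined precision only for the two genuinely factorial pieces; the paper's is more mechanical, and its intermediate formula also exposes the boundary case $c=1$ (where $\gamma_2$ changes and the redundancy of Theorem \ref{thm:red:index:rs} is recovered), which your route cannot reach since both Lemma \ref{lemma:approx:binom} and Stirling on $\left(M^{1-c}\right)!$ require $M^{1-c}\rightarrow\infty$.
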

The proof is given in Appendix \ref{app:proof:thm:redundancy:reduce:index}. Note that the last summand in the asymptotic expression for $\Code_3 (M,L,c,\delta)$ in Theorem \ref{thm:redundancy:reduce:index} quantifies the redundancy from the MDS construction, since it is multiplied by $\delta$, the redundancy of the MDS code. The two remaining terms therefore quantify the redundancy required for indexing. This shows that, asymptotically, for $c>0.5$ the redundancy needed for indexing dominates, as the terms for indexing scale as $M^c$ and the term for the MDS construction scales as $M^{1-c}$ and for $c<0.5$ the redundancy from the MDS construction dominates the redundancy of the overall construction.
\subsection{Concatenated Constructions}
Since the input of the DNA storage channel, $\mathcal{S} \in \mathcal{X}_M^L$ is a set of $M$ sequences, each of which has length $L$, it is possible to use a concatenated coding scheme to correct both a loss of sequences and errors inside the sequences. The concatenation can be constructed by choosing a set $\mathcal{S}_\mathrm{o}$ as a codeword from an outer code $\Code_\mathrm{o} \subseteq \mathcal{X}_M^{L_\mathrm{o}}$, where $L_\mathrm{o}<L$. Then, each sequence $\ve{x}_\mathrm{o} \in \mathcal{S}_\mathrm{o}$ is encoded with some inner block-code $\Code_\mathrm{i} \subseteq \Sigma_2^{L}$ of dimension $L_\mathrm{o}$ and length $L$. This procedure is formalized in the following construction.
\begin{construction}
	For all $M,L$, $L_\mathrm{o}<L$ and positive integers $s,t$, let $\Code_\mathrm{o} \subseteq \mathcal{X}_M^{L_\mathrm{o}}$ be an outer code and $\Code_\mathrm{i} \subseteq \Sigma_2^{L}$ be a standard block-code of dimension $L_\mathrm{o}$ and length $L$. Further, $\mathrm{en}(\cdot): \Sigma_2^{L_\mathrm{o}} \mapsto \Sigma_2^L$ is an encoder of the code $\Code_\mathrm{i}$. We define the concatenated construction as
	$$ \Code_4(M,L,\Code_\mathrm{i},\Code_\mathrm{o}) = \left\{ \mathcal{S} \in \mathcal{X}_M^L: \mathcal{S} = \bigcup_{\ve{x}_\mathrm{o} \in \mathcal{S}_\mathrm{o}} \mathrm{en}(\ve{x}_\mathrm{o}), \mathcal{S}_\mathrm{o} \in \Code_\mathrm{o}  \right\}. $$
\end{construction}
As outer code $\Code_\mathrm{o}$ it is in principle possible to use any code over $\mathcal{X}_M^L$. However, using the proposed Constructions \ref{con:index:rs}, \ref{con:constant:weight}, or \ref{con:reduce:index} it is possible to enhance the inner code to additionally correct a loss of sequences. This is done as follows.
\begin{lemma}
	Let $\Code_\mathrm{o} \subseteq \mathcal{X}_M^{L_\mathrm{o}}$ be an $(s,0,0)_\ET$-correcting code and $\Code_i \subseteq \Sigma_2^{L}$ be a block-code that can correct $\epsilon$ errors of type $\ET$. Then, $\Code_4(M,L,\Code_\mathrm{i},\Code_\mathrm{o})$ is $(s,M-s,\epsilon)_\ET$-correcting.
\end{lemma}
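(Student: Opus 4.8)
The plan is to construct an explicit two-stage decoder for $\Code_4(M,L,\Code_\mathrm{i},\Code_\mathrm{o})$ and show that it recovers every codeword from anywhere inside its $(s,M-s,\epsilon)_\ET$ error ball; by Definition \ref{def:correcting:code} this is equivalent to the error balls of distinct codewords being pairwise disjoint. Fix a codeword $\mathcal{S} = \bigcup_{\ve{x}_\mathrm{o} \in \mathcal{S}_\mathrm{o}} \mathrm{en}(\ve{x}_\mathrm{o})$ with $\mathcal{S}_\mathrm{o} \in \Code_\mathrm{o}$, and let $\mathcal{S}' \in B_{s,M-s,\epsilon}^\ET(\mathcal{S})$ be received, with partition $(\SG,\SL,\SE)$ and distinct erroneous outcomes $\SE'$ as in Definition \ref{def:dna:channel:error:ball}. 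The decoder first applies the inner decoder of $\Code_\mathrm{i}$ to every sequence of $\mathcal{S}'$, collects the resulting length-$L_\mathrm{o}$ preimages into a set $\mathcal{S}_\mathrm{o}'$, and then applies the decoder of $\Code_\mathrm{o}$ to $\mathcal{S}_\mathrm{o}'$.

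The first key step is to rule out collisions. Since $\Code_\mathrm{i}$ corrects $\epsilon$ errors of type $\ET$, the balls $B_\epsilon^\ET(\ve{c})$ of distinct inner codewords $\ve{c}$ are pairwise disjoint, and every codeword lies in its own ball (zero errors is admissible). The $M$ sequences of $\mathcal{S}$ are distinct inner codewords because $\mathrm{en}$ is injective. Each erroneous outcome $\ve{x}_i' \in \SE'$ lies in $B_\epsilon^\ET(\ve{x}_{e_i})$ for the inner codeword $\ve{x}_{e_i} \in \SE$ from which it stems, so $\ve{x}_i'$ can coincide neither with a correct codeword $\ve{x}_g \in \SG$ (else $\ve{x}_g \in B_\epsilon^\ET(\ve{x}_g) \cap B_\epsilon^\ET(\ve{x}_{e_i})$, contradicting disjointness) nor with an outcome stemming from a different codeword. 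Hence $\SG$ and $\SE'$ are disjoint with all elements distinct, so $|\mathcal{S}'| = |\SG| + |\SE| = M - |\SL|$ with $|\SL| \le s$, and the duplicate-removal step of the channel discards nothing.

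Next, the inner decoding is correct: each error-free $\ve{x} \in \SG$ decodes to its outer preimage, and each $\ve{x}_i' \in B_\epsilon^\ET(\ve{x}_{e_i})$ lies within the decoding radius of $\ve{x}_{e_i}$ and therefore decodes to the outer preimage of $\ve{x}_{e_i}$. As the inner codewords of $\SG \cup \SE$ are distinct and $\mathrm{en}$ is injective, their outer preimages are distinct, so $\mathcal{S}_\mathrm{o}'$ consists of exactly the $M - |\SL|$ outer sequences of $\mathcal{S}_\mathrm{o}$ whose encodings were not lost; that is, $\mathcal{S}_\mathrm{o}' \subseteq \mathcal{S}_\mathrm{o}$ and $|\mathcal{S}_\mathrm{o} \setminus \mathcal{S}_\mathrm{o}'| = |\SL| \le s$. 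Equivalently, $\mathcal{S}_\mathrm{o}' \in B_{s,0,0}^\ET(\mathcal{S}_\mathrm{o})$, being obtained from $\mathcal{S}_\mathrm{o}$ by a loss of at most $s$ sequences and no erroneous sequences. Since $\Code_\mathrm{o}$ is $(s,0,0)_\ET$-correcting, its decoder recovers $\mathcal{S}_\mathrm{o}$, and re-encoding recovers $\mathcal{S}$; as this works for every $\mathcal{S}'$ and every codeword, the error balls are disjoint and $\Code_4$ is $(s,M-s,\epsilon)_\ET$-correcting.

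I expect the only delicate point to be the collision analysis of the first step. The channel model in Definition \ref{def:dna:channel:error:ball} explicitly collapses coinciding outcomes, and a naive count could erase more than $s$ outer positions, which would break the reduction to the outer $(s,0,0)_\ET$ decoder. The resolution rests entirely on translating ``$\Code_\mathrm{i}$ corrects $\epsilon$ errors'' into disjointness of the inner error balls, which guarantees that the sole mechanism by which an outer position can go missing is an actual sequence loss in $\SL$, and that every remaining received sequence, error-free or distorted, is decoded back to its true outer symbol regardless of how many sequences were distorted.
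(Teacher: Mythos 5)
Your proof is correct and follows the same approach as the paper's: inner decoding of every received sequence to undo the point errors, then outer decoding to recover the lost sequences. The paper dismisses this as ``immediate,'' whereas you additionally spell out the one implicit detail---disjointness of the inner code's error balls rules out collisions, so duplicate removal discards nothing and the reduction to the outer $(s,0,0)_\ET$ decoder goes through---which is a faithful elaboration rather than a different route.
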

\begin{proof}
	The proof is immediate, since the inner code can correct all errors of type $\ET$ inside the sequences. After correcting these errors, it is possible to correct the lost sequences using the outer code.
\end{proof}
Note that such concatenated constructions are highly relevant in practice, as in the case that there are some sequences, which experienced more than $\epsilon$ errors can be corrected by the outer code, since Constructions \ref{con:index:rs}, \ref{con:constant:weight}, or \ref{con:reduce:index} can correct both a loss of sequences and errors in sequences, as long as $s+2t\leq \delta$. Such a construction has been used in \cite{GHPPS15}, where a Reed-Solomon code has been used as inner code and an indexed Reed-Solomon code has been used as outer code.
\subsection{Special Constructions}
In this section, we suggest constructions that can correct errors for some special cases of errors in the DNA storage channel. These constructions are interesting, since they provide insights about the channel and can likely be generalized to more general error types. 

The following $(0,1,1)_\del$-correcting construction is based on Varshamov-Tenengolts (VT) codes \cite{VT65,Lev66} that can correct a single insertion/deletion in one of the $M$ sequences. The VT code is defined to be all sequences which have the same checksum, that is defined as follows.
\begin{defn}
	The Varshamov-Tenegolts checksum $s_L(\ve{x})$ of $\ve{x} \in \Sigma_2^L$ is defined by\vspace{-1ex}$$ s_L(\ve{x}) = \sum_{i=1}^{L} ix_i \bmod (L+1). $$
\end{defn}
Our construction now employs the idea of using a single-erasure-correcting code over the checksums of all sequences. The insertion/deletion can then be corrected by first recovering the checksum of the distorted sequence and then using this checksum to correct the insertion/deletion. Note that this idea is similar to the concept of tensor product codes \cite{Wol06}. 
\begin{construction} \label{con:single:insertion}
	For an integer $a$, with $0\leq a \leq L$, the code construction $\Code_5(M,L,a) $ is given by
	\begin{align*}
	\Code_5(M,L,a) = \bigg\{ \mathcal{S} \in \mathcal{X}_M^L :\sum_{i=1}^{M} s_L(\ve{x}_i) \equiv a \bmod (L+1) \bigg\}.
	\end{align*}
\end{construction}
Note that the code can be extended to an arbitrary alphabet size $q$ by applying non-binary VT codes \cite{Ten84}. 
\begin{lemma}
	For all $M,L,a$, the code $\Code_5(M,L,a)$ is an $(0,1,1)_{\ins\del}$-correcting code.
\end{lemma}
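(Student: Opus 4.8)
The plan is to exhibit a decoder that, given any received set $\mathcal{S}' \in B^{\ins\del}_{0,1,1}(\mathcal{S})$, recovers $\mathcal{S}$ uniquely; since a decoder is a function, this shows that the error balls of distinct codewords must be disjoint, which by Definition \ref{def:correcting:code} is precisely the claimed property. Because $s=0$ no sequence is lost, and because $t=1$ at most one sequence is distorted by a single insertion or deletion. Such a distorted sequence has length $L+1$ (insertion) or $L-1$ (deletion), hence length different from $L$, so it cannot coincide with any of the $M-1$ error-free sequences (all of length $L$) nor be merged with them; consequently $|\mathcal{S}'|=M$ and the erroneous sequence is identified unambiguously as the unique element of $\mathcal{S}'$ whose length is not $L$. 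If every element of $\mathcal{S}'$ has length $L$, then no error occurred and the decoder simply outputs $\mathcal{S}'=\mathcal{S}$.

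First I would isolate the erroneous sequence $\ve{x}'$ and observe that the remaining $M-1$ elements of $\mathcal{S}'$ are exactly the error-free sequences $\SG$. Let $\ve{x}_e \in \SE$ denote the (unknown) length-$L$ sequence from which $\ve{x}'$ arose. The defining constraint of $\Code_5(M,L,a)$ gives $\sum_{i=1}^M s_L(\ve{x}_i) \equiv a \pmod{L+1}$, and every summand except $s_L(\ve{x}_e)$ corresponds to a correctly received sequence whose checksum the decoder can compute directly. Subtracting, the decoder recovers the original checksum
$$ s_L(\ve{x}_e) \equiv a - \sum_{\ve{x} \in \SG} s_L(\ve{x}) \pmod{L+1}. $$

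With $\ve{x}'$ and the value of $s_L(\ve{x}_e)$ in hand, the final step is classical Varshamov--Tenengolts decoding: it is known \cite{Lev66} that the set of all length-$L$ binary words sharing a prescribed VT checksum modulo $L+1$ forms a single-insertion/deletion-correcting code, so $\ve{x}_e$ is the unique length-$L$ word with checksum $s_L(\ve{x}_e)$ lying in $B_1^{\ins\del}(\ve{x}')$ (the length of $\ve{x}'$ tells the decoder whether to undo an insertion or a deletion). Re-inserting $\ve{x}_e$ into $\SG$ reconstructs $\mathcal{S}$, which lies in $\Code_5(M,L,a)$ by construction. The main point to get right is not any single step in isolation but their interplay: identifying the erroneous sequence purely by its length relies on $\epsilon=1$ together with the absence of lost sequences that could otherwise corrupt the length count, while recovering $s_L(\ve{x}_e)$ relies on all other summands being known exactly. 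In effect, the modular sum over the whole set acts as a single-erasure-correcting redundancy symbol over the $M$ checksums, which is exactly the tensor-product intuition indicated before the construction.
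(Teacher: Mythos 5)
Your proposal is correct and follows essentially the same approach as the paper's proof: identify the single erroneous sequence by its length differing from $L$, recover its original VT checksum as $a - \sum_{\ve{x} \in \SG} s_L(\ve{x}) \bmod (L+1)$, and then decode it within the VT code of that checksum. Your write-up merely adds details the paper leaves implicit (the no-error case, and the observation that the erroneous sequence cannot merge with the error-free ones, so $|\mathcal{S}'| = M$), which are fine and do not change the argument.
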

\begin{proof}
	Assume there has been a single insertion or deletion in the $k$-th sequence, for $1\leq k\leq M$. After the reading process, the $M-1$ error-free sequences can be identified as they have length exactly $L$. The checksum deficiency is given by
	\[ a - \sum_{i \in \SG} s_L(\ve{x}_i) \bmod (L+1) = s_L(\ve{x}_k). \]
	The error in $\ve{x}_k$ is corrected by decoding in the VT code with checksum $s_L(\ve{x}_k)$.
\end{proof}
The redundancy of Construction \ref{con:single:insertion} is established in the following theorem.
\begin{thm}
	There exists $0\leq a \leq L$ such that the redundancy of Construction \ref{con:single:insertion} is at most
	$$ r(\Code_5(M,L,a)) \leq \log (L+1). $$
\end{thm}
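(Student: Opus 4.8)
The plan is a straightforward pigeonhole (averaging) argument over the $L+1$ possible values of the parameter $a$. The key observation is that the map
$$ \Phi: \mathcal{X}_M^L \to \{0,1,\dots,L\}, \qquad \Phi(\mathcal{S}) = \sum_{i=1}^{M} s_L(\ve{x}_i) \bmod (L+1), $$
is well-defined on every $\mathcal{S} \in \mathcal{X}_M^L$ and partitions the whole input space $\mathcal{X}_M^L$ into the $L+1$ disjoint classes $\Code_5(M,L,0), \Code_5(M,L,1), \dots, \Code_5(M,L,L)$, where $\Code_5(M,L,a) = \Phi^{-1}(a)$. Since these classes are disjoint and their union is all of $\mathcal{X}_M^L$, their cardinalities sum to $|\mathcal{X}_M^L| = \binom{2^L}{M}$.

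First I would invoke averaging: among $L+1$ nonnegative integers summing to $\binom{2^L}{M}$, at least one is at least the average. Hence there exists some residue $a \in \{0,1,\dots,L\}$ with
$$ |\Code_5(M,L,a)| \geq \frac{1}{L+1}\binom{2^L}{M}. $$
Fixing this choice of $a$ and substituting into Definition \ref{def:redundancy} then gives
$$ r(\Code_5(M,L,a)) = \log\binom{2^L}{M} - \log |\Code_5(M,L,a)| \leq \log\binom{2^L}{M} - \log\frac{\binom{2^L}{M}}{L+1} = \log(L+1), $$
which is exactly the claimed bound.

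There is no serious obstacle here; the entire content is the pigeonhole step, and the only thing to be careful about is that $\Phi$ really does induce a partition (so that the class sizes add up to $\binom{2^L}{M}$ with no double-counting) and that the existence statement in the theorem matches the fact that we only get the bound for the best choice of $a$, not for every $a$. It is worth remarking, for the reader, that this argument is nonconstructive in the sense that it guarantees a good $a$ without identifying it explicitly, but that the correctness (single insertion/deletion correction) already established in the preceding lemma holds for \emph{every} $a$, so the choice of $a$ only affects the rate and not the error-correcting capability.
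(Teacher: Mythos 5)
Your proof is correct and follows exactly the same argument as the paper: the checksum-sum map partitions $\mathcal{X}_M^L$ into $L+1$ disjoint classes, so by pigeonhole some class $\Code_5(M,L,a)$ has cardinality at least $\binom{2^L}{M}/(L+1)$, giving redundancy at most $\log(L+1)$. Your additional remarks on well-definedness of the partition and on the nonconstructive nature of the choice of $a$ are accurate but not needed beyond what the paper states.
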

\begin{proof}
	The codes $r(\Code_5(M,L,a))$ form a partition over $\mathcal{X}_M^L$ for all $0\leq a \leq L$. Since, there are $L+1$ distinct values for $a$, based on the pigeonhole principle there exists $0\leq a \leq L$ such that the cardinality of the code $\Code_5(M,L,a)$ satisfies $|\Code_5(M,L,a)| \geq \binom{2^L}{M}\big/(L+1)$ and thus its redundancy is at most $\log (L+1)$.
\end{proof}
As we show in Theorem \ref{thm:deletion:asymptotic}, the redundancy of any $(0,1,1)_\del$-correcting code is at least $\log(L)+o(1)$, and thus Construction \ref{con:single:insertion} is asymptotically optimal.

Using VT codes, we propose another construction of $(0,M,1)_\mathbb{ID}$-correcting codes. That is, the code can correct a single deletion or insertion in every sequence.
\begin{construction} \label{con:multiple:insertion}
	Let $a \in \mathbb{N}_0$, with $0\leq a \leq L$. Then,
	$$\Code_6(M,L,a) \hspace{-0.25ex}=  \hspace{-0.25ex}\{ \mathcal{S} \hspace{-0.25ex} \in \hspace{-0.25ex} \mathcal{X}_M^L  \hspace{-0.25ex}: \hspace{-0.25ex} s_L(\ve{x}_i)  \hspace{-0.25ex}\equiv \hspace{-0.25ex} a \bmod (L \hspace{-0.15ex} + \hspace{-0.15ex}1),\forall \, 1 \hspace{-0.25ex}\leq  \hspace{-0.25ex}i  \hspace{-0.25ex}\leq  \hspace{-0.25ex}M \}.$$
\end{construction}
\begin{lemma}
	The code $\Code_6(M,L,a)$ is an $(0,M,1)_{\ins\del}$-correcting code.
\end{lemma}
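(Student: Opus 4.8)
The plan is to exploit the fact that every stored sequence shares the same Varshamov--Tenengolts checksum $a$, so that the whole set lies inside a single VT code and each received word can be decoded independently. First I would recall the defining property of VT codes \cite{Lev66}: the collection of all $\ve{x} \in \Sigma_2^L$ with $s_L(\ve{x}) \equiv a \pmod{L+1}$ is a code that corrects a single insertion or deletion. Equivalently, for any two distinct words $\ve{x}, \ve{x}'$ sharing this checksum the single-edit balls are disjoint, $B_1^{\ins\del}(\ve{x}) \cap B_1^{\ins\del}(\ve{x}') = \emptyset$, and there is a decoder $\mathrm{dec}_a(\cdot)$ that returns, for any $\ve{y} \in B_1^{\ins\del}(\ve{x})$, the unique codeword $\ve{x}$ from which it originated (the length of $\ve{y}$, namely $L-1$, $L$, or $L+1$, first reveals whether a deletion, no error, or an insertion occurred, and $\mathrm{dec}_a$ then inverts the corresponding edit).

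Second I would analyze the received set $\mathcal{S}'$. By Definition \ref{def:dna:channel:error:ball} with $s=0$, $t=M$, $\epsilon=1$ and $\ET=\ins\del$, each stored sequence $\ve{x}_i$ is either received unchanged or replaced by an outcome obtained from a single insertion or deletion; in both cases this outcome lies in $B_1^{\ins\del}(\ve{x}_i)$. Since $\ve{x}_i$ has checksum $a$ by construction, applying $\mathrm{dec}_a$ to any such received word returns the correct original $\ve{x}_i$. Thus, provided the received words are all distinct, decoding each element of $\mathcal{S}'$ with $\mathrm{dec}_a$ reproduces the stored sequences.

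The step that requires the most care — and where the unordered, set-based nature of the channel enters — is ruling out collisions among the outcomes, since a collision would correspond to the loss of a stored sequence and could prevent recovery. Suppose the outcomes of two stored sequences $\ve{x}_i$ and $\ve{x}_j$ coincide in some word $\ve{y}$. Then $\ve{y} \in B_1^{\ins\del}(\ve{x}_i) \cap B_1^{\ins\del}(\ve{x}_j)$, and because both $\ve{x}_i$ and $\ve{x}_j$ carry the checksum $a$, the disjointness property of the VT code forces $\ve{x}_i = \ve{x}_j$, contradicting that $\mathcal{S}$ consists of $M$ distinct sequences. Hence all $M$ outcomes are distinct, $|\mathcal{S}'| = M$, and $\mathcal{S} = \{ \mathrm{dec}_a(\ve{y}) : \ve{y} \in \mathcal{S}' \}$ is recovered exactly. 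This establishes that $\Code_6(M,L,a)$ is a $(0,M,1)_{\ins\del}$-correcting code.
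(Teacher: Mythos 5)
Your proof is correct and follows essentially the same route as the paper's: decode each received word in the VT code with checksum $a$ (using its length to identify the edit type), and rule out collisions by noting that distinct stored sequences are distinct codewords of a single-insertion/deletion-correcting code, so their edit balls are disjoint. Your write-up merely spells out in more detail the collision argument that the paper states in one sentence.
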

\begin{proof}
	All erroneous sequences can be detected by checking their length. If a sequence is erroneous, it can be corrected by decoding in the VT code with checksum $a$. Note that two distinct sequences cannot have the same erroneous outcome since they are different and belong to a single-deletion-correcting code.
\end{proof}
By Construction \ref{con:multiple:insertion}, all sequences $\ve{x}_i$ have the same checksum $a$, which allows to correct a single insertion or a single deletion in each sequence. The redundancy of Construction \ref{con:multiple:insertion} is computed in the following lemma.
\begin{thm}
	For fixed $0<\beta<1$, the redundancy of the code $\Code_6(M,L,0)$ satisfies asymptotically
$$	r(\Code_6(M,L,0))  \leq   M \log(L+1) + o(M), $$
	when $M \rightarrow \infty$ with $M=2^{\beta L}$.
\end{thm}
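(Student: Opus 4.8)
The plan is to recognize $\Code_6(M,L,0)$ as the family of all $M$-element subsets of a single Varshamov--Tenengolts class and then to reduce the redundancy to a difference of two binomial coefficients. Write $N \triangleq |\{\ve{x}\in\Sigma_2^L : s_L(\ve{x})\equiv 0 \bmod (L+1)\}|$ for the size of the checksum-$0$ VT code. Since the defining constraint $s_L(\ve{x}_i)\equiv 0$ is imposed independently on each of the $M$ sequences of a codeword, and a codeword is by definition a set of $M$ \emph{distinct} such sequences, every codeword is an $M$-subset of this class and conversely; hence $|\Code_6(M,L,0)| = \binom{N}{M}$. By Definition~\ref{def:redundancy} this gives exactly
\begin{equation*}
r(\Code_6(M,L,0)) = \log\binom{2^L}{M} - \log\binom{N}{M},
\end{equation*}
so the whole problem reduces to bounding this difference from above.

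Next I would record the two facts about $N$ that drive the estimate. The $L+1$ VT classes partition $\Sigma_2^L$, and it is classical that the $a=0$ class is the largest of them, so $N \geq 2^L/(L+1)$; trivially $N\leq 2^L$ as well. Combined with $M = 2^{\beta L}$ for fixed $0<\beta<1$, this yields $M(L+1)/2^L = (L+1)2^{(\beta-1)L}\to 0$, whence $M=o(N)$ and
\begin{equation*}
N - M = \frac{2^L}{L+1}\bigl(1-o(1)\bigr), \qquad \log(N-M) = L - \log(L+1) + o(1).
\end{equation*}

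The key computation writes the difference of binomials as a telescoping product of falling factorials: the two $M!$ denominators cancel, leaving
\begin{equation*}
r(\Code_6(M,L,0)) = \log\frac{(2^L)^{\underline{M}}}{N^{\underline{M}}} = \sum_{i=0}^{M-1}\log\frac{2^L-i}{N-i}.
\end{equation*}
Bounding each factor crudely by $\frac{2^L-i}{N-i}\leq \frac{2^L}{N-M}$ (numerator increased, denominator decreased, valid since $i\leq M-1$) gives
\begin{equation*}
r(\Code_6(M,L,0)) \leq M\log\frac{2^L}{N-M} = M\bigl(L - \log(N-M)\bigr).
\end{equation*}
Substituting the estimate $\log(N-M) = L - \log(L+1) + o(1)$ cancels the leading $ML$ contributions and leaves $r(\Code_6(M,L,0))\leq M\log(L+1) + o(M)$, which is the claim.

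I do not expect a genuine obstacle here: the argument is exact counting followed by a one-line asymptotic estimate. The only two points that require care are (i) invoking the classical fact that the $a=0$ class is the largest of the $L+1$ VT classes, so that the \emph{specific} class underlying Construction~\ref{con:multiple:insertion} satisfies $N\geq 2^L/(L+1)$, rather than merely some class doing so via the pigeonhole argument used for Construction~\ref{con:single:insertion}; and (ii) choosing the crude ratio bound $\frac{2^L-i}{N-i}\leq\frac{2^L}{N-M}$, which keeps the estimate elementary and uniform across the whole range $0<\beta<1$ by avoiding any Stirling expansion of $\binom{N}{M}$ and the attendant second-order $O(M^2/2^L)$ correction terms.
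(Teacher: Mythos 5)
Your proof is correct and takes essentially the same route as the paper's: both identify $\Code_6(M,L,0)$ with the $M$-subsets of the checksum-zero VT class, invoke the classical bound $N \geq 2^L/(L+1)$, and bound $\log\binom{2^L}{M} - \log\binom{N}{M}$ by a ratio-of-falling-factorials estimate, with your telescoping product simply making explicit the inequality the paper states without detail. One cosmetic remark: from the facts you cite, the display $N-M = \frac{2^L}{L+1}\bigl(1-o(1)\bigr)$ is justified only as a lower bound on $N-M$ (the exact asymptotics of $N$ would need the classical VT class-size formula), but since your redundancy bound $M\bigl(L-\log(N-M)\bigr)$ only improves as $N$ grows, that one-sided estimate is all your argument actually uses and nothing breaks.
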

\begin{proof}
	It is known \cite{Lev66} that the number of words satisfying $s_L(\ve{x}) = 0 \bmod (L+1)$ is at least $2^L/(L+1)$. Each codeword of $\Code_6(M,L,a)$ is a subset of a VT code with cardinality $M$. Therefore the redundancy of Construction \ref{con:multiple:insertion} is at most
	\begin{align*}
	r(\Code_6(M,L,0))  &\leq \log \binom{2^L}{M} - \log \binom{\frac{2^L}{L+1}}{M}  \\
	& \leq M \log(L+1) +\frac{M^2\log \e}{2^L/(L+1)-M}.
	\end{align*}
	For $M = 2^{\beta L}$, $0<\beta<1$ the second term is $o(M)$, which concludes the proof.
\end{proof}
Interestingly, as has been shown in Theorem \ref{thm:deletion:scaling}, the redundancy of this construction is asymptotically optimal in terms of scaling with the parameters $M$ and $L$. Note that there is a non-asymptotic expression for the redundancy in the proof. 

The next construction can be used to correct $\epsilon$ substitution errors in each sequence. 
\begin{construction} \label{con:hamming} 
	Let $\Code[L,\epsilon] \subseteq \Sigma_2^L$ denote a binary $\epsilon$-substitution-correcting code of length $L$.
	For all $M\leq |\Code[L,\epsilon]|,L$, and $\epsilon$ we define the code
	$$\Code_7(M,L,\epsilon) = \{ \mathcal{S} \in \mathcal{X}_M^L : \mathcal{S} \subseteq \Code[L,\epsilon]\}.$$
\end{construction}
\begin{lemma}
	The code $\Code_7(M,L,\epsilon)$ is an $(0,M,\epsilon)_\sub$-correcting code.
\end{lemma}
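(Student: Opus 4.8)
The plan is to exhibit an explicit decoder and then verify the condition of Definition \ref{def:correcting:code}, namely that distinct codewords of $\Code_7(M,L,\epsilon)$ have disjoint error balls $B_{0,M,\epsilon}^\sub(\cdot)$. The structural fact that drives everything is that every sequence of every codeword set lies in the $\epsilon$-substitution-correcting code $\Code[L,\epsilon]$, so by definition the substitution balls $B_\epsilon^\sub(\ve{c}_1)$ and $B_\epsilon^\sub(\ve{c}_2)$ are disjoint for any two distinct $\ve{c}_1, \ve{c}_2 \in \Code[L,\epsilon]$.

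First I would set up the decoding map. Since the channel parameters are $(0,M,\epsilon)_\sub$, no sequence is lost ($s=0$) and substitutions preserve length, so every sequence in a received set $\mathcal{S}' \in B_{0,M,\epsilon}^\sub(\mathcal{S})$ has length $L$. For each $\ve{x}' \in \mathcal{S}'$ there is a unique $\ve{c}(\ve{x}') \in \Code[L,\epsilon]$ with $\ve{x}' \in B_\epsilon^\sub(\ve{c}(\ve{x}'))$, since $\Code[L,\epsilon]$ corrects $\epsilon$ substitutions; define $\mathrm{dec}(\mathcal{S}') = \bigcup_{\ve{x}' \in \mathcal{S}'} \{ \ve{c}(\ve{x}') \}$.

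Next I would show that this decoder recovers the stored set exactly. Each $\ve{x}_i \in \mathcal{S} \subseteq \Code[L,\epsilon]$ is distorted into some $\ve{x}_i' \in B_\epsilon^\sub(\ve{x}_i)$, whence $\ve{c}(\ve{x}_i') = \ve{x}_i$. Because the balls $B_\epsilon^\sub(\ve{x}_i)$ are pairwise disjoint, the outcome $\ve{x}_i'$ can equal neither another stored sequence $\ve{x}_j$ nor the outcome $\ve{x}_j'$ for $j \neq i$; hence the $M$ outcomes remain distinct, $|\mathcal{S}'| = M$, and applying $\ve{c}(\cdot)$ elementwise gives $\mathrm{dec}(\mathcal{S}') = \mathcal{S}$.

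Finally I would conclude. If some $\mathcal{S}'$ belonged to both $B_{0,M,\epsilon}^\sub(\mathcal{S}_1)$ and $B_{0,M,\epsilon}^\sub(\mathcal{S}_2)$, the previous step would give $\mathcal{S}_1 = \mathrm{dec}(\mathcal{S}') = \mathcal{S}_2$; thus distinct codewords have disjoint error balls and $\Code_7(M,L,\epsilon)$ is $(0,M,\epsilon)_\sub$-correcting. The one point requiring care, and the only genuine obstacle, is verifying that no two stored sequences collapse to a common received word—so that the set structure is not silently lost despite $s=0$—which is precisely where the pairwise disjointness of the error balls of $\Code[L,\epsilon]$ is invoked.
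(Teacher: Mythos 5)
Your proof is correct and follows the same idea as the paper, which simply states the result is immediate because every sequence is a codeword of an $\epsilon$-substitution-correcting code. You have merely made explicit the details the paper leaves implicit, in particular the useful observation that disjointness of the balls $B_\epsilon^\sub(\cdot)$ around codewords of $\Code[L,\epsilon]$ prevents two stored sequences from collapsing to a common received word, so the received set always has exactly $M$ elements and per-sequence decoding recovers the codeword.
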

The proof is immediate, since every sequence is a codeword of a code that can correct $\epsilon$ substitutions. Using binary alternant codes, it is possible to find a lower bound on the redundancy of Construction \ref{con:hamming}.
\begin{thm}
	There exists a construction for which the code $\Code_7(M,L,\epsilon)$ with fixed $\epsilon \in \mathbb{N}_0$ and $0<\beta<1$ has an asymptotic redundancy of at most
	$$ r(\Code_7(M,L,\epsilon))  \leq   M\epsilon \lceil\log L\rceil +  o(M),$$
	when $M \rightarrow \infty$ with $M = 2^{\beta L}$.
\end{thm}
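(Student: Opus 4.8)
The plan is to choose the inner code $\Code[L,\epsilon]$ as large as possible and then estimate the resulting binomial ratio. Since every codeword of $\Code_7(M,L,\epsilon)$ is an $M$-element subset of $\Code[L,\epsilon]$ and, conversely, every such subset is a valid codeword, we have $|\Code_7(M,L,\epsilon)| = \binom{|\Code[L,\epsilon]|}{M}$. To make this as large as possible I take $\Code[L,\epsilon]$ to be a binary $\epsilon$-substitution-correcting alternant (BCH) code of length $L$; as used in the proof of the constant-weight construction, such codes have redundancy at most $\epsilon\lceil\log L\rceil$, so that $K := |\Code[L,\epsilon]| \geq 2^{L-\epsilon\lceil\log L\rceil}$. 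For $M = 2^{\beta L}$ with $\beta<1$ this also guarantees $M \leq K$ for all large $L$, so the construction is well-defined.

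Next I would write the redundancy as a telescoped sum of logarithms,
\begin{align*}
r(\Code_7(M,L,\epsilon)) = \log\binom{2^L}{M} - \log\binom{K}{M} = \sum_{i=0}^{M-1}\log\frac{2^L-i}{K-i}.
\end{align*}
Because $K \leq 2^L$ every factor is at least one, and I bound it from above uniformly in $i$ by replacing the numerator with $2^L$ and the denominator with $K-M$, which gives $\frac{2^L-i}{K-i} \leq \frac{2^L}{K}\cdot\frac{1}{1-M/K}$. Summing the $M$ terms yields
\begin{align*}
r(\Code_7(M,L,\epsilon)) \leq M\log\frac{2^L}{K} - M\log\Big(1-\frac{M}{K}\Big),
\end{align*}
and the first summand is at most $M\epsilon\lceil\log L\rceil$ by the choice of the alternant code.

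It then remains to show that the second summand is $o(M)$, which is the only quantitative point in the argument. The key observation is that $M = o(K)$: since $K \geq 2^{L-\epsilon\lceil\log L\rceil} = 2^{L}/L^{O(1)}$ while $M = 2^{\beta L}$ with $\beta<1$, the ratio obeys $M/K \leq 2^{(\beta-1)L}L^{O(1)} \to 0$, the exponentially small factor dominating the polynomial one. Consequently $-\log(1-M/K) = O(M/K)$, and therefore $-M\log(1-M/K) = O(M^2/K) = M\cdot O(M/K) = o(M)$. Combining the two estimates gives the claimed bound $r(\Code_7(M,L,\epsilon)) \leq M\epsilon\lceil\log L\rceil + o(M)$. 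I expect the main (and essentially only) obstacle to be verifying that $M/K\to 0$ quickly enough for the correction term to vanish at scale $o(M)$; the rest is routine bookkeeping on binomial coefficients.
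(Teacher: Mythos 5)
Your proposal is correct and follows essentially the same route as the paper: both choose an $\epsilon$-substitution-correcting alternant code of redundancy at most $\epsilon\lceil\log L\rceil$ as the inner code, express the redundancy as $\log\binom{2^L}{M}-\log\binom{K}{M}$, and split it into the main term $M\epsilon\lceil\log L\rceil$ plus a correction of order $O(M^2/K)$, which is $o(M)$ because $M=2^{\beta L}$ with $\beta<1$ makes $M/K$ vanish exponentially. Your explicit telescoping of the binomial ratio is just a spelled-out version of the one-line bound the paper invokes, so the two arguments coincide in substance.
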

\begin{proof}
	For $\Code[L,\epsilon]$ in Construction \ref{con:hamming} we use a binary $\epsilon$-substitution-correcting alternant code of length $L$, which has redundancy at most $\epsilon \lceil \log L \rceil$, cf. \cite[Ch. 5.5]{Rot06} and thus obtain a code $\Code_7(M,L,\epsilon)$ with redundancy at most
	\begin{align*}
	r(\Code_7(M,L,\epsilon))  &\leq \log \binom{2^L}{M} - \log \binom{2^{L-\epsilon \lceil \log L\rceil}}{M} \\
	& \leq M \epsilon \lceil \log(L+1) \rceil +\frac{M^2\log \e}{2^{L-\epsilon \lceil \log L\rceil}-M}.
	\end{align*}
	For $M = 2^{\beta L}$, $0<\beta<1$ the second term is $o(M)$, which concludes the proof.
\end{proof}

Note that Theorem \ref{thm:substitution:scaling} implies that for fixed $\epsilon$ this construction is close to optimality.

\section{Conclusion}\label{sec:concl}

In this paper, we set the foundations for codes over sets for DNA storage applications. After presenting the channel model and a new family of error-correcting codes over sets, we derived several bounds and constructions. Our bounds consist of extensions of the Gilbert-Varshamov and sphere packing bounds for the studied codes in the paper. We also proposed several constructions which can be either with or without indices or a reduced version of the indices. Lastly, we derived several more special constructions for a specific set of parameters. It has been illustrated that many of the proposed constructions are close to optimal, such as for the case of substitution, respectively single insertion or deletion errors inside all of the strands. We further have proposed several constructions that can cope with combinations of a loss of sequences and errors inside the sequences. By analyzing the sphere packing bounds and comparing them to our constructions, we have found important insights about the nature of the DNA storage channel. These include the surprising fact that correcting insertions or deletions requires less redundancy than correcting substitution errors inside the sequences.

\appendices

\section{Auxiliary Lemmas} \label{app:auxiliary:lemmas}
\begin{lemma} \label{lemma:ln:1:x}
	Let $f(n), g(n): \mathbb{N} \mapsto \mathbb{R}$ be two arbitrary functions with $f(n) = o(1)$ for $n \rightarrow \infty$. Then,
	$$ g(n)\ln\left(1+f(n)\right) = g(n)f(n) + O\left(g(n)f^2(n)\right). $$
\end{lemma}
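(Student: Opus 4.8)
The plan is to reduce the statement to the elementary estimate $\ln(1+x) = x + O(x^2)$ valid uniformly for $x$ in a fixed neighborhood of the origin, and then to transport that estimate through multiplication by $g(n)$. The hypothesis $f(n) = o(1)$ is exactly what guarantees that $f(n)$ eventually lands in the neighborhood where the scalar estimate applies, so the whole argument is a matter of quantifying the second-order Taylor remainder of the logarithm and checking that the asymptotic bookkeeping survives the factor $g(n)$.

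First I would establish the scalar claim: there is a constant $C>0$ such that $|\ln(1+x) - x| \leq C x^2$ for all $|x| \leq \tfrac12$. The cleanest route is Taylor's theorem with Lagrange remainder applied to $\ln(1+x)$ about $x=0$: since $\tfrac{d}{dx}\ln(1+x) = (1+x)^{-1}$ and $\tfrac{d^2}{dx^2}\ln(1+x) = -(1+x)^{-2}$, there is a $\xi$ strictly between $0$ and $x$ with
$$
\ln(1+x) = x - \frac{x^2}{2(1+\xi)^2}.
$$
For $|x|\leq \tfrac12$ one has $|\xi|\leq\tfrac12$, hence $1+\xi \geq \tfrac12$ and $(1+\xi)^{-2}\leq 4$, giving the bound $|\ln(1+x)-x| \leq 2x^2$. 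Thus $C=2$ works and I may write $\ln(1+x) = x + R(x)$ with $|R(x)|\leq 2x^2$ on $[-\tfrac12,\tfrac12]$.

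Next I would invoke $f(n) = o(1)$: by definition there is an $N$ with $|f(n)| \leq \tfrac12$ for all $n\geq N$. For such $n$ the scalar estimate yields $\ln(1+f(n)) = f(n) + R(f(n))$ with $|R(f(n))| \leq 2 f(n)^2$. Multiplying by $g(n)$ gives
$$
g(n)\ln(1+f(n)) = g(n)f(n) + g(n)R(f(n)),
$$
and since $f^2(n)\geq 0$ we have $|g(n)R(f(n))| \leq 2|g(n)|\,f^2(n) = 2\,|g(n)f^2(n)|$, so the remainder term is $O(g(n)f^2(n))$ in the sense of the Bachmann–Landau notation fixed earlier in the paper. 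This is the full statement.

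I do not anticipate a genuine obstacle here; the only point requiring a little care is the last step, namely confirming that a bound of the form $|R(f(n))|\leq 2f^2(n)$ on the scalar remainder really does translate into $g(n)R(f(n)) = O(g(n)f^2(n))$ after multiplication. The subtlety is purely notational: one must use $|g(n)f^2(n)| = |g(n)|\,f^2(n)$ (legitimate because $f^2(n)$ is nonnegative) so that the factor $|g(n)|$ can be pulled through the absolute value without sign issues, and one must restrict attention to $n\geq N$ so that the scalar estimate is in force. Everything else is routine.
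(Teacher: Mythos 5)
Your proposal is correct and follows essentially the same route as the paper: both reduce the claim to a uniform second-order estimate on $\ln(1+x)$ for small $x$ (the paper uses the elementary bound $\tfrac{x}{x+1} \le \ln(1+x) \le x$ for $x>-1$, you use Taylor's theorem with Lagrange remainder) and then use $f(n)=o(1)$ to enter the regime where that estimate holds before multiplying through by $g(n)$. If anything, your version is slightly more careful on one point: by phrasing the final step with absolute values you avoid the tacit assumption $g(n)\ge 0$ under which the paper's sandwich $g(n)\tfrac{f(n)}{f(n)+1} \le g(n)\ln\left(1+f(n)\right) \le g(n)f(n)$ is written, although the paper's concluding limit statement is insensitive to this sign issue.
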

\begin{proof}
	We use the standard bound on the natural logarithm
	$$ \frac{x}{x+1} \leq  \ln (1+x) \leq x, $$
	for all $x > -1$. Since $f(n) = o(1)$, there exists $n_0 \in \mathbb{N}$, such that $|f(n)| < 1$ for all $n\geq n_0$ and therefore
	$$ g(n)\frac{f(n)}{f(n)+1} \leq g(n)\ln\left(1+f(n)\right) \leq g(n)f(n), $$
	for all $n\geq n_0$. This allows to find an upper bound to the following limit of the first order approximation
	$$ \lim\limits_{n\rightarrow \infty} \left|\frac{g(n)\ln\left(1+f(n)\right) - g(n)f(n)}{g(n) f^2(n)} \right| \leq 1, $$
	by plugging in the lower and upper bound on $g(n)\ln(1+f(n))$, which proves the statement.
\end{proof}
\begin{lemma} \label{lemma:approx:binom}
	Let $f(n), g(n): \mathbb{N} \mapsto \mathbb{N}$ be two arbitrary functions with $g(n) = o(f(n))$ and $g(n)= \omega(1)$, when $n\rightarrow\infty$. The binomial coefficient satisfies
	$$ \log \binom{f(n)}{g(n)} = g(n) \log \frac{\e f(n)}{g(n)} + o(g(n)), $$
	when $n \rightarrow \infty$.
\end{lemma}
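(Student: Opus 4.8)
The plan is to estimate $\log\binom{f(n)}{g(n)}$ by writing the binomial coefficient as a ratio of factorials and controlling each factor asymptotically using the hypotheses $g(n)=o(f(n))$ and $g(n)=\omega(1)$. Writing $f=f(n)$ and $g=g(n)$ for brevity, I would start from
\begin{equation*}
\binom{f}{g} = \frac{\ff{f}{g}}{g!} = \frac{1}{g!}\prod_{i=0}^{g-1}(f-i),
\end{equation*}
so that
\begin{equation*}
\log\binom{f}{g} = \sum_{i=0}^{g-1}\log(f-i) - \log g!.
\end{equation*}
The strategy is to show that the first sum is $g\log f + o(g)$ and that $\log g! = g\log g - g\log\e + o(g)$, after which collecting terms yields $g\log(\e f/g)+o(g)$.

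For the product term, I would factor $\log(f-i) = \log f + \log(1 - i/f)$. Since $i \le g-1$ and $g=o(f)$, the ratio $i/f$ is $o(1)$ uniformly over the range of $i$, so I can invoke Lemma \ref{lemma:ln:1:x} (with the roles $f(n)\mapsto -i/f$, after converting between $\log$ and $\ln$ by the constant factor $1/\ln 2$) to write $\log(1-i/f) = -\frac{i}{f\ln 2} + O(i^2/f^2)$. Summing over $i$ from $0$ to $g-1$ gives a main correction of order $\sum_i i/f = O(g^2/f)$ and an error of order $\sum_i i^2/f^2 = O(g^3/f^2)$; both are $o(g)$ precisely because $g=o(f)$. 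Hence $\sum_{i=0}^{g-1}\log(f-i) = g\log f + o(g)$.

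For the factorial term, I would apply Stirling's approximation in the form $\log g! = g\log g - g\log\e + \frac12\log(2\pi g) + o(1)$, valid because $g=\omega(1)\to\infty$. The $\frac12\log(2\pi g)$ and $o(1)$ contributions are both $o(g)$, so $\log g! = g\log g - g\log\e + o(g)$. Combining the two estimates gives
\begin{equation*}
\log\binom{f}{g} = g\log f - g\log g + g\log\e + o(g) = g\log\frac{\e f}{g} + o(g),
\end{equation*}
as claimed.

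The main thing to be careful about is the uniformity of the bounds across the summation range, not any single hard step: I must ensure the error terms accumulated over all $g$ values of $i$ remain $o(g)$, which is exactly where both hypotheses $g=o(f)$ and $g=\omega(1)$ are used (the former to kill the $O(g^2/f)$ and $O(g^3/f^2)$ sums, the latter to license Stirling). A minor bookkeeping point is the conversion factor between natural and binary logarithms when applying Lemma \ref{lemma:ln:1:x}, but this only rescales the already-negligible error terms and does not affect the leading behavior.
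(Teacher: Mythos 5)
Your proof is correct, and it takes a genuinely different route from the paper's. The paper expands $\log\binom{f}{g}=\log f!-\log(f-g)!-\log g!$ and applies the Robbins refinement of Stirling's formula to all three factorials; the crux there is the resulting term $-\left(f-g+\tfrac12\right)\log\left(1-\tfrac{g}{f}\right)$, from which the $g\log\e$ contribution is extracted via Lemma~\ref{lemma:ln:1:x}, together with bookkeeping of the leftover $-\tfrac12\log g$ and $-\log\sqrt{2\pi}$ terms. You instead write $\binom{f}{g}=\frac{1}{g!}\prod_{i=0}^{g-1}(f-i)$, so Stirling is needed only for $g!$ (licensed by $g=\omega(1)$), and the large-factorial ratio $\ff{f}{g}$ is handled term by term via $\log(f-i)=\log f+\log(1-i/f)$; the hypotheses enter transparently, with $g=o(f)$ killing the accumulated $O(g^2/f)$ and $O(g^3/f^2)$ corrections. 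This is more elementary: it avoids applying Stirling to the huge factorials $f!$ and $(f-g)!$ and the attendant cancellation of $f\log f$-sized quantities. The single point of care, which you correctly flag, is uniformity over the summation index: Lemma~\ref{lemma:ln:1:x} is stated for a single function of $n$, so strictly you should appeal to the uniform elementary bound $\frac{x}{1+x}\le\ln(1+x)\le x$ underlying its proof (valid with the same constants for all $|x|\le g/f\to 0$) rather than invoking the lemma once per $i$; with that reading, your error accounting is exactly right. What the paper's heavier route buys is that it reuses the same Robbins-plus-Lemma-\ref{lemma:ln:1:x} machinery employed in Appendix~\ref{app:proof:thm:redundancy:reduce:index} for Theorem~\ref{thm:redundancy:reduce:index}, where the explicit lower-order constants (e.g., the $\frac{\log 2\pi}{2}M^c$ term) actually matter; for the $o(g)$ precision claimed in this lemma, your argument fully suffices.
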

\begin{proof}
	Note that $g(n) = o(f(n))$ and $g(n)= \omega(1)$ automatically implies $f(n) = \omega(1)$. The binomial coefficient satisfies
	\begin{align*}
	\log \binom{f(n)}{g(n)} =& \log \frac{f(n)!}{(f(n)-g(n))!g(n)!}  \\
	=& \, g(n) \log \frac{f(n)}{g(n)} - \frac12 \log g(n) \\
	&- \left(f(n)-g(n)+\frac12\right) \log \left(1-\frac{g(n)}{f(n)}\right) + \gamma,
	\end{align*}
	where $\gamma = -\log \sqrt{2\pi} +  O(\frac{1}{g(n)})$.	Here we used a refinement \cite{Rob55} of Stirling's approximation, which states that
	$$ \sqrt{2\pi n}\left(\frac{n}{\e}\right)^n \e^{\frac{1}{12n+1}} \leq n! \leq \sqrt{2\pi n}\left(\frac{n}{\e}\right)^n \e^{\frac{1}{12n}}, $$
	for any $n \in \mathbb{N}$. Using Lemma \ref{lemma:ln:1:x}, we obtain
	\begin{align*}
		-& \left(f(n)-g(n)+\frac12\right) \log \left(1-\frac{g(n)}{f(n)}\right) \\
		 &= \log \e \left(g(n) -\frac{g^2(n)}{f(n)} +\frac{g(n)}{2f(n)} \right) + O\left( \frac{g^2(n)}{f(n)} \right) \\
		 &= g(n) \log \e + o(g(n)),
	\end{align*}
	where we used that $\frac{g(n)}{f(n)} = o(1)$. Plugging this result into the expression of the binomial coefficient and using further $\log g(n) = o(g(n))$ and $\gamma = o(g(n))$ proves the lemma.
\end{proof}
\begin{lemma} \label{lemma:asymtptotic:good:sets}
	For any fixed integer $\delta \in \mathbb{N}_0$ and any integer functions $y(M)\leq M$ and $z(L)$ with $z(L) \leq 2^L/y(M)$ for large enough $M$, the following asymptotic property holds
	\begin{align*}
		\log \frac{ \binom{2^L}{M-y(M)} \binom{2^L/z(L)}{y(M)} } {\binom{2^L}{M-\delta}} \leq & -y(M) \log \frac{z(L) y(M)}{\e M } \\
		&+ O\left( \frac{My(M)}{2^L} \right) + O(L),
	\end{align*}
	when $M \rightarrow \infty$ and $M = 2^{\beta L} $ with $0 < \beta <1$.
\end{lemma}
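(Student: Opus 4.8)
The plan is to peel the ratio apart into a quotient of two ``nearly equal'' binomials times one small binomial, and to bound each piece by elementary falling-factorial and $\binom{n}{k}\le(\e n/k)^k$ estimates. Throughout I write $N=2^L$ and abbreviate $y=y(M)$, $z=z(L)$; since the statement is asymptotic and in every application $y(M)=\omega(1)$, I may assume $y>\delta$ for all large $M$. The starting point is the factorization
\begin{equation*}
\frac{\binom{N}{M-y}\binom{N/z}{y}}{\binom{N}{M-\delta}} = \frac{\binom{N}{M-y}}{\binom{N}{M-\delta}}\cdot \binom{N/z}{y},
\end{equation*}
where the factor $\binom{N/z}{y}$ is well defined because the hypothesis $z\le 2^L/y$ guarantees $N/z\ge y$.

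First I would handle the quotient of the two large binomials. Cancelling the common $N!$ and regrouping factorials gives
\begin{equation*}
\frac{\binom{N}{M-y}}{\binom{N}{M-\delta}} = \frac{\ff{(M-\delta)}{y-\delta}}{\ff{(N-M+y)}{y-\delta}} \le \left(\frac{M}{N-M}\right)^{y-\delta},
\end{equation*}
since the numerator is a product of $y-\delta$ terms each at most $M$ and the denominator a product of $y-\delta$ terms each at least $N-M$. Taking logarithms and expanding $\log(N-M)=L+\log(1-M/N)=L+O(M/2^L)$ (valid because $M/N=2^{(\beta-1)L}\to 0$) yields the estimate $(y-\delta)(\log M-L)+O\!\left(yM/2^L\right)$ for this factor.

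Next I would bound the small binomial by $\binom{N/z}{y}\le(\e N/(zy))^y$, so that $\log\binom{N/z}{y}\le y(\log\e+L-\log z-\log y)$. Adding the two logarithmic estimates, the terms $\pm yL$ cancel and the remaining $y$-linear part collapses to $y(\log M+\log\e-\log z-\log y)=-y\log\frac{zy}{\e M}$, which is exactly the leading term of the claim. The only leftover is the $\delta$-dependent contribution $\delta(L-\log M)=\delta(1-\beta)L$, which is $O(L)$ because $\delta$ is fixed; together with the $O(yM/2^L)$ term already produced this reassembles the stated bound.

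The computations are entirely routine; the one place to be careful is the expansion of $\log(N-M)$ and the verification that the multiplicative slack of order $M/N$ in each of the $y-\delta$ factors of the quotient accumulates to exactly the $O(yM/2^L)$ error term rather than to something larger. I would also make explicit the (harmless) assumption $y>\delta$, noting that $y(M)\to\infty$ in all regimes where the lemma is invoked, so that the falling-factorial manipulation and the resulting cancellations are legitimate for large $M$.
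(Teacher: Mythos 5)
Your proposal is correct and follows essentially the same approach as the paper's proof: both rewrite the ratio via falling-factorial identities, bound a ratio of falling factorials by a power of the ratio of its extreme terms, apply $\binom{n}{k}\leq(\e n/k)^k$ to the one remaining binomial, and expand $\log(1-M/2^L)$ to first order to collect the $O\left(My(M)/2^L\right)$ error. The only difference is bookkeeping---the paper pairs $\ff{(2^L/z(L))}{y(M)}$ against $\ff{(2^L-M+y(M))}{y(M)}$ and applies the $\e$-bound to $\binom{M}{y(M)}$, which avoids your (harmless, explicitly flagged) side assumption $y(M)>\delta$---so the two arguments coincide up to regrouping.
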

\begin{proof}
	The lemma can be shown directly by calculating the expression for the binomial coefficient
	\begin{align*}
		& \log \frac{ \binom{2^L}{M-y(M)} \binom{2^L/z(L)}{y(M)} } {\binom{2^L}{M-\delta}} \\
		=& \log \frac{\ff{(2^L/z(L))}{y(M)}\ff{(2^L-M+\delta)}{\delta}}{\ff{(2^L-M+y(M))}{y(M)}\ff{M}{\delta}} + \log \binom{M}{y(M)} \\
		\leq & y(M) \log\frac{2^L/z(L)}{2^L-M} + \log \binom{M}{y(M)} + O(L) \\
		\overset{(a)}{\leq}& y(M) \log  \frac{\e M}{z(L)y(M)} +  O\left(\frac{My(M)}{2^L}\right) + O(L),  \\
	\end{align*}
	where, $\ff{n}{m} = n\cdot (n-1)\dots (n-m+1)$ for $n,m\in \mathbb{N}_0$ denotes the falling factorial. In inequality $(a)$, we used Lemma \ref{lemma:ln:1:x} for the approximation of the logarithm and $\binom{n}{k}\leq \left(\frac{\e n}{k}\right)^k$ as an upper bound for the binomial coefficient.
\end{proof}
\section{Proof of Theorem \ref{thm:redundancy:reduce:index}} \label{app:proof:thm:redundancy:reduce:index}
The cardinality of Construction \ref{con:reduce:index} can be computed as follows. Each group $\ve{U}_i$ consists of $M^{1-c}$ unordered, distinct sequences, which share the same index $\ve{I}_c(i)$. In total, there are $M^c-\delta$ information groups, since $\delta$ groups are redundancy symbols of the MDS codeword. Therefore, the redundancy is
$$ r(\mathcal{C}_3 (M,L,c,\delta)) = \log \binom{2^L}{M} -  \log \binom{2^LM^{-c}}{M^{1-c}}^{M^c-\delta}.$$
Applying Stirling's approximation \cite{Rob55} onto the binomial coefficients yields
\begin{align*}
	r&(\mathcal{C}_3 (M,L,c,\delta)) = \log \binom{2^L}{M} - (M^c-\delta) \log \binom{2^LM^{-c}}{M^{1-c}} \\
	=& \frac{1-c}{2}M^c \log M + \frac{M^c-1}{2} \log \left( 1-\frac{M}{2^L} \right) -\gamma_2M^c + \gamma_1 \\
	&+ \delta \bigg( M^{1-c}L -M^{1-c}\log M - \frac{1-c}{2} \log M  \\
	& \quad\quad\,\, - \left(2^LM^{-c}-M^{1-c}+\frac12\right) \log\left( 1-\frac{M}{2^L}\right) + \gamma_2\bigg),
\end{align*}
where $\gamma_1 = -\log \sqrt{2\pi} + o(1) $ and $\gamma_2 = -\log \sqrt{2\pi} + o(1)$, when $c<1$. Note that it can be verified that for $c=1$, $\gamma_2$ has a different asymptotic behavior, i.e., $\gamma_2 = -\log \e + o(1)$. Therefore, for $c=1$, the expression for $r(\mathcal{C}_3 (M,L,c,\delta))$ yields the same redundancy as in Theorem \ref{thm:red:index:rs}. Employing Lemma \ref{lemma:ln:1:x} onto the two logarithmic terms yields
\begin{align*}
r&(\mathcal{C}_3 (M,L,c,\delta)) = \frac{1-c}{2}M^c\log M +\frac{\log 2 \pi}{2} M^c \\
&+ \delta M^{1-c} \left(L-\log M + \log \e \right) + o(M^c+M^{1-c}).
\end{align*}
\qed

\ifCLASSOPTIONcaptionsoff
  \newpage
\fi

\balance

\bibliography{IEEEabrv,ref.bib}
\bibliographystyle{IEEEtran}

\end{document}